\documentclass[preprint,12pt]{elsarticle}

\usepackage[utf8x]{inputenc}
\usepackage{amsmath,amsfonts,amsthm,amssymb}
\usepackage[english]{babel}
\usepackage{graphicx}
\usepackage{subcaption}
\usepackage{array}
\usepackage{tikz}
\usepackage{comment}
\usepackage[capitalise]{cleveref}
\usetikzlibrary{shapes,snakes}
\tikzset{snake it/.style={decorate, decoration=snake}}

\usepackage[margin=2.3cm]{geometry}

\journal{...}

\DeclareMathOperator{\ecc}{ecc}
\DeclareMathOperator{\rad}{rad}
\DeclareMathOperator{\diam}{diam}

\newcommand\bk[1]{\mathcal{B}_{#1}}

\begin{document}


\begin{frontmatter}

\newtheorem{theorem}{Theorem}[section]
\newtheorem{lemma}[theorem]{Lemma}
\newtheorem{proposition}[theorem]{Proposition}
\newtheorem{observation}[theorem]{Observation}
\newtheorem{corollary}[theorem]{Corollary}
\newtheorem{question}[theorem]{Question}
\newtheorem{conjecture}[theorem]{Conjecture}
\newtheorem{problem}[theorem]{Problem}

\crefformat{observation}{Observation~#1}
\Crefformat{observation}{Observation~#1}
\crefformat{question}{Question~#1}
\Crefformat{question}{Question~#1}

\title{Orienting edges to fight fire in graphs}

\author[lip]{Julien Bensmail\fnref{t1}}
\ead{julien.bensmail.phd@gmail.com}
\author[lip]{Nick Brettell\fnref{t2}} 
\ead{nbrettell@gmail.com}

\address[lip]{LIP, UMR 5668 ENS Lyon, CNRS, UCBL, INRIA, Universit\'e de Lyon, France}
	
\fntext[t1]{This author was supported by ANR project STINT under reference ANR-13-BS02-0007, operated by the French National Research Agency (ANR).}
\fntext[t2]{This author was supported by ANR project Heredia under reference ANR-10-JCJC-0204, operated by the French National Research Agency (ANR).}


\begin{abstract}
%
 We investigate a new oriented variant of the Firefighter Problem. 
 In the traditional Firefighter Problem, a fire breaks out at a given vertex of a graph, and at each time interval spreads to neighbouring vertices that have not been protected, while a constant number of vertices are protected at each time interval.
 In the version of the problem considered here, the firefighters are able to orient the edges of the graph before the fire breaks out, but the fire could start at any vertex.
 We consider this problem when played on a graph in one of several graph classes, and give upper and lower bounds on the number of vertices that can be saved.
 In particular, when one firefighter is available at each time interval, and the given graph is a complete graph, or a complete bipartite graph, we present firefighting strategies that are provably optimal.
 We also provide lower bounds on the number of vertices that can be saved as a function of the chromatic number, of the maximum degree, and of the treewidth of a graph.
 For a subcubic graph, we show that the firefighters can save all but two vertices, and this is best possible.
\end{abstract}

\end{frontmatter}

\section{Introduction} \label{intro_section}

The Firefighter Problem was introduced by Hartnell~\cite{H95} in 1995, and can be described as follows. Suppose we are given a graph $G$, and a vertex $v$ of $G$ at which a fire breaks out.  
At each time unit, the fire propagates from each burning vertex to all of its unprotected neighbours. At the end of each time unit, a firefighter is allowed to permanently protect one 
vertex that is not already burning.
Typically, the firefighters' goal is to prevent as many vertices as possible from burning. Following \cite{FM09}, 
${\rm MVS}(G,\{v\};1)$ denotes the maximum number of vertices of $G$ that can be saved, over all strategies.  More generally, when $f \geq 1$ firefighters can protect the graph at each step, and the fire starts at the vertices in $S \subseteq V(G)$, then the maximum number of vertices of $G$ that can be saved is denoted ${\rm MVS}(G,S;f)$.

This problem has gained increasing attention since its introduction; see~\cite{FM09} for a comprehensive survey. Some investigations into directed versions have also been conducted recently~\cite{BHW12, Kong2014}, where 
the fire propagates from a burnt vertex only through its outgoing incident arcs.
As noted in \cite{FM09}, this directed version is, in a sense, at least as difficult as the undirected version, since there exists
an
orientation for any undirected graph in which the fire propagates as in the undirected version.
Moreover, keeping the fire contained to a small set of burnt vertices is generally difficult, even in the undirected version of the problem. As evidence of this statement, we refer the reader to \cite{KM10}, where the decision problem \textsc{Firefighter} (given a graph $G$ and initial burning vertex $v \in V(G)$, is ${\rm MVS}(G,\{v\};1) \geq k$?) is shown to be \textsf{NP}-complete for cubic graphs; and to~\cite{FKMR07}, where the problem is shown to be \textsf{NP}-complete for trees of maximum degree~3.

\medskip

In this paper, we investigate a new 
variant of this problem,
based on the following question: how can we orient the edges of $G$ in order to minimise the number of burnt vertices? 
Such a problem might be viewed, for example, as a model of the spread of information, or a virus, where there is some mechanism that enforces that the flow is only in one direction. 
Alternatively, imagine a system of rivers, where dams and floodgates can be installed to ensure that, in the event of a flood, the flow is in a certain direction; and structures can be built that block the flow completely.

Note that 
the orientation of $G$ is fixed before the first vertex burns, and cannot be modified later. One motivation for such a restriction is that the operation of orienting $G$ could correspond to a complicated real-life task that is too costly to perform on-demand. 
Moreover, if the orientation can be modified at will, the problem becomes very easy; simply ensure that an edge that is incident to one burnt and one unburnt vertex is oriented towards the burnt vertex.
For the same reason, this version of the problem is only interesting when a fire can break out at any vertex (not known beforehand).

Let $G$ be an undirected graph and let $f$ be an integer at least one.
We can view this problem
as a two-player game played on $G$: player~1 is the fire, and their goal is to maximise the number of vertices that burn, while player~2 is the fire brigade, and their goal is to minimise this number.  The game proceeds as follows: player~2 picks an orientation for $G$, then player~1 picks a vertex at which the fire breaks out, then, at each time interval, player~2 picks $f$ vertices to protect, until the fire no longer propagates.
We denote by $\overrightarrow{\beta}(G, f)$ the number of vertices that burn when both players employ an optimal strategy.

Alternatively,
let $\overrightarrow{\beta}(G, v; f)$ denote the minimum number of vertices that burn for a graph $G$ with $f$ firefighters when the fire starts at the vertex $v$ of $G$, taken over all firefighting strategies and all orientations for $G$; then $\overrightarrow{\beta}(G, f)$ is the maximum of $\overrightarrow{\beta}(G, v; f)$ over all vertices $v$ of $G$.
For the sake of simplicity, we will sometimes adopt the following slight abuse of notation: $\overrightarrow{\beta}(G, f \leq k)$ denotes $\overrightarrow{\beta}(G, f)$ for any $f \leq k$.
For an oriented graph $\overrightarrow{G}$, 
we let $\beta(\overrightarrow{G}, f)$ denote the maximum number of vertices that burn using an optimal firefighting strategy using $f$ firefighters, taken over every possible vertex for the fire to break out.
Thus $\overrightarrow{\beta}(G, f)$ is the minimum of $\beta(\overrightarrow{G}, f)$ taken over all orientations $\overrightarrow{G}$ of $G$.
We analogously define $\beta(G, f)$, where $G$ is an undirected graph, as the maximum number of vertices that will burn using an optimal firefighting strategy 
when $G$ is viewed as a directed graph with arcs $\overrightarrow{uv}$ and $\overrightarrow{vu}$ for each edge $uv$ of $G$.
Thus $\beta(G, f) = |V(G)| - \min_{v \in V(G)} {\rm MVS}(G,\{v\};f)$.



\medskip

Since an undirected graph $G$ can be viewed as a directed graph $\overrightarrow{G}$ where, for each edge $uv$ of $G$, there are arcs $\overrightarrow{uv}$ and $\overrightarrow{vu}$ in $\overrightarrow{G}$,
orienting the edges of an undirected graph effectively decreases the outdegree of some (or all) of the vertices of $G$.
Thus,
orienting the edges of a graph is a very strong tool to prevent the fire from propagating too widely. As further evidence
of this claim, observe that if $u$ is a vertex in $G$ with maximum degree $\Delta$, 
then $u$ is a threat to fire containment. But this threat can be easily managed in an orientation $\overrightarrow{G}$ of $G$ by orienting all edges incident to $u$ towards $u$: in such a situation, a fire that breaks out at $u$ will not propagate any further. 

It is not too surprising that the oriented version of the problem swings the balance in favour of the firefighters, but it is perhaps surprising the extent to which it does so.
Suppose one firefighter is available at each time interval.
We show that for a connected graph $G$, at most one vertex burns using an optimal strategy if and only if $G$ contains at most one cycle.
We describe a strategy by which, for any subcubic graph $G$, at most two vertices burn.
We can also guarantee at most two vertices burn using an optimal strategy on a partial $2$-tree $G$.
For graphs with maximum degree~$4$, at most five vertices burn; but this bound may not be sharp.
Consider the decision problem \textsc{OrientedFirefighter}, where the input is a graph $G$, and the question is: ``is $\overrightarrow{\beta}(G,1) \geq k$?''
As a straightforward consequence of our results, this problem is trivial (running in constant time) when restricted to trees, subcubic graphs, or partial $2$-trees. This is in constrast to the problem \textsc{Firefighter}, which is \textsf{NP}-complete when the input is restricted to these graph classes. 

One other interesting aspect of this problem is that the properties of a `good' orientation $\overrightarrow{G}$ (from the firefighters' point of view) 
are different from the usual properties which are considered `good' in an orientation. For example, having an orientation with large diameter and long longest paths is usually desirable; refer, for example, to the investigations in~\cite{CT78}. In the given context, however, we try to find an orientation that avoids such properties.

Much of our focus, in what follows, is proving an upper bound on $\overrightarrow{\beta}(G,f)$ for any $G$ in some class of graphs.
To find such an upper bound $x$, we need only prove the existence of a `good' orientation and strategy by which we can guarantee no more than $x$ vertices burn.
On the other hand, it seems, in general, more difficult to prove lower bounds, where all possible orientations and strategies must be considered.  
However, a trivial lower bound is given by considering the minimum outdegree over all possible orientations of a graph.
Furthermore, it seems easier to obtain tight lower bounds for dense graphs.
For the class of complete graphs, or the class of complete bipartite graphs, we prove sharp lower bounds when one firefighter is available at each time interval.  Thus, the strategies described that meet these bounds are optimal.

\medskip

In what follows, we assume that $G$ is finite and simple, unless otherwise stated.
We also assume that $G$ is connected; if not, we can consider each connected component of $G$ in turn.
We study the parameter $\overrightarrow{\beta}(G,f)$ throughout, 
assuming that the fire starts at a single vertex, $f \geq 1$, and the firefighters' goal is always to save the maximum number of vertices.

After having introduced some useful tools and basic observations in \cref{section:remarks}, we consider several approaches to finding bounds for $\overrightarrow{\beta}$ in \cref{section:basic,section:invariants,section:families,section:beta1}. We start by considering complete graphs and bipartite graphs, in \cref{section:basic}.  We then demonstrate, in \cref{section:invariants}, some relationships between $\overrightarrow{\beta}$ and several graph invariants: namely, chromatic number, arboricity, and the size of a feedback vertex set. We focus on graphs that have bounded treewidth, bounded degree, or are planar, in \cref{section:families}. In Section~\ref{section:beta1}, we give a characterisation of the class of graphs for which $\overrightarrow{\beta}(G,1)=1$, and discuss a characterisation of the class of graphs $G$ with $\overrightarrow{\beta}(G,1)=k$, where $k \geq 2$. 

\medskip

\begin{figure}[!t]
	\centering
	
	\begin{tikzpicture}[inner sep=0.7mm]
		\node[draw, circle, line width=1pt, fill=black](a) at (0,0.5)[label=above:{\scriptsize $1$}]{};	
	
		\node[draw, diamond, line width=1pt](b) at (2,-1)[label=above:{\scriptsize $1'$}]{};
        \node[draw, circle, line width=1pt](c) at (4,-1) {};
		\node[draw, circle, line width=1pt](d) at (2,-2) {};
		\node[draw, circle, line width=1pt](e) at (4,-2) {};
		
		\node[draw, circle, line width=1pt, fill=black](f) at (-2,-1)[label=above:{\scriptsize $2$}]{};	
		\node[draw, diamond, line width=1pt](g) at (-2,-2)[label=below:{\scriptsize $2'$}]{};
		\node[draw, circle, line width=1pt, fill=black](h) at (-4,-1)[label=above:{\scriptsize $3$}]{};	
				
		\draw[-latex,line width=1pt] (a) -- (b);
		
		\draw[-latex,line width=1pt] (b) -- (c);
		\draw[-latex,line width=1pt] (c) -- (e);
		\draw[-latex,line width=1pt] (d) -- (e);
		\draw[-latex,line width=1pt] (c) -- (d);
		\draw[-latex,line width=1pt] (d) -- (b);
		
		\draw[-latex,line width=1pt] (a) -- (f);
		\draw[-latex,line width=1pt] (f) -- (h);
		\draw[-latex,line width=1pt] (f) -- (g);
		\draw[-latex,line width=1pt] (g) -- (h);
	\end{tikzpicture}
	\caption{An example of firefighting in an oriented graph.}
	\label{figure:drawing-exmp}
\end{figure}
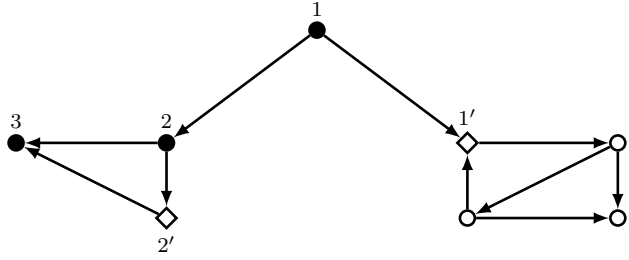

\noindent \textbf{Drawing conventions:} In all figures, a burnt vertex is filled with black, and a label $x$ indicates that this vertex burnt at time~$x$. A diamond vertex represents a protected vertex, with a label $y'$ indicating that this vertex has been protected at time~$y$ (that is, it was protected immediately after the vertices labelled $y$ started burning). Time units are numbered starting from~$1$. See \cref{figure:drawing-exmp} for an example.

\section{Preliminaries} \label{section:remarks}

In this section, we introduce several foundational results that will be used in subsequent sections. We start with the following observation, which will be used to deduce lower bounds on $\overrightarrow{\beta}$.

\begin{observation} \label{observation:subgraph}
Let $H$ be a subgraph of some graph $G$. Then $\overrightarrow{\beta}(G, f) \geq \overrightarrow{\beta}(H, f)$ for any~$f$.
\end{observation}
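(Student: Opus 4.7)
The plan is to prove the inequality orientation by orientation. For every orientation $\overrightarrow{G}$ of $G$, I will exhibit an orientation of $H$ whose $\beta$-value is at most $\beta(\overrightarrow{G},f)$; taking the minimum over all orientations of $G$ then yields $\overrightarrow{\beta}(G,f) \geq \overrightarrow{\beta}(H,f)$. Fix such an orientation $\overrightarrow{G}$ of $G$ and let $\overrightarrow{H}$ be the orientation of $H$ inherited from $\overrightarrow{G}$, i.e., each edge of $H$ keeps its direction in $\overrightarrow{G}$. Since $\overrightarrow{H}$ is one particular orientation of $H$, it suffices to show $\beta(\overrightarrow{G},f) \geq \beta(\overrightarrow{H},f)$.

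To prove this pointwise inequality, I would pick a vertex $v^* \in V(H)$ realising $\beta(\overrightarrow{H},f)$ as a worst-case starting vertex; since $v^* \in V(G)$, it is enough to argue that any firefighting strategy $\sigma$ on $\overrightarrow{G}$ against a fire starting at $v^*$ burns at least $\beta(\overrightarrow{H},f)$ vertices. Given such a $\sigma$, I would define a strategy $\sigma'$ on $\overrightarrow{H}$ that mimics the protections of $\sigma$ at each time step, discarding (or harmlessly replacing by unprotected non-burning vertices of $V(H)$) the protections that fall outside $V(H)$. A short induction on time then shows that, for every $u \in V(H)$, the burning time of $u$ under $(\overrightarrow{H},\sigma',v^*)$ is at least its burning time under $(\overrightarrow{G},\sigma,v^*)$; the inductive step uses only that every arc of $\overrightarrow{H}$ is an arc of $\overrightarrow{G}$. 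This monotonicity immediately certifies that $\sigma'$ never tries to protect a vertex already burning in $\overrightarrow{H}$, and forces the set of vertices burnt in $\overrightarrow{H}$ under $\sigma'$ to be contained in the set burnt in $\overrightarrow{G}$ under $\sigma$. Since any firefighting strategy on $\overrightarrow{H}$ with fire at $v^*$ burns at least $\beta(\overrightarrow{H},f)$ vertices by the choice of $v^*$, we are done.

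The main point requiring care is the induction and the attendant legality check for $\sigma'$, especially when $V(H) \subsetneq V(G)$ and some protections of $\sigma$ land outside $V(H)$. The standard rule that firefighters may only protect non-burning vertices is preserved precisely because of the burning-time monotonicity, so the two sub-cases $V(H) = V(G)$ and $V(H) \subsetneq V(G)$ can be handled uniformly. Apart from this bookkeeping, the argument is routine.
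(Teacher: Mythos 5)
Your proof is correct and takes essentially the same approach as the paper: restrict the chosen orientation of $G$ to $H$, let the fire start at a worst-case vertex of $\overrightarrow{H}$, and transfer any strategy on $\overrightarrow{G}$ to one on $\overrightarrow{H}$ that burns no more vertices. The paper simply asserts this simulation step in one line, whereas you carry out the burning-time induction and the legality check explicitly; no gap either way.
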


\begin{proof} 
Let $\overrightarrow{G}$ be any orientation of $G$, and let $\overrightarrow{H}$ be its restriction to $H$. If a fire in $\overrightarrow{G}$ breaks out at some vertex $u \in V(\overrightarrow{G}) \cap V(\overrightarrow{H})$, then at least $\overrightarrow{\beta}(H, f)$ vertices of $\overrightarrow{G}[V(\overrightarrow{H})]$ will burn. 
The inequality follows.
\end{proof}



In the context of the traditional Firefighter Problem, it has been shown that it is often difficult
to prevent the fire from spreading widely.
%
However,
when the firefighters have the ability to orient the graph, firefighting becomes easier,
as there is always an orientation that, essentially, reduces the degree of each vertex by almost a half (we make this precise in \cref{proposition:orientation-half}).
This increases the number of graphs for which firefighting is manageable with a given number of firefighters. 
In particular, firefighting on trees is straightforward, due to the following lemma.


\begin{lemma} \label{lemma:1orienting-trees}
Every tree admits an orientation with maximum outdegree at most~$1$.
\end{lemma}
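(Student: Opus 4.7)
The plan is to exhibit an explicit orientation via rooting. Pick any vertex $r \in V(T)$ and designate it the root, which induces a parent--child structure on the tree. Orient every edge $uv$ of $T$ from child to parent, i.e., from the endpoint farther from $r$ (in the tree metric) to the endpoint closer to $r$. Every non-root vertex $v$ has a unique parent, so contributes exactly one outgoing arc, giving outdegree $1$; the root $r$ has no parent, so its outdegree is $0$. Hence the maximum outdegree is at most $1$.

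One routine check is that this assignment is well-defined: since $T$ is a tree, for every edge $uv$ exactly one of $u,v$ is closer to $r$, so the orientation of each edge is unambiguous. (Equivalently, one could proceed by induction on $|V(T)|$: the base case is trivial, and given a tree on $n \geq 2$ vertices, remove a leaf $\ell$ with unique neighbour $w$, orient the remaining tree by induction, and then add back the arc $\overrightarrow{\ell w}$, which gives $\ell$ outdegree $1$ and leaves every other outdegree unchanged.)

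I do not expect any real obstacle here; the proof is essentially a one-line construction, and the only subtlety is observing that $n-1$ edges must be distributed among $n$ vertices with at most one outgoing arc per vertex, which forces exactly the pattern above (all arcs pointing toward a single sink). The rooting argument is the cleanest way to present it.
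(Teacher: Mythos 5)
Your proof is correct and is essentially identical to the paper's: both root the tree at an arbitrary vertex and orient every edge toward the root, so each non-root vertex gets outdegree exactly $1$ and the root gets outdegree $0$. The optional induction you mention is a fine alternative but not needed.
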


\begin{proof}
Let $T$ be a tree. Arbitrarily choose a root node $r$ of $T$, and let $\overrightarrow{T}$ be the orientation of $T$ obtained by orienting all edges towards $r$ (that is, if $uv$ is an edge of $T$ and $u$ is nearer to $r$ than $v$, then orient $uv$ from $v$ to $u$). Then $\overrightarrow{T}$ has maximum outdegree at most~$1$.
\end{proof}

\begin{lemma} \label{proposition:orientation-half}
Every graph $G$ admits an orientation $\overrightarrow{G}$ with $$d_{\overrightarrow{G}}^+(v)~\leq~\left\lfloor \frac{d_G(v)}{2} \right\rfloor + 1$$ for each vertex $v \in V(G)$.
\end{lemma}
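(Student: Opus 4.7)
The plan is to use an Eulerian circuit argument, which is the standard device for producing balanced orientations. I will first reduce to the connected case by treating each connected component of $G$ separately. Within a connected component, the argument splits into two cases depending on parity of degrees.

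If the component has no odd-degree vertices, it is Eulerian, so it admits an Eulerian circuit. Orienting each edge in the direction it is traversed gives an orientation in which $d^+(v)=d_G(v)/2$ for every $v$, which is stronger than what we need. Otherwise, let $u_1,\ldots,u_{2k}$ be the odd-degree vertices (their number is even by the handshaking lemma). I would pair them up arbitrarily and add the dummy edges $u_1u_2, u_3u_4, \ldots, u_{2k-1}u_{2k}$ to obtain a multigraph $G'$. Every vertex of $G'$ has even degree, and $G'$ is still connected because we only added edges, so $G'$ admits an Eulerian circuit. Traversing it gives an orientation $\overrightarrow{G'}$ in which $d^+_{\overrightarrow{G'}}(v) = d_{G'}(v)/2$ for every $v$; restricting to the edges of $G$ yields the candidate orientation $\overrightarrow{G}$.

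It remains to verify the degree bound. A vertex $v$ with $d_G(v)$ even has no dummy edge incident, so its out-degree is unchanged: $d^+_{\overrightarrow{G}}(v)=d_G(v)/2=\lfloor d_G(v)/2\rfloor$. A vertex $v$ with $d_G(v)$ odd has exactly one dummy edge incident, so $d_{G'}(v)=d_G(v)+1$ and hence $d^+_{\overrightarrow{G'}}(v)=(d_G(v)+1)/2$. Removing the dummy edge can only decrease the out-degree, so
\[
d^+_{\overrightarrow{G}}(v)\ \leq\ \frac{d_G(v)+1}{2}\ =\ \Bigl\lfloor \frac{d_G(v)}{2}\Bigr\rfloor+1,
\]
using that $d_G(v)$ is odd. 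In both cases the claimed bound holds.

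There is no serious obstacle here; the argument is a routine application of Eulerianization. The only minor point to be careful about is that $G'$ may be a proper multigraph if two paired odd-degree vertices happen to be adjacent in $G$, but Eulerian circuits work in multigraphs without modification, and each dummy edge is incident to exactly the two vertices it was added to pair, so the accounting above is unaffected.
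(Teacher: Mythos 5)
Your proof is correct, but it takes a genuinely different route from the paper's. The paper decomposes $G$ by repeatedly peeling off edge-disjoint cycles until only a forest remains, then orients each cycle as a directed cycle (contributing at most $1$ to the out-degree of each vertex per cycle, with at most $\lfloor d_G(v)/2\rfloor$ cycles through $v$) and orients each tree of the forest towards a root (contributing at most $1$ more), invoking its earlier lemma on tree orientations. Your Eulerianization argument --- pairing the odd-degree vertices of each component with dummy edges, following an Euler circuit, and discarding the dummies --- is equally valid and in fact proves the stronger bound $d^+_{\overrightarrow{G}}(v)\leq\lceil d_G(v)/2\rceil$: even-degree vertices get out-degree exactly $d_G(v)/2$ rather than the $\lfloor d_G(v)/2\rfloor+1$ the statement permits. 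The trade-off is that your argument passes through multigraphs and the Euler circuit theorem, whereas the paper's stays within simple graphs and reuses machinery it has already set up; for the application in this paper the weaker bound suffices, so both are adequate. Your handling of the edge cases (pairing within components so that parity works out, parallel dummy edges, and the fact that deleting an edge cannot increase out-degree) is sound.
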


\begin{proof}
  While $G$ is not a forest, repeatedly pick a cycle $C$ in $G$, add $C$ to a set $\mathcal{C}$, and remove $E(C)$ from $G$. At the end of this procedure, we have a decomposition of $G$ into a forest $F$ and a collection $\mathcal{C}$ of edge-disjoint cycles. The claimed orientation $\overrightarrow{G}$ is obtained by orienting the elements of $F$ and $\mathcal{C}$ as follows:

\begin{itemize}
	\item Orient the edges of every tree $T$ of $F$ so that $T$ has maximum outdegree~$1$. 
          This is possible by \cref{lemma:1orienting-trees}.
	
	\item For every cycle $C$ of $\mathcal{C}$, orient its edges in order to form a directed cycle.
\end{itemize}

Note that orienting any cycle of $\mathcal{C}$ contributes at most~$1$ to the outdegree of each vertex in $\overrightarrow{G}$. Since every vertex $v$ is traversed by at most $\lfloor \frac{d_G(v)}{2} \rfloor$ cycles of $\mathcal{C}$, orienting the cycles of $\mathcal{C}$ contributes at most $\lfloor \frac{d_G(v)}{2} \rfloor$ to the outdegree of $v$. The claim then follows. 
\end{proof}

\medskip


Let $\overrightarrow{G}$ be a directed graph, and let $v$ be a vertex of $\overrightarrow{G}$.  The \emph{eccentricity} of $v$, denoted $\ecc(v)$, is the greatest distance from $v$ to any other vertex of $\overrightarrow{G}$.
The \emph{radius} of $\overrightarrow{G}$, denoted $\rad(\overrightarrow{G})$, is the minimum eccentricity of a vertex of $\overrightarrow{G}$.

We now consider some rough bounds on $\beta$ for oriented graphs with bounded maximum outdegree.

\begin{observation} \label{observation:outdegree}
  Let $\overrightarrow{G}$ be an oriented graph with maximum outdegree~$\Delta^+$. Then,
\begin{enumerate}
  \item[\normalfont{(i)}] $\beta(\overrightarrow{G}, f \geq \Delta^+) = 1$, and
  \item[\normalfont{(ii)}] $\beta(\overrightarrow{G}, 1) \leq |V(\overrightarrow{G})|- \rad(\overrightarrow{G})$.
\end{enumerate}
\end{observation}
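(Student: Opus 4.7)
Both inequalities should follow by exhibiting appropriate firefighting strategies.

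For (i), the plan is straightforward: given any starting vertex $v$, it has at most $\Delta^+ \leq f$ out-neighbours, so at the first time step we protect all of $N^+_{\overrightarrow{G}}(v)$ simultaneously. The fire then has no unprotected out-neighbour to spread to and halts at $v$. Combined with the trivial observation that at least the starting vertex must burn, this yields $\beta(\overrightarrow{G},f) = 1$.

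For (ii), I would fix any starting vertex $v$, set $k = \ecc(v)$ (so $k \geq \rad(\overrightarrow{G})$ by definition of the radius), and choose a shortest directed path $v = v_0 v_1 \ldots v_k$ from $v$ to a vertex at distance $\ecc(v)$ in $\overrightarrow{G}$. My proposed strategy is simple: at time step $t$, protect $v_t$, for each $t = 1, \ldots, k$. If this succeeds, it saves $v_1, \ldots, v_k$, yielding at least $k \geq \rad(\overrightarrow{G})$ saved vertices, which gives the claimed bound after taking the maximum over the choice of starting vertex.

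The main step to verify (and I expect it to be the only real obstacle) is that at the beginning of time $t$, the target vertex $v_t$ is neither already burnt nor already protected. The previously protected vertices are $v_1, \ldots, v_{t-1}$, and these are distinct from $v_t$ because the $v_i$ all lie on a shortest directed path. For the burning condition, I would argue by induction that the burnt set $B_{t-1}$ after step $t-1$ is contained in the ball $\{u : d_{\overrightarrow{G}}(v, u) \leq t-1\}$, since any burnt vertex is reached by a directed walk from $v$ of length at most $t-1$ in $\overrightarrow{G}$. By our choice of a \emph{shortest} path we have $d_{\overrightarrow{G}}(v, v_t) = t > t-1$, so $v_t \notin B_{t-1}$; hence the strategy is legitimate at every step, and $v_1, \ldots, v_k$ are all saved, as required.
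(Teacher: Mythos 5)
Your proposal is correct and follows essentially the same route as the paper: part (i) by protecting all out-neighbours of the ignition vertex, and part (ii) by protecting one vertex at distance $t$ from the source at each time step $t$, which saves at least $\ecc(v) \geq \rad(\overrightarrow{G})$ vertices. Your only addition is the explicit (and welcome) verification that the vertex targeted at time $t$ is not yet burnt, which the paper leaves implicit in its layer decomposition.
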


	

\begin{proof}
  By positioning $\Delta^+$ firefighters on the outneighbours of the initially burning vertex, (i) is trivial.

  We now consider (ii).
Assume the fire breaks out at $u$, and
partition $V(\overrightarrow{G})$ into layers $\{u\}, V_1, V_2, \dotsc, V_d$, where
$d = \ecc(u)$ and,
for every $i \in \{1, 2, \dotsc, d\}$, the part $V_i$ contains the vertices of $\overrightarrow{G}$ at distance $i$ from $u$.
Note that at time~$i+1$, all vertices in layers $V_1, V_2, \dotsc, V_i$ are either burnt or protected. Now consider the strategy that protects a vertex in $V_i$ at each time unit~$i$. Applying this strategy, at least $d$ vertices will be saved. 
%
In the worst case, when $d$ is at a minimum,
$d=\rad(\overrightarrow{G})$.
The bound then follows.
\end{proof}

We note that when $f = \Delta^+-1$, the basic strategy used in the proof of (ii) is sufficient to  prevent the fire propagating widely.

\begin{corollary} \label{corollary:f-is-delta-1}
For every oriented graph $\overrightarrow{G}$ with maximum outdegree~$\Delta^+$, $$\beta(\overrightarrow{G}, \Delta^+-\nobreak 1) \leq 1+\frac{|V(\overrightarrow{G})|-1}{\Delta^+}.$$
\end{corollary}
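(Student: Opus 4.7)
The plan is to generalize the strategy from \cref{observation:outdegree}(ii) to $\Delta^+ - 1$ firefighters. After the fire breaks out at some vertex $v_1$, I would have the firefighters protect, at the end of each time unit $i$, up to $\Delta^+ - 1$ out-neighbours of the vertex that just caught fire (selecting only vertices that are neither already burning nor already protected).

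The first step is to show, by induction on $i$, that at most one new vertex burns at each time $i \geq 1$. The base case is immediate. For the inductive step, if $v_i$ is the unique vertex that burns at time $i$, then $v_i$ has at most $\Delta^+$ out-neighbours; the firefighters protect up to $\Delta^+ - 1$ of the unburnt, unprotected ones at the end of time $i$, so at most one of them remains to catch fire at time $i+1$. Consequently, the burnt vertices form a sequence $v_1, v_2, \ldots, v_k$, with exactly one new vertex added per time unit.

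The second step is a counting argument. Suppose the fire ultimately burns $k$ vertices. Let $B_{k-1}$ denote the set of vertices burnt by the end of time $k-1$ and $P_{k-1}$ the set of vertices protected by then. These sets are disjoint, with $|B_{k-1}| = k-1$ and $|P_{k-1}| \leq (k-1)(\Delta^+ - 1)$. Since $v_k$ is burning for the first time at time $k$, it lies in neither $B_{k-1}$ nor $P_{k-1}$, so
$$|B_{k-1} \cup P_{k-1} \cup \{v_k\}| \;\leq\; (k-1) + (k-1)(\Delta^+ - 1) + 1 \;=\; (k-1)\Delta^+ + 1.$$
Since this set is contained in $V(\overrightarrow{G})$, we obtain $(k-1)\Delta^+ + 1 \leq |V(\overrightarrow{G})|$, which rearranges to the claimed bound.

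The argument is essentially bookkeeping, so I do not anticipate any real technical obstacle. The one subtlety is isolating the $k-1$ rounds of firefighter action that precede the appearance of $v_k$ from the terminal round, so that the protected-vertex count used in the inequality is $(k-1)(\Delta^+ - 1)$ rather than $k(\Delta^+ - 1)$; together with the extra vertex $v_k$, this is exactly what produces the sharp quantity $(k-1)\Delta^+ + 1$.
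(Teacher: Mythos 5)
Your strategy and your induction showing that at most one new vertex burns per round match the paper's argument, but the final counting step is stated with the inequality pointing the wrong way, and as written the conclusion does not follow. You establish $|P_{k-1}| \leq (k-1)(\Delta^+-1)$ and hence $|B_{k-1} \cup P_{k-1} \cup \{v_k\}| \leq (k-1)\Delta^+ + 1$, and then deduce $(k-1)\Delta^+ + 1 \leq |V(\overrightarrow{G})|$ from the fact that this set sits inside $V(\overrightarrow{G})$. An upper bound on the size of a subset of $V(\overrightarrow{G})$ gives no lower bound on $|V(\overrightarrow{G})|$; what you actually need is $|B_{k-1} \cup P_{k-1} \cup \{v_k\}| \geq (k-1)\Delta^+ + 1$, that is, that the firefighters placed exactly $\Delta^+-1$ protections on previously untouched vertices in each of the first $k-1$ rounds.

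That lower bound is true, but it requires the following short justification, which is the real content of the ``worst case'' in the paper's proof: if at some round $i < k$ the burning vertex $v_i$ had at most $\Delta^+-1$ outneighbours that were not already burnt or protected, then the $\Delta^+-1$ firefighters could protect all of them and the fire would be contained at round $i$, contradicting the assumption that $v_{i+1}$ burns at time $i+1$. Hence at every round that the fire survives, $v_i$ has exactly $\Delta^+$ fresh outneighbours, exactly $\Delta^+-1$ of them are newly protected, and exactly one newly burns; all of these vertices are pairwise distinct and distinct from $v_1$, so $|B_{k-1} \cup P_{k-1} \cup \{v_k\}| = (k-1)\Delta^+ + 1 \leq |V(\overrightarrow{G})|$, which rearranges to the claimed bound. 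With this one-line repair your proof is correct and is essentially the paper's argument; the paper compresses the same count into the statement that $1$ of every $\Delta^+$ consumed outneighbours burns per round, so $\frac{1}{\Delta^+}$ of the $|V(\overrightarrow{G})|-1$ non-initial vertices burn.
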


\begin{proof}
Applying the strategy described in the proof of \cref{observation:outdegree}, we deduce that at most one new vertex burns at each time unit. 
For each burning vertex $v$, we protect $\Delta^+-1$ outneighbours of $v$, so, in the worst case, $1$ of $\Delta^+$ outneighbours of $v$ burns.  Excluding the vertex at which the fire starts, $\frac{1}{\Delta^+}$ of the $|V(\overrightarrow{G})|-1$ vertices burn, and the result follows.
\end{proof}

We now consider a lower bound for $\overrightarrow{\beta}$ that can be obtained by considering the minimum outdegree of the given graph.

\begin{lemma}
  \label{lowerbound1}
  Let $\overrightarrow{G} = (V,A)$ be a directed graph with maximum outdegree $\Delta^+$.  Then $\Delta^+ \geq \frac{|A|}{|V|}$.
  Moreover, if equality holds, then $d^+(v) = \frac{|A|}{|V|}$ for every $v \in V$.
\end{lemma}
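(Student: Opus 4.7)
The plan is to use the elementary fact that summing outdegrees counts each arc once, together with the averaging principle (max $\geq$ mean).

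First, I would observe that $\sum_{v \in V} d^+(v) = |A|$, since each arc $\overrightarrow{uv} \in A$ contributes exactly $1$ to $d^+(u)$ and $0$ to $d^+(w)$ for all $w \neq u$. Dividing by $|V|$, the average outdegree of a vertex in $\overrightarrow{G}$ is exactly $|A|/|V|$. Since the maximum of a finite collection of real numbers is at least their arithmetic mean, this immediately yields $\Delta^+ \geq |A|/|V|$, proving the first assertion.

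For the equality case, I would argue by contradiction. Suppose $\Delta^+ = |A|/|V|$ but there exists a vertex $w \in V$ with $d^+(w) < |A|/|V|$. Then, combining this strict inequality with $d^+(v) \leq \Delta^+ = |A|/|V|$ for every other $v \in V$, we get
\[
|A| = \sum_{v \in V} d^+(v) < |V| \cdot \frac{|A|}{|V|} = |A|,
\]
a contradiction. Hence $d^+(v) = |A|/|V|$ for every $v \in V$.

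There is no real obstacle here: the statement is essentially the observation that the maximum of a list equals the average if and only if all entries are equal, specialised to the outdegree sequence. The only ``content'' is the handshake-type identity $\sum_v d^+(v) = |A|$, which follows directly from the definition of outdegree in a directed graph.
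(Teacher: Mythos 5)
Your proof is correct and follows exactly the same route as the paper: the handshaking identity $\sum_{v} d^+(v) = |A|$ combined with the averaging bound $\Delta^+ \cdot |V| \geq \sum_{v} d^+(v)$, with the equality case handled by the standard ``max equals mean forces all equal'' argument that the paper leaves implicit. No issues.
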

\begin{proof}
  By the handshaking lemma for directed graphs, $$\Delta^+ \cdot |V| \geq \sum_{v \in V} d^+(v) = |A|.$$
  The result follows easily.
\end{proof}

\begin{corollary}
  \label{lowerbound2}
  Let $G$ be a graph.  Then $\overrightarrow{\beta}(G,1) \geq \frac{|E(G)|}{|V(G)|}$.
\end{corollary}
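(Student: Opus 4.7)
The plan is to combine the preceding lemma (lowerbound1) with the trivial observation that when the fire starts at a vertex of maximum outdegree in some orientation, a single firefighter cannot possibly save all of its outneighbours.

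First I would fix an arbitrary orientation $\overrightarrow{G}$ of $G$ and let $v$ be a vertex of maximum outdegree $\Delta^+$ in $\overrightarrow{G}$. By \cref{lowerbound1} applied with $|A|=|E(G)|$ and $|V|=|V(G)|$, we have $\Delta^+ \geq |E(G)|/|V(G)|$. Now I would consider the scenario in which the fire breaks out at $v$. Since only one firefighter is available at the first time step, at most one of the $\Delta^+$ outneighbours of $v$ can be protected, so at least $\Delta^+ - 1$ of them begin burning at time~$2$. Together with $v$ itself, this forces at least $\Delta^+$ vertices to burn under any firefighting strategy, giving $\beta(\overrightarrow{G},1) \geq \Delta^+ \geq |E(G)|/|V(G)|$.

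Since the orientation $\overrightarrow{G}$ was arbitrary, taking the minimum over all orientations yields
\[
\overrightarrow{\beta}(G,1) \;=\; \min_{\overrightarrow{G}} \beta(\overrightarrow{G},1) \;\geq\; \frac{|E(G)|}{|V(G)|},
\]
as required.

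There is no real obstacle here; the only point to be careful about is the order of quantifiers in the definition of $\overrightarrow{\beta}(G,1)$, namely that the orientation is chosen before the starting vertex. This is exactly what makes the argument go through, since the adversary (fire) gets to pick the worst starting vertex for each fixed orientation, and in particular may pick a vertex attaining the maximum outdegree.
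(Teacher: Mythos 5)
Your proof is correct and is exactly the argument the paper intends (the corollary is stated without proof, as an immediate consequence of \cref{lowerbound1}): pick a vertex of maximum outdegree $\Delta^+ \geq |E(G)|/|V(G)|$ in any orientation, start the fire there, and note that one firefighter leaves at least $\Delta^+-1$ outneighbours to burn at time~$2$, so at least $\Delta^+$ vertices burn in total. Your remark about the quantifier order is also the right thing to be careful about, and you handle it correctly.
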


\section{Firefighting in basic graph classes} \label{section:basic}

In this section, we give lower and upper bounds on the number of vertices we can save for trees, complete graphs, and bipartite graphs.

\subsection{Trees}

Unlike for the traditional Firefighter Problem, 
there is an optimal strategy when the given graph is a tree (see \cref{figure:protecting-tree}, for example).

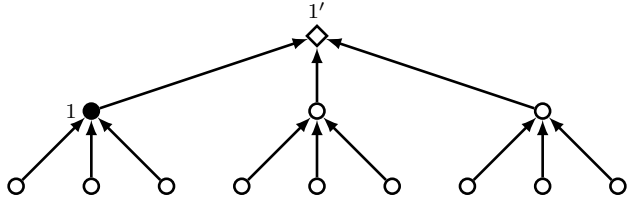
\begin{figure}[!t]
	\centering
	
	\begin{tikzpicture}[inner sep=0.7mm]
		\node[draw, diamond, line width=1pt](r) at (0,0)[label=above:{\scriptsize $1'$}]{};
		
		\node[draw, circle, line width=1pt, fill=black](u1) at (-3,-1)[label=left:{\scriptsize $1$}]{};
		\node[draw, circle, line width=1pt](u2) at (0,-1) {};
		\node[draw, circle, line width=1pt](u3) at (3,-1) {};
		
		\node[draw, circle, line width=1pt](v11) at (-4,-2) {};
		\node[draw, circle, line width=1pt](v12) at (-3,-2) {};
		\node[draw, circle, line width=1pt](v13) at (-2,-2) {};
		
		\node[draw, circle, line width=1pt](v21) at (-1,-2) {};
		\node[draw, circle, line width=1pt](v22) at (0,-2) {};
		\node[draw, circle, line width=1pt](v23) at (1,-2) {};
		
		\node[draw, circle, line width=1pt](v31) at (4,-2) {};
		\node[draw, circle, line width=1pt](v32) at (3,-2) {};
		\node[draw, circle, line width=1pt](v33) at (2,-2) {};
		
		\draw[-latex,line width=1pt] (u1) -- (r);
		\draw[-latex,line width=1pt] (u2) -- (r);
		\draw[-latex,line width=1pt] (u3) -- (r);
		
		\draw[-latex,line width=1pt] (v11) -- (u1);
		\draw[-latex,line width=1pt] (v12) -- (u1);
		\draw[-latex,line width=1pt] (v13) -- (u1);
		
		\draw[-latex,line width=1pt] (v21) -- (u2);
		\draw[-latex,line width=1pt] (v22) -- (u2);
		\draw[-latex,line width=1pt] (v23) -- (u2);
		
		\draw[-latex,line width=1pt] (v31) -- (u3);
		\draw[-latex,line width=1pt] (v32) -- (u3);
		\draw[-latex,line width=1pt] (v33) -- (u3);
	\end{tikzpicture}
	\caption{An optimal orientation for firefighting in a tree.}
	\label{figure:protecting-tree}
\end{figure}

\begin{proposition} \label{proposition:tree}
For every tree $T$, we have $\overrightarrow{\beta}(T, f \geq 1) = 1$.
\end{proposition}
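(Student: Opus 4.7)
The plan is to combine \cref{lemma:1orienting-trees} with a completely local firefighting strategy. First I would invoke \cref{lemma:1orienting-trees} to fix an orientation $\overrightarrow{T}$ of $T$ with maximum outdegree at most~$1$ (for instance, the orientation obtained by rooting $T$ at an arbitrary vertex and directing every edge toward the root). This is the only structural ingredient needed.

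Next, I would argue that this orientation witnesses $\overrightarrow{\beta}(T,1) \leq 1$. Suppose the fire breaks out at an arbitrary vertex $v$. Because $d^+_{\overrightarrow{T}}(v) \leq 1$, there is at most one vertex to which the fire could spread at the next time step. If $d^+_{\overrightarrow{T}}(v) = 0$ then no vertex other than $v$ ever burns, and we are done. Otherwise, let $w$ be the unique outneighbour of $v$; a single firefighter protects $w$ at time~$1$, so the fire has no unprotected outneighbour to propagate to and the process terminates with only $v$ burnt. Since the fire was allowed to start anywhere, this shows $\beta(\overrightarrow{T},1) \leq 1$, and hence $\overrightarrow{\beta}(T,1) \leq 1$.

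For the matching lower bound, the vertex at which the fire initially breaks out always burns, so $\overrightarrow{\beta}(T,f) \geq 1$ for any $f \geq 1$ and any orientation. Moreover, giving the firefighters more than one firefighter per round can only help, so $\overrightarrow{\beta}(T,f) \leq \overrightarrow{\beta}(T,1) \leq 1$ for every $f \geq 1$. Combining both inequalities gives $\overrightarrow{\beta}(T, f) = 1$ for all $f \geq 1$.

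There is no real obstacle here: the only nontrivial content is already encapsulated in \cref{lemma:1orienting-trees}, and the remainder is a one-line strategy together with the trivial observation that the initially burning vertex is always lost.
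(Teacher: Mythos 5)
Your proof is correct and follows exactly the paper's approach: invoke \cref{lemma:1orienting-trees} to obtain an orientation of $T$ with maximum outdegree at most~$1$, then observe that one firefighter suffices to block the unique outneighbour of the ignition vertex. The paper leaves the strategy and the trivial lower bound implicit ("the result follows easily"); you have simply written out those details.
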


\begin{proof}
By \cref{lemma:1orienting-trees}, every tree admits an orientation with maximum outdegree at most~$1$. The result follows easily. 
\end{proof}


\subsection{Complete graphs}

In this section, we focus on the family of complete graphs. 
We first
present a lower bound on $\overrightarrow{\beta}(K_n, 1)$, 
and then upper bounds on $\overrightarrow{\beta}(K_n, f)$. 
Combining these results, we are able to compute $\overrightarrow{\beta}(K_n, 1)$ for any $n$, demonstrating that the firefighting strategy used to derive the upper bounds is optimal when $f=1$.

The lower bound 
is the following:

\begin{proposition} \label{proposition:lower-complete}
For every $n \geq 1$, we have $\overrightarrow{\beta}(K_n, 1) \geq n-3$. 
\end{proposition}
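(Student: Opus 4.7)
The plan is to fix a vertex $v$ of maximum out-degree $\Delta^+$ in the given orientation $\overrightarrow{K_n}$, start the fire there, and show that at most three vertices can be saved regardless of the firefighter's strategy.

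Write $B_t$ for the set of burnt vertices and $P_t$ for the set of protected vertices after $t$ time units, so $B_1 = \{v\}$ and $P_1 = \{p_1\}$, where $p_1$ is the firefighter's first choice. After the first propagation of the fire we have $B_2 = \{v\} \cup (N^+(v) \setminus \{p_1\})$, hence $|B_2| \geq \Delta^+$. Let $A$ denote the set of vertices outside $B_2$ with no in-neighbour in $B_2$; these are precisely the vertices to which the fire cannot spread from $B_2$ in one step. Because $\overrightarrow{K_n}$ is a tournament, every $u \in A$ must satisfy $B_2 \subseteq N^+(u)$, and so $d^+(u) \geq |B_2| \geq \Delta^+$. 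By maximality of $\Delta^+$, equality holds throughout, forcing $N^+(u) = B_2$ exactly.

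The crux of the argument is the claim that $|A| \leq 1$: if $u_1, u_2$ were two distinct elements of $A$, then the tournament edge between them would be oriented one way, say $u_1 \to u_2$, placing $u_2 \in N^+(u_1) = B_2$, which contradicts $u_2 \notin B_2$.

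To finish, let $p_2$ be the firefighter's second protection and let the fire then spread once more. A direct computation shows that the set of vertices still unburnt after this second propagation equals $A \cup \{p_1, p_2\}$, which has size at most $3$. Since the burnt set only grows thereafter, at most three vertices are ever saved, giving $\overrightarrow{\beta}(K_n, 1) \geq n - 3$. The main technical point is the rigidity the choice of $v$ imposes on $A$; the remaining steps are routine.
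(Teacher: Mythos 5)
Your proof is correct, and it takes a genuinely different --- and noticeably more economical --- route than the paper's. The paper also starts the fire at a vertex $u$ of maximum outdegree and derives the same key structural fact you do (every vertex $x$ not reachable from the burnt set in two steps must satisfy $N^+(x) = \{u\} \cup N_1'$, i.e.\ $N^+(x)=B_2$ in your notation), but it then routes the fire onwards: it splits into cases on the size of the second layer $N_2'$, tracks the spread to time~$4$, and in one subcase even switches to a fire starting at a different vertex $w \in N_2'$, with further subcases on $|N_3'|$. You short-circuit all of this by observing that the rigidity $N^+(x)=B_2$ immediately forces your set $A$ (the paper's $N_3'$) to have at most one element, since the tournament arc between two such vertices would land one of them inside $B_2$. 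Consequently the unburnt set after the second propagation is already contained in $A \cup \{p_1,p_2\}$, of size at most~$3$, uniformly over all firefighter strategies, and no later analysis is needed. The one thing the paper's longer development buys is that the layered picture ($N_1, N_2, N_3$ with $N_3=\emptyset$, and the partial orientation of \cref{fig:completeproof}) is reused as a template for the substantially harder bipartite lower bound in \cref{bipartiteprop}; for the complete-graph proposition on its own, your argument is the better one.
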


\begin{proof}
  Clearly, we may assume that $n\geq 4$.
  Let $\overrightarrow{K}$ be an orientation of $K_n$, and let $u$ be a vertex with maximum outdegree $\Delta^+$.  Let $N_1$ be the outneighbours of $u$.
  Since $|E(K_n)| = \frac{n(n-1)}{2}$, it follows, by \cref{lowerbound1}, that $|N_1| \geq \frac{n-1}{2}$, so $|N_1| \geq 2$.
  Let $N_2$ be the vertices in the second outneighbourhood of $u$; that is, $N_2$ contains those vertices not in $\{u\} \cup N_1$ with an incoming incident arc from a vertex of $N_1$.
  Finally, let $N_3 = V(\overrightarrow{K}) \setminus (\{u\} \cup N_1 \cup N_2)$.
  Suppose $N_3$ is non-empty, and
  consider the arcs incident with a vertex $x$ in $N_3$.  Such arcs that are incident with $u$ or a vertex in $N_1$ are oriented away from $x$, since otherwise $x$ would be in the first or second outneighbourhood of $u$.
  Thus $d^+(x) \geq |N_1| + 1 > d^+(u)$; a contradiction.  So $N_3 = \emptyset$.

  Suppose that a fire starts at $u$.
  If the first firefighter is positioned at a vertex in $N_2$, then it follows that $n-2$ vertices will burn.  So we may now assume that a vertex in $N_1$, say $v$, is protected at time~$1$.
  Let $N_1' = N_1 \setminus \{v\}$, let $N_2'$ be the subset of $N_2$ consisting of outneighbours of a vertex in $N_1'$, and let $N_3' = V(\overrightarrow{K}) \setminus (\{u,v\} \cup N_1' \cup N_2')$.
  Since all but at most one of the vertices in $N_2'$ burn at time~$3$, we may assume that $N_3'$ is non-empty (otherwise at least $n-2$ vertices burn).
  Now consider the arcs incident with a vertex $x$ in $N_3'$. Evidently, such arcs that are incident with a vertex in $N_1' \cup \{u\}$ are oriented away from $x$.  Thus $d^+(x) \geq |N_1'| + 1 = d^+(u) = \Delta^+$, so all arcs incident with $x$ and a vertex in $N_2'$ are oriented towards $x$.  
This situation is illustrated in \cref{fig:completeproof}.
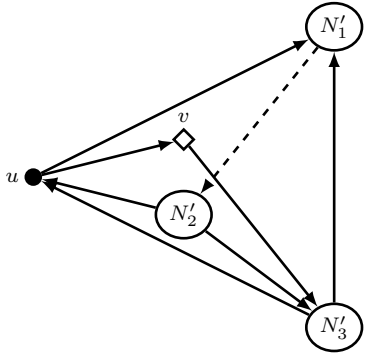
\begin{figure}
    \centering
    \begin{tikzpicture}[inner sep=0.7mm]
        \node[draw, circle, line width=1pt, fill=black](u) at (0,0)[label=left:{\scriptsize $u$}]{};
        \node[draw, diamond, line width=1pt](v) at (2,0.5)[label=above:{\scriptsize $v$}]{};
        \node[draw, ellipse, line width=1pt](w) at (2,-0.5) {\scriptsize $N_2'$};
        \node[draw, ellipse, line width=1pt](uout) at (4,2) {\scriptsize $N_1'$};
        \node[draw, ellipse, line width=1pt](uin) at (4,-2) {\scriptsize $N_3'$};
        
        \draw[-latex,line width=1pt] (u) -- (v);
        \draw[-latex,line width=1pt] (w) -- (u);
        \draw[-latex,line width=1pt] (u) -- (uout);
        \draw[-latex,line width=1pt] (uin) -- (u);
        \draw[-latex,line width=1pt] (uin) -- (uout);
        \draw[-latex,line width=1pt] (w) -- (uin);
        \draw[-latex,line width=1pt] (v) -- (uin);
        \draw[-latex,line width=1pt, dashed] (uout) -- (w);
    \end{tikzpicture}
    \caption{A partial orientation of $K_n$, as in the proof of \cref{proposition:lower-complete}. A solid arrow signifies the direction of all arcs between vertices in the two subsets, whereas a dashed arrow indicates the existence of an arc with the given orientation between the two subsets. }
    \label{fig:completeproof}
\end{figure}

Now, if $|N_2'| > 1$, then a fire starting at $u$ will spread to all unprotected vertices in $N_3'$ at time~$4$.  In this case, $n-3$ vertices burn.  If $N_2' = \emptyset$, then $N_3' = \emptyset$, and $n-1$ vertices burn.
So we may assume that $N_2' = \{w\}$.
Then, if a fire instead starts at $w$, and $|N_3'| > 1$, it spreads to all unprotected vertices in $N_1' \cup \{u\}$ at time~$3$, so at most three vertices can be saved.
In the remaining case, $|N_3'| \leq 1$, so $|N_1'| \geq n-4$, in which case when a fire starts at $u$, at least $n-3$ vertices burn.  This completes the proof.
\end{proof}

\Cref{proposition:lower-complete} is our first confirmation that $\overrightarrow{\beta}$ is not bounded above by some constant for all graphs. 
In particular, for any $k$ there exists a graph $G$ with $\overrightarrow{\beta}(G,1) > k$.

Now we consider
upper bounds on $\overrightarrow{\beta}$ for complete graphs.
First, we focus on complete graphs with odd order, since they admit a regular orientation that facilitates an effective defence strategy.
We then use this result to derive a similar upper bound for complete graphs with even order.

\begin{proposition} \label{observation:complete-upper}
For odd $n \geq 3$, 

	
	

\nopagebreak
\begin{equation*}
  \overrightarrow{\beta}(K_n, f) \leq
  \begin{cases}
    n - 3f & \textrm{if } f < \frac{n-1}{4}, \\
    \frac{n-1}{2} - f + 1 & \textrm{if } \frac{n-1}{4} \leq f < \frac{n-1}{2}, \\
    1 & \textrm{if } f \geq \frac{n-1}{2}. \\
  \end{cases}
\end{equation*}
\end{proposition}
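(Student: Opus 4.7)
The plan is to exhibit a single orientation of $K_n$ together with a common firefighting strategy, and then analyse the total burn in each case. Write $n = 2a + 1$, and take $\overrightarrow{K}$ to be the \emph{rotational tournament} on vertex set $\{0, 1, \ldots, n-1\}$ where $i \to j$ exactly when $(j - i) \bmod n \in \{1, \ldots, a\}$. Every vertex has outdegree $a = (n-1)/2$, and the cyclic shift $i \mapsto i+1$ is an automorphism, so I may assume the fire breaks out at $0$. The case $f \geq a$ (case~3) then follows immediately from \cref{observation:outdegree}(i).

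For cases~1 and~2 the strategy is the \emph{greedy rightmost} one: at each time unit, protect the $f$ currently threatened vertices whose indices are largest (or protect all of them if fewer than $f$ are available). A short induction shows that, as long as the fire continues to advance and no wraparound effects have yet occurred, after the fire has spread at time~$k+1$ the occupied (burnt or protected) set is exactly the interval $\{0, 1, \ldots, ka - (k-1)f\}$, and the newly-burnt vertices at time~$k+1$ form an interval of size $a - f$ when $k = 1$ and of size $a - 2f$ when $k \geq 2$.

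In case~2, where $a/2 \leq f < a$, the threatened-unprotected set at time~$3$ is $\{a+1, \ldots, 2a - f\}$, of size $a - f \leq f$. The strategy protects all of it at step~$2$, the fire dies, and the total burn is $1 + (a - f) = (n-1)/2 - f + 1$. Case~1, where $f < a/2$, is the main obstacle: since $a - f > f$, the fire survives step~$2$, and the frontier at time~$3$ is the interval $\{a+1, \ldots, 2a - 2f\}$ of size $a - 2f \geq 1$. The crucial ingredient is wraparound: as $3a - 2f > 2a$, the outneighbourhood of $2a - 2f$ is $\{2a - 2f + 1, \ldots, 2a\} \cup \{0, \ldots, a - 2f - 1\} \pmod n$, and its wrapped portion lies entirely inside the already-burnt set. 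After subtracting burnt and previously protected vertices, the threatened-unprotected set at time~$4$ collapses to exactly $\{2a - f + 1, \ldots, 2a\}$, an interval of size $f$, which is entirely protected at step~$3$. The fire then stops, for a total of $1 + (a - f) + (a - 2f) = 2a - 3f + 1 = n - 3f$ burnt vertices.

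The hardest part is the modular bookkeeping in case~1: I must verify the inductive structure of occupied intervals up to step~$2$ (where no wraparound yet occurs), pinpoint precisely when and where wraparound first affects the threatened set at step~$3$, and confirm that this collapses the threatened-unprotected set to exactly size~$f$, so that a final round of protection extinguishes the fire and yields the claimed bound.
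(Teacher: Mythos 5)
Your proposal is correct and follows essentially the same route as the paper: the same rotational tournament, the same "protect the $f$ rightmost threatened vertices" strategy, and the same observation that the frontier at time~$3$ wraps around into the already-burnt set, leaving a threatened set of size exactly $f$ that can be fully protected. The interval computations you give (including the sizes $a-f$, $a-2f$, and $f$ of the successive frontiers) match the paper's sets $B_2$, $B_3$, and $B_4'$ exactly, so the remaining "modular bookkeeping" you flag is already done.
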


\begin{proof}
Let $V(K_n) = \{v_0, v_1, \dotsc, v_{n-1}\}$ and let $\overrightarrow{K}$ be the orientation of $K_n$ where, for every $i \in \{0, 1, \dotsc, n-1\}$ and $j' \in \{i+1, i+2, \dotsc, i+\frac{n-1}{2}\}$, the edge $v_iv_j$ is oriented from $v_i$ to $v_j$, where $j = j' \mod n$.
Note that $\overrightarrow{K}$ is $\frac{n-1}{2}$-outregular. 
For each vertex $v_i \in V(\overrightarrow{K})$, we associate an ordering $v_{i+1}, v_{i+2}, \dotsc, v_{i+\frac{n-1}{2}}$ on the outneighbours of $v_i$, where the subscripts are interpreted modulo $n$, and when we refer to \emph{consecutive} outneighbours, or outneighbours with the largest indices, we mean with respect to this ordering.

We may assume, by symmetry, that the fire breaks out at $v_0$, and that $f < \frac{n-1}{2}$ (otherwise the fire can be stopped at time~$1$). Let 
$F_1$ be the
$f$ consecutive outneighbours of $v_0$ with the largest indices, and set $B_2 = N^+(v_0) \setminus F_1$. In particular, $|B_2|=\frac{n-1}{2}-f$. Then, at time~$1$, we protect all vertices in $F_1$. By our choice of $F_1$ and $B_2$, the fire will propagate to $B_2$ at time~$2$.

Now let $B_3' = N^+(v_{\frac{n-1}{2}-f}) \setminus F_1$. In other words, $B_3'$ contains those vertices which may potentially burn at time~$3$. Obviously, if $f \geq |B_3'|$, then we can entirely protect $B_3'$ at time~ $2$, and hence stop the fire propagation. The upper bound 
given when $f \geq \frac{n-1}{4}$
then follows.
Now we may assume that $f < \frac{n-1}{4}$.  
Let $F_2$ be the $f$ consecutive vertices of $B_3'$ with the largest indexes, and set $B_3 = B_3' \setminus F_2$. By the remark above, $B_3$ is non-empty and, more precisely, $|B_3|=|N^+(v_{\frac{n-1}{2}-f})|-2f$. We protect the vertices in $F_2$ at time~$2$. The fire then propagates to $B_3$ at time~$3$. Now note that the last vertex of $B_3$ has an outgoing arc towards all unburnt and unprotected vertices
(since $B_3 = \{v_{\frac{n-1}{2}+1}, \dotsc, v_{n-1-2f}\}$ with $2f < \frac{n-1}{2}$).
Let $B_4'$ be this subset of vertices. We have $B_4' = N^-(v_0) \setminus B_3 \setminus F_2$, hence $$|B_4'|=\frac{n-1}{2} - \left(\frac{n-1}{2} - 2f\right) - f = f;$$ so all vertices of $B_4'$ can be protected at time~$3$. 
Thus, the set of vertices that burn is $\{v_0\} \cup B_2 \cup B_3$.  It follows that
$$\beta(\overrightarrow{K}, f) \leq 1 + \left(\frac{n-1}{2} - f \right) + \left(\frac{n-1}{2} - 2f \right) = n - 3f,$$ as claimed.
\end{proof}

Complete graphs with even order do not admit a regular orientation like the one described in the proof of \cref{observation:complete-upper}. However, we can obtain similar bounds for these graphs by `sacrificing' a vertex.


\begin{corollary}
  \label{corollary:complete}
For all even $n \geq 4$, 

\begin{equation*}
  \overrightarrow{\beta}(K_n, f) \leq
  \begin{cases}
    n - 3f & \textrm{if } f < \frac{n-2}{4}, \\
    \frac{n}{2} - f + 1 & \textrm{if } \frac{n-2}{4} \leq f < \frac{n-2}{2}, \\
    2 & \textrm{if } \frac{n-2}{2} \leq f < \frac{n}{2}, \\
    1 & \textrm{if } f \geq \frac{n}{2}. \\
  \end{cases}
\end{equation*}
\end{corollary}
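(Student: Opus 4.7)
The plan is to reduce to the odd case already handled by \cref{observation:complete-upper} by ``sacrificing'' one vertex. Write $V(K_n)=\{v_0,v_1,\dots,v_{n-1}\}$ and let $\overrightarrow{K'}$ be the rotational orientation of the $K_{n-1}$ on $\{v_0,\dots,v_{n-2}\}$ defined in the proof of \cref{observation:complete-upper}; since $n-1$ is odd, $\overrightarrow{K'}$ is $\frac{n-2}{2}$-outregular. Extend this to an orientation $\overrightarrow{K}$ of $K_n$ by orienting every edge incident with $v_{n-1}$ \emph{towards} $v_{n-1}$. Then $d^+_{\overrightarrow{K}}(v_{n-1})=0$ and $d^+_{\overrightarrow{K}}(v_i)=\frac{n-2}{2}+1=\frac{n}{2}$ for $0\leq i\leq n-2$.

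The crucial observation is that $v_{n-1}$ has outdegree zero, so once it burns it cannot propagate the fire any further. Thus, if the fire starts at $v_{n-1}$, only $v_{n-1}$ burns. If the fire starts at some $v_i$ with $i\leq n-2$, then by symmetry we may assume $i=0$. Apply verbatim the firefighting strategy described in the proof of \cref{observation:complete-upper} for the fire starting at $v_0$ in $\overrightarrow{K'}$, using all $f$ firefighters. The vertex $v_{n-1}$, which is an outneighbour of every vertex of $\overrightarrow{K'}$, will burn at time~$2$, but it contributes nothing to subsequent propagation. Hence the set of burnt vertices is exactly the set of vertices that burn in $\overrightarrow{K'}$ under the strategy of \cref{observation:complete-upper}, together with $v_{n-1}$.

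Substituting $n-1$ (odd) into the three cases of \cref{observation:complete-upper} and adding~$1$ for the sacrificed vertex $v_{n-1}$ then yields respectively $(n-1)-3f+1=n-3f$ when $f<\frac{(n-1)-1}{4}=\frac{n-2}{4}$, and $\frac{(n-1)-1}{2}-f+1+1=\frac{n}{2}-f+1$ when $\frac{n-2}{4}\leq f<\frac{n-2}{2}$, and $1+1=2$ when $\frac{n-2}{2}\leq f<\frac{n}{2}$ (in this last range the strategy of \cref{observation:complete-upper} already halts the fire in $\overrightarrow{K'}$ at time~$1$). Finally, when $f\geq\frac{n}{2}$, since the maximum outdegree in $\overrightarrow{K}$ is $\frac{n}{2}$, part~(i) of \cref{observation:outdegree} gives $\beta(\overrightarrow{K},f)\leq 1$.

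No step here is really an obstacle; the only thing to verify carefully is that inserting $v_{n-1}$ into the propagation does not interfere with the strategy on $\overrightarrow{K'}$, which is immediate because $d^+_{\overrightarrow{K}}(v_{n-1})=0$ so $v_{n-1}$ acts as a true ``sink''. The case analysis on $f$ then just mirrors that of \cref{observation:complete-upper}, with thresholds shifted by one because the sacrificed vertex is always counted.
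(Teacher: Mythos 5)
Your proposal is correct and follows essentially the same route as the paper's own proof: orient the complete graph on the remaining $n-1$ vertices rotationally as in the odd case, sacrifice one vertex by directing all its incident edges towards it so that it is a sink, and observe that the burnt set is exactly the odd-case burnt set plus that one sink, giving $\overrightarrow{\beta}(K_n,f)\leq\overrightarrow{\beta}(K_{n-1},f)+1$ for $f<\frac{n}{2}$ and the outdegree bound for $f\geq\frac{n}{2}$. The case arithmetic you carry out matches the stated thresholds.
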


\begin{proof}
  Let $K_n$ be a complete graph, with $n$ even and at least $4$, containing a vertex $v$. Let $\overrightarrow{K}$ be an orientation of $K_n$ for which $\overrightarrow{K} - \{v\}$ is outregular, as in the proof of \cref{observation:complete-upper}, 
  and all arcs incident to $v$ are oriented towards $v$. Then, if the fire breaks out at $v$ in $\overrightarrow{K}$, it will not propagate to any other vertices. If the fire breaks out at some other vertex, then the strategy described in \cref{observation:complete-upper} 
  applies: the
  only difference is that $v$ will also burn. 
  Thus $\overrightarrow{\beta}(K_n, f) \leq \overrightarrow{\beta}(K_{n-1}, f)+1$, so the \lcnamecref{corollary:complete} holds when $f < \frac{n}{2}$.
  Finally, clearly $\overrightarrow{K}$ has outdegree $\frac{n}{2}$ -- so $\overrightarrow{\beta}(K_n, f \geq \frac{n}{2}) = 1$ as required.
\end{proof}

By combining \cref{proposition:lower-complete,observation:complete-upper,corollary:complete},
we deduce the following when $f=1$:

\begin{theorem} \label{theorem:complete-1ff}
For all $n \geq 5$, $$\overrightarrow{\beta}(K_n,1) = n-3.$$
\end{theorem}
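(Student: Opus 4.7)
The proof will simply combine the lower bound from \cref{proposition:lower-complete} with the upper bounds from \cref{observation:complete-upper} (for odd $n$) and \cref{corollary:complete} (for even $n$), specialised to $f=1$.

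First, by \cref{proposition:lower-complete} we immediately have $\overrightarrow{\beta}(K_n,1) \geq n-3$ for every $n \geq 5$, so only the matching upper bound remains to be checked. For this, I would split on the parity of $n$ and, within each parity, on which case of the piecewise bound applies when $f=1$.

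For odd $n \geq 7$ we have $1 < \tfrac{n-1}{4}$, so the first case of \cref{observation:complete-upper} gives $\overrightarrow{\beta}(K_n,1) \leq n-3$. For $n=5$ we have $\tfrac{n-1}{4} = 1 \leq f = 1 < 2 = \tfrac{n-1}{2}$, so the second case applies and yields $\overrightarrow{\beta}(K_5,1) \leq \tfrac{n-1}{2} - 1 + 1 = 2 = n-3$. Similarly, for even $n \geq 8$ we have $1 < \tfrac{n-2}{4}$, so the first case of \cref{corollary:complete} gives $\overrightarrow{\beta}(K_n,1) \leq n-3$; and for $n=6$ we have $\tfrac{n-2}{4} = 1 \leq 1 < 2 = \tfrac{n-2}{2}$, so the second case applies and gives $\overrightarrow{\beta}(K_6,1) \leq \tfrac{n}{2} - 1 + 1 = 3 = n-3$.

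There is no real obstacle here; the whole content of the theorem is in the earlier results. The only thing to be careful about is the boundary behaviour at $n \in \{5,6\}$, where the first ``generic'' case $f < \tfrac{n-1}{4}$ (respectively $f < \tfrac{n-2}{4}$) does not apply and one must instead use the second case and check that the value it produces also equals $n-3$. Once these small cases are verified, combining with \cref{proposition:lower-complete} yields equality for all $n \geq 5$, completing the proof.
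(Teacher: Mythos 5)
Your proposal is correct and is exactly the paper's argument: the paper derives Theorem~\ref{theorem:complete-1ff} by combining \cref{proposition:lower-complete} with \cref{observation:complete-upper} and \cref{corollary:complete}, just as you do. Your explicit verification of the boundary cases $n\in\{5,6\}$, where the second case of the piecewise upper bounds must be used, is a careful touch the paper leaves implicit.
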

On the other hand, $\overrightarrow{\beta}(K_3,1) = 1$ and $\overrightarrow{\beta}(K_4,1) = 2$ (by \cref{lowerbound2}, and \cref{observation:complete-upper} or \cref{corollary:complete} respectively). From \cref{observation:subgraph} and Theorem~\ref{theorem:complete-1ff}, this also gives a lower bound on $\overrightarrow{\beta}(G,1)$ whenever the clique number of $G$ is known.

We suspect that the strategy presented in the proof of \cref{observation:complete-upper} is also optimal when $f > 1$, 
leading to the following conjecture:

\begin{conjecture}
For each $f \geq 1$ and $n > 4f+1$, $$\overrightarrow{\beta}(K_n,f) = n-3f.$$
\end{conjecture}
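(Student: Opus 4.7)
The upper bound $\overrightarrow{\beta}(K_n,f)\le n-3f$ is immediate from \cref{observation:complete-upper} for odd $n$ and from \cref{corollary:complete} for even $n$, noting that $n-3f$ coincides with $n/2-f+1$ at the boundary $n=4f+2$. The lower bound $\overrightarrow{\beta}(K_n,f)\ge n-3f$ is the hard direction, and my plan is to generalize the argument of \cref{proposition:lower-complete}.

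Given any orientation $\overrightarrow{K}$ of $K_n$, pick a vertex $u$ of maximum out-degree; by \cref{lowerbound1} and $n>4f+1$, $d^+(u)\ge (n-1)/2\ge 2f+1$. Let $N_1=N^+(u)$ and let $N_2$ be the second out-neighbourhood; the argument of \cref{proposition:lower-complete} again gives $V(K_n)=\{u\}\cup N_1\cup N_2$. Start the fire at $u$ and consider any strategy $(F_1,F_2,\ldots)$ with $|F_t|\le f$. The main case is $F_1\subseteq N_1$; set $N_2'=N^+(N_1\setminus F_1)\cap N_2$ and $N_3'=N_2\setminus N_2'$. The key structural observation (generalizing \cref{proposition:lower-complete}): each $x\in N_3'$ has all its $N_1$-in-neighbours in $F_1$, so $d^+_{N_1}(x)\ge |N_1|-f$; combined with $x\to u$ and $d^+(x)\le |N_1|$, this forces $d^+_{N_2\setminus\{x\}}(x)\le f-1$, and hence $x$ has at least $|N_2'|-f+1$ in-neighbours in $N_2'$. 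Provided $|N_2'|\ge 2f$, at time~$3$ at least $|N_2'|-f\ge f$ vertices of $N_2'$ burn, so every unprotected $x\in N_3'$ has a burning in-neighbour in $N_2'$ and burns at time~$4$. All out-arcs from these newly burnt vertices lead to already-burnt or protected vertices, so the fire stops at time~$5$ with at most $|F_1|+|F_2|+|F_3|\le 3f$ saved.

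The remaining cases are (i)~$F_1\not\subseteq N_1$ and (ii)~$|N_2'|<2f$. Case (i) should yield to a similar-in-spirit (but slightly more intricate) count: wasting firefighters outside $N_1$ lets more of $N_1$ burn at time~$2$, and an analogous bookkeeping of $F_2,F_3$ still gives saved~$\le 3f$. Case (ii) is the principal source of difficulty. I would split it further: if $|N_2'|\le f$, the firefighter can protect all of $N_2'$ at time~$2$ and the fire stops, giving saved~$=|F_1|+|N_2|$, so one must show $|N_2|\le 2f$. The extreme sub-case $|N_2'|=0$ is clean: the structural bound $d^+_{N_2\setminus\{x\}}(x)\le f-1$ for every $x\in N_2=N_3'$ yields $\binom{|N_2|}{2}\le |N_2|(f-1)$, giving $|N_2|\le 2f-1$ and hence saved~$\le 3f-1$. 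For $0<|N_2'|\le f$, or for intermediate $f<|N_2'|<2f$, analogous but more delicate counting combining the $\le f-1$ bound on $N_3'$ with tournament-feasibility constraints on the arcs involving $N_2'$ should force the required bound on $|N_2|$.

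The main obstacle, which I expect to account for most of the technical work, is formalizing case (ii) for all intermediate values of $|N_2'|$. My attempts to build ``bad'' orientations indicate that tournament constraints between $N_1$, $N_2'$, and $N_3'$ are more restrictive than the naive degree count suggests: for instance, requiring two vertices of $N_1$ to both have out-arcs to every vertex of $N_2$ is incompatible with $u$ having maximum out-degree, because the arc between those two vertices pushes one of their out-degrees above $\Delta^+$. Making this obstruction rigorous — show
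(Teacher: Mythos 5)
This statement is posed as a \emph{conjecture} in the paper: the authors establish only the upper bound $\overrightarrow{\beta}(K_n,f)\le n-3f$ (via \cref{observation:complete-upper} and \cref{corollary:complete}) and prove the matching lower bound only for $f=1$ (\cref{proposition:lower-complete}). Your observation that the upper bound follows from those results, including the boundary check at $n=4f+2$, is correct, and your ``main case'' ($F_1\subseteq N_1$ and $|N_2'|\ge 2f$) is a sound generalisation of the $f=1$ argument: the degree bound $d^+_{N_2\setminus\{x\}}(x)\le f-1$ for $x\in N_3'$, the conclusion that at least $|N_2'|-f\ge f$ vertices of $N_2'$ burn at time~$3$, and hence that every unprotected vertex of $N_3'$ ignites at time~$4$, all check out and give at most $|F_1|+|F_2|+|F_3|\le 3f$ saved vertices.

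However, the proposal is not a proof: cases (i) and (ii) are left open (the write-up breaks off mid-sentence), and these are precisely where the difficulty lies. Two concrete issues. First, your case~(ii) framing ``one must show $|N_2|\le 2f$'' is not quite the right target: the lower bound quantifies over the fire's choice of ignition vertex, so if the firefighter can save more than $3f$ vertices from a fire at $u$, you must exhibit a \emph{different} ignition vertex and then re-run the full strategy analysis from there. Already at $f=1$ the paper's proof is forced to do exactly this (when $N_2'=\{w\}$ and $|N_3'|>1$, the fire is restarted at $w$); your sketch only ever ignites at $u$. Second, for general $f$ the sets $F_1,F_2,\dots$ can be split across layers in many ways, so the trichotomy ``$F_1\subseteq N_1$ or not, $|N_2'|$ small or large'' multiplies into a substantial case analysis in which the ``tournament-feasibility'' constraints you allude to must be made quantitative. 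The sub-case $|N_2'|=0$ that you do close (via $\binom{|N_2|}{2}\le|N_2|(f-1)$) is encouraging, but the intermediate regime $0<|N_2'|<2f$ --- which you yourself identify as the principal obstacle --- is where a proof would have to live, and nothing in the proposal (or in the paper) settles it.
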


\subsection{Bipartite graphs}

In this section we consider bounds on $\overrightarrow{\beta}$ for bipartite graphs.
Since bipartite graphs have no cliques of size bigger than two,
\cref{proposition:lower-complete} gives only a trivial lower bound
 on $\overrightarrow{\beta}$ for these graphs.

 We first give a lower bound on $\overrightarrow{\beta}$ for complete bipartite graphs, by finding a lower bound on the maximum outdegree for any orientation of such a graph.

\begin{proposition} \label{proposition:lower-bipartite}
  For positive integers $p$ and $q$, we have $\overrightarrow{\beta}(K_{p,q}, f) \geq \frac{pq}{p+q} + 1 - f$.
  Moreover, when $f \leq \frac{pq}{p+q} -1$, we have $\overrightarrow{\beta}(K_{p,q}, f) \geq \frac{pq}{p+q} + 2 - f$.
\end{proposition}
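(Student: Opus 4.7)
The plan is to identify a vertex of large outdegree via the handshake argument of \cref{lowerbound1} and then analyse the first few rounds of the firefighting game. Let $\vec{K}$ be any orientation of $K_{p,q}$, and let $(A,B)$ denote the bipartition. By \cref{lowerbound1} there is a vertex $u$ with $d^+_{\vec{K}}(u) \geq \frac{pq}{p+q}$; without loss of generality $u \in A$ (otherwise swap the roles of $A$ and $B$), so $N^+(u) \subseteq B$.

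For the first inequality, start the fire at $u$. At time~$1$, the firefighter protects at most $f$ outneighbours of~$u$, and at time~$2$ the remaining $d^+(u) - f$ outneighbours burn. The total number of burnt vertices is therefore at least $1 + d^+(u) - f \geq \frac{pq}{p+q} + 1 - f$ (the bound is trivial when $f \geq \frac{pq}{p+q}$, since the right-hand side is then at most~$1$).

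For the second inequality, assume $f \leq \frac{pq}{p+q} - 1$, so that $d^+(u) - f \geq 1$ and the set $W := N^+(u) \setminus F_1$ of vertices burning at time~$2$ is non-empty; note that $N^+(w) \subseteq A\setminus\{u\}$ for every $w \in W$, since each edge $uw$ is already oriented $u \to w$. I aim to show that at least one further vertex burns at time~$3$, which combined with the previous counts produces a total of at least $\frac{pq}{p+q} + 2 - f$ burnt vertices. Since the firefighter can protect at most $f$ additional vertices at time~$2$, one more vertex is forced to burn at time~$3$ provided $|N^+(W)| \geq f+1$, and this is guaranteed whenever some $w^\star \in W$ satisfies $d^+(w^\star) \geq f+1$: for then $|N^+(W)| \geq |N^+(w^\star)| \geq f+1$.

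The main obstacle is the pathological case in which the firefighter's best $F_1$ leaves every $w \in W$ with $d^+(w) \leq f$. This can only happen when $N^+(u)$ contains at most $f$ vertices of outdegree $\geq f+1$, and to resolve it I would re-select the initial burning vertex: applying \cref{lowerbound1} to each of the outdegree sums $\sum_{a \in A}d^+(a)$ and $\sum_{b \in B}d^+(b)$ (which together equal $pq$) yields a vertex $u'$ on one side of the bipartition whose outneighbourhood contains more than $f$ vertices of outdegree $\geq f+1$. Running the same analysis with $u'$ as the initial burning vertex then forces the fire past time~$2$, yielding the required extra burnt vertex. Making this switch rigorous is the most delicate part of the proof, and it is precisely the hypothesis $f \leq \frac{pq}{p+q} - 1$ that rules out the pathological configuration.
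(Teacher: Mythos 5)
Your proof of the first inequality is exactly the paper's. For the second inequality, however, there is a genuine gap: the ``pathological case'' you isolate --- where, after the firefighter chooses $F_1$, every vertex of $W$ has outdegree at most $f$ --- is precisely the case you do not resolve. The re-selection step you sketch is not a proof: applying \cref{lowerbound1} to the partial sums $\sum_{a\in A} d^+(a)$ and $\sum_{b\in B} d^+(b)$ only produces a vertex whose \emph{outdegree} is at least the corresponding average; it says nothing about how many of that vertex's \emph{outneighbours} have outdegree at least $f+1$, which is the much stronger structural claim your argument needs. As written, the delicate case is simply left open, and you acknowledge as much.

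The idea you are missing is the equality clause of \cref{lowerbound1}, which makes the pathological case disappear. Split on the maximum outdegree $\Delta^+$. If $\Delta^+ > \frac{pq}{p+q}$, then the time-$2$ count from your first argument already yields strictly more than $\frac{pq}{p+q} + 1 - f$ burnt vertices, which is how the paper settles this case. If instead $\Delta^+ = \frac{pq}{p+q}$, then by \cref{lowerbound1} \emph{every} vertex has outdegree exactly $\frac{pq}{p+q} \geq f+1$; hence every vertex burning at time~$2$ has $\frac{pq}{p+q}$ outneighbours in $A\setminus\{u\}$, none of them burnt and (assuming the time-$1$ firefighters were all spent inside $N^+(u)$ --- otherwise even more vertices burn at time~$2$, which compensates) none of them protected, so the $f$ firefighters available at time~$2$ cannot cover them all and some vertex burns at time~$3$. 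In other words, your worry that some $w \in W$ has outdegree at most $f$ can only materialise when $\Delta^+ > \frac{pq}{p+q}$, where the crude first bound already does the work; no re-selection of the ignition vertex is needed.
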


\begin{proof}
  Let $\overrightarrow{K}$ be an orientation of $K_{p,q}$ with maximum outdegree $\Delta^+$.
  Since
  $\overrightarrow{K}$ has $p+q$ vertices and $pq$ arcs,
  $\Delta^+ \geq \frac{pq}{p+q}$, by \cref{lowerbound1}.
Hence, there exists some vertex $v \in V(\overrightarrow{K})$ such that $d^+(v) \geq \frac{pq}{p+q}$.  If the fire breaks out at $v$, then at least $\frac{pq}{p+q}-f$ vertices will burn at time~$2$, thus proving the first statement of the \lcnamecref{proposition:lower-bipartite}.

  If $\Delta^+ > \frac{pq}{p+q}$, then $\overrightarrow{\beta}(K_{p,q}, f) > \frac{pq}{p+q}+1-f$, satisfying the final statement of the \lcnamecref{proposition:lower-bipartite}. 
  Otherwise, $\Delta^+ = \frac{pq}{p+q}$, so, by \cref{lowerbound1}, every vertex has outdegree precisely $\frac{pq}{p+q}$.  Each of the (at least $\frac{pq}{p+q}-f$) vertices that burn at time 2 has $\frac{pq}{p+q}$ outneighbours, none of which are protected or burning prior to the arrival of the time-$2$ firefighters.  Thus, at least one such vertex burns at time $3$, provided $\frac{pq}{p+q} -f \geq 1$.
%
%
%
\end{proof}


Consider now when $f=1$.  Assume, without loss of generality, that $q \geq p$.  If $q > p(p-1)$, then \cref{proposition:lower-bipartite} implies that $\overrightarrow{\beta}(K_{p,q},1) \geq p$.  We will see, in \cref{observation:upper-bipartite}, that, for such $p$ and $q$, this bound is sharp.
However, when $q$ is much smaller than $p^2$, this bound is poor.
We now consider an improved bound when $p,q \geq 6$.
The proof is similar to that for \cref{proposition:lower-complete}, but requires a more careful case analysis.

\begin{proposition} \label{bipartiteprop}
  Let $K_{p,q}$ be a complete bipartite graph with $p,q \geq 6$.
  Then $$\overrightarrow{\beta}(K_{p,q}, 1) \geq \min\{p,q\}.$$
\end{proposition}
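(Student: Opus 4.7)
The plan is to mirror the layered second-neighbourhood argument from the proof of \cref{proposition:lower-complete}, adapted to the bipartite setting. Let $\overrightarrow{K}$ be an arbitrary orientation of $K_{p,q}$ with parts $A$ and $B$, and by symmetry assume $|A|=p \leq q = |B|$. Let $u$ be a vertex of maximum out-degree $\Delta^+$. By \cref{lowerbound1}, $\Delta^+ \geq \frac{pq}{p+q} \geq \frac{p}{2} \geq 3$. If $\Delta^+ \geq p$, starting the fire at $u$ already burns at least $\Delta^+ \geq p$ vertices (the firefighter saves at most one out-neighbour at time~$1$), so I may assume $\Delta^+ \leq p-1$, and by swapping the roles of $A$ and $B$ if needed that $u \in A$.

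Next, I would define the layers $N_1 := N^+(u) \subseteq B$ (of size $\Delta^+$), $N_2 := N^+(N_1) \subseteq A \setminus \{u\}$, and $T := A \setminus (\{u\} \cup N_2)$. The key structural observation is that every $a \in T$ satisfies $N^+(a) = N_1$: since $a \notin N_2$, no $w \in N_1$ has $w \to a$, so all arcs between $a$ and $N_1$ are oriented $a \to N_1$, giving $d^+(a) \geq |N_1| = \Delta^+$; hence $d^+(a) = \Delta^+$ and all arcs between $a$ and $B \setminus N_1$ are oriented towards $a$. In effect, vertices of $T$ are ``clones'' of $u$, and every $y \in B \setminus N_1$ satisfies $\{u\} \cup T \subseteq N^+(y)$.

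With this setup in hand, I would trace the fire started at $u$, letting $v$ and $v'$ be the firefighter's first two protections. The extremal sub-case is $v \in N_1$ and $v' \in N_2' := N^+(N_1 \setminus \{v\})$: then exactly $\Delta^+ + |N_2'| - 1$ vertices burn by the end of time~$3$, and we are done provided $|N_2'| \geq p - \Delta^+ + 1$. Otherwise $|N_2'|$ is small, which via $|N_2|+|T|=p-1$ and $|N_2 \setminus N_2'| \leq d^+(v) \leq \Delta^+$ forces $|T|$ to be substantial. I would then continue the propagation: at time~$4$ some $X \subseteq B \setminus N_1$ burns (the out-neighbours of $N_2' \setminus \{v'\}$ lying outside $N_1$), and at time~$5$, using the structural observation $T \subseteq N^+(y)$ for every $y \in B \setminus N_1$, essentially all of $T$ burns. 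This yields the counting inequality $\Delta^+ + |N_2'| + |X| + |T| - 3 \geq p$, which I would verify using $|N_2| + |T| = p-1$, $|N_1| + |B \setminus N_1| = q \geq p$, and $\Delta^+ \geq 3$.

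The main obstacle will be the boundary sub-cases in which some layer ($N_2'$, $X$, or $T$) is empty or has size~$1$, so that the firefighter can quench the $u$-fire before it reaches $T$. Following the template of \cref{proposition:lower-complete}, I would handle these by changing the source of the fire: for instance, if $|N_2'|=1$ with $N_2' = \{w\}$, then the structural observation gives $\{u\} \cup T \subseteq N^+(w)$, so starting the fire at $w$ forces it through $\{u\} \cup T$ and then onwards through $N_1 \setminus \{v\}$. The hypothesis $p,q \geq 6$ enters precisely to rule out small configurations in which several of these layers are simultaneously empty or tiny, ensuring that in every branch of the case analysis the burn count is at least $p$.
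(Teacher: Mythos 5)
Your plan follows the same route as the paper's proof: take $u$ of maximum outdegree, build the layers $N_1 = N^+(u)$, $N_2 = N^+(N_1)$, and $T = A \setminus (\{u\}\cup N_2)$ (the paper's $N_4$), use maximality of $d^+(u)$ to force every vertex of $T$ to be a sink for $B \setminus N_1$, and relocate the fire's source to handle degenerate configurations. That skeleton is right, and your structural observation about $T$ is exactly the lever the paper uses.

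However, there is a genuine gap: the entire substance of the proof is the case analysis you defer, and your central counting inequality $\Delta^+ + |N_2'| + |X| + |T| - 3 \geq p$ is not true unconditionally, so the argument does not close. Two concrete problems. First, you only track the firefighter's first two protections and assume the extremal placement is $v \in N_1$, $v' \in N_2'$; but the firefighter may instead spend both early protections in $N_2$ (or later ones in $T$, $X$, or the analogue of the paper's $Z$), and the paper must introduce the partitions $(N_i', F_i)$ with the constraint $|F_1 \cup \dotsm \cup F_i| \leq i$ precisely to quantify over all such placements — the subcase $F_1 = \emptyset$, $|F_2| = 2$ genuinely arises and needs separate treatment. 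Second, since $|N_2'| + |T| \leq p-1$, your inequality reduces to $\Delta^+ + |X| \geq 4 + |N_2 \setminus N_2'|$, which fails whenever $X$ is empty or $N_2 \setminus N_2'$ is large; these are not fringe cases but the main difficulty. The paper resolves them with several pages of analysis requiring ideas absent from your sketch — for instance, when the analogue of your $X$ (the paper's $Z$) is nonempty and $N_3' = \emptyset$, one needs the observation that each vertex of $Y$ has at most one outneighbour in $Z$ (again by maximality of $d^+(u)$), yielding $|Y| \geq |Z|$ and a matching-type structure that is then exploited by igniting a carefully chosen vertex $z'$. Asserting that ``$p,q \geq 6$ rules out small configurations'' does not substitute for exhibiting, in each branch, a source vertex and a count of at least $\min\{p,q\}$; as written, the proposal is a correct strategy statement rather than a proof.
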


\begin{proof}
  Let $(P,Q)$ be the bipartition of $K_{p,q}$ with $|P|=p$ and $|Q|=q$.
  Let $\overrightarrow{K}$ be an orientation of $K_{p,q}$ and let $u$ be a vertex with maximum outdegree.  Without loss of generality, let $u$ be in $P$.
  We now consider the $i$th outneighbourhood $N_i$ of $u$ in $\overrightarrow{K}$, for each $i$.
  Let $N_1$ be the set of outneighbours of $u$, so each arc incident with $u$ is oriented away from $u$ if and only if its other end is in $N_1$. 
  Let $N_2$ be the subset of $P \setminus \{u\}$ consisting of vertices with an incoming arc from a vertex in $N_1$.
  Let $N_3$ be the subset of $Q \setminus N_1$ consisting of vertices with an incoming arc from a vertex in $N_2$.
  Every vertex $v$ in $P \setminus (N_2 \cup \{u\})$ has arcs towards each vertex of $N_1$, otherwise $v$ would be in $N_2$.
  Since $u$ has maximum outdegree, all other arcs incident with $v$ are oriented towards $v$.  So let $N_4 = P \setminus (N_2 \cup \{u\})$ and observe that all arcs between a vertex in $Q \setminus N_1$ and a vertex in $N_4$ are oriented towards the vertex in $N_4$.  
  Suppose $Q \setminus (N_1 \cup N_3)$ is non-empty, and let $v$ be a vertex in this set.  Then $v$ has every vertex in $P$ as an outneighbour, so a fire starting at $v$ will burn at least $p$ vertices, satisfying the \lcnamecref{bipartiteprop}.  So we may assume that $Q=N_1 \cup N_3$.
%
  This situation is illustrated in \cref{bipartite-layering}.

  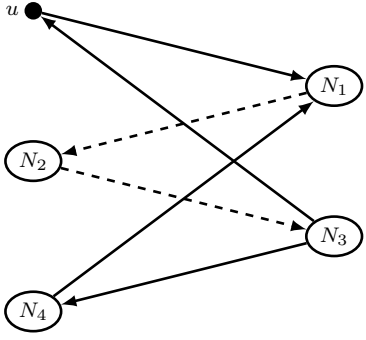
\begin{figure}
    \centering
    \begin{tikzpicture}[inner sep=0.7mm]
        \node[draw, circle, line width=1pt, fill=black](u) at (0,2)[label=left:{\scriptsize $u$}]{};
        \node[draw, ellipse, line width=1pt](n1) at (4,1) {\scriptsize $N_1$};
        \node[draw, ellipse, line width=1pt](n2) at (0,0) {\scriptsize $N_2$};
        \node[draw, ellipse, line width=1pt](n4) at (0,-2) {\scriptsize $N_4$};
        \node[draw, ellipse, line width=1pt](n3) at (4,-1) {\scriptsize $N_3$};
        
        \draw[-latex,line width=1pt] (u) -- (n1);
        \draw[-latex,line width=1pt, dashed] (n1) -- (n2);
        \draw[-latex,line width=1pt, dashed] (n2) -- (n3);
        \draw[-latex,line width=1pt] (n3) -- (u);
        \draw[-latex,line width=1pt] (n3) -- (n4);
        \draw[-latex,line width=1pt] (n4) -- (n1);
    \end{tikzpicture}
    \caption{A partial orientation of $K_{p,q}$, as in the proof of \cref{bipartiteprop}. A solid arrow signifies the direction of all arcs between vertices in the two subsets, whereas a dashed arrow indicates the existence of an arc with the given orientation between the two subsets. }
    \label{bipartite-layering}
  \end{figure}

  We may assume that $N_3 \neq \emptyset$, otherwise if a fire starts at $u$, then $q$ vertices will be burning at time~$2$, satisfying the \lcnamecref{bipartiteprop}.
  Since $p,q \geq 6$, \cref{lowerbound1} implies that $|N_1| \geq 3$.
  If $|N_2| \leq 2$, then $N_4 \neq \emptyset$, and a fire starting at a vertex in $N_3$, say $w$, will spread to all but at most one vertex of $N_4 \cup \{u,w\}$ at time~$2$, and all but at most two vertices of $N_1 \cup N_4 \cup \{u,w\}$ at time~$3$.  Since $|N_1| \geq 3$, at least $p$ vertices burn, as required.  So we may assume that $|N_2| \geq 3$.

  We now deduce further structure by considering when the fire starts at $u$.
  In what follows, when we say that $(X,Y)$ is a \emph{partition} of a set $Z$, the sets $X$ and $Y$ need not be non-empty.
  Let $(N_1',F_1)$ be a partition of $N_1$, 
  let $N_2''$ be the set of outneighbours of $N_1'$,
  and let $(N_2',F_2)$ be a partition of 
  $N_2''$.
  Note that $N_2'' \subseteq N_2$.
  Let $N_3''$ be the set of outneighbours of $N_2'$ in $N_3$, and let $(N_3',F_3)$ be a partition of $N_3''$.
  Also, let $Z = N_3  \setminus N_3''$, so each arc between $Z$ and $N_2'$ is towards $N_2'$.
  Finally, let $N_4''$ be the set of outneighbours of $N_3'$ in $P \setminus (\{u\} \cup N_2'')$, let $Y$ be the remaining vertices in $P$, 
 and let $(N_4', F_4)$ be a partition of $N_4''$.
  We illustrate this situation in \cref{bipartite-layering2}.
  The vertices in $F_1 \cup F_2 \cup F_3 \cup F_4$ represent vertices that are protected in the first $4$ time units if the fire starts at $u$.  So $|F_1 \cup \dotsm \cup F_i| \leq i$ for $i \in \{1,2,3,4\}$.

  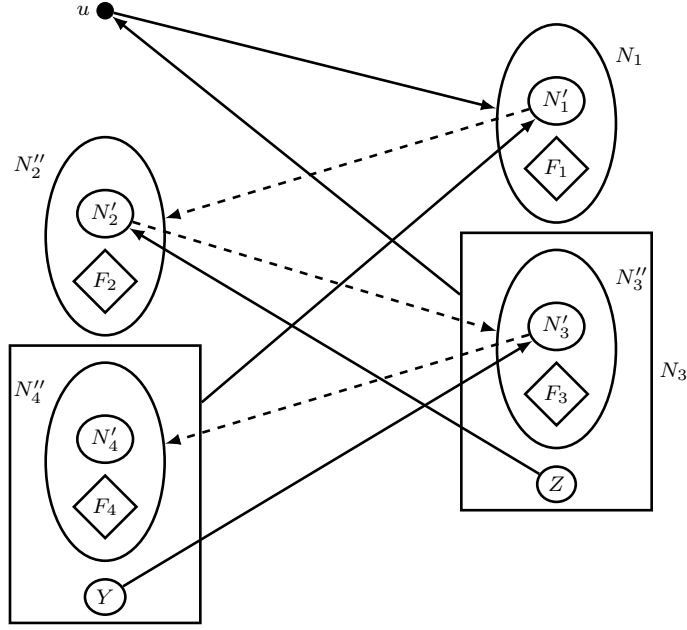
\begin{figure}
    \centering
    \begin{tikzpicture}[inner sep=0.7mm, line width=1pt, scale=1.5]
        \node[draw, circle, fill=black](u) at (0,2)[label=left:{\scriptsize $u$}]{};
        \node[draw, ellipse,minimum width=45pt,minimum height=75pt](n1) at (4,1)[label=above right:{\scriptsize $N_1$}]{};
        \node[draw, ellipse](nn1) at (4,1.2) {\scriptsize $N_1'$};
        \node[draw, diamond](f1) at (4,0.6) {\scriptsize $F_1$};
        \node[draw, ellipse,minimum width=45pt,minimum height=75pt](n2) at (0,0)[label=above left:{\scriptsize $N_2''$}] {};
        \node[draw, ellipse](nn2) at (0,0.2) {\scriptsize $N_2'$};
        \node[draw, diamond](f2) at (0,-0.4) {\scriptsize $F_2$};
        \node[draw, ellipse,minimum width=45pt,minimum height=75pt](nf3) at (4,-1)[label=above right:{\scriptsize $N_3''$}] {};
        \node[draw, ellipse](nn3) at (4,-0.8) {\scriptsize $N_3'$};
        \node[draw, diamond](f3) at (4,-1.4) {\scriptsize $F_3$};
        \node[draw, ellipse,minimum width=45pt,minimum height=75pt](nf4) at (0,-2.0)[label=above left:{\scriptsize $N_4''$}] {};
        \node[draw, ellipse](nn4) at (0,-1.8) {\scriptsize $N_4'$};
        \node[draw, diamond](f4) at (0,-2.4) {\scriptsize $F_4$};
        \node[draw, rectangle,minimum width=72pt,minimum height=105pt](n3) at (4,-1.2)[label=right:{\scriptsize $N_3$}] {};
        \node[draw, rectangle,minimum width=72pt,minimum height=105pt](x3) at (0,-2.2) {};
        \node[draw, ellipse](y) at (0,-3.2) {\scriptsize $Y$};
        \node[draw, ellipse](z) at (4,-2.2) {\scriptsize $Z$};
        
        \draw[-latex] (u) -- (n1);
        \draw[-latex, dashed] (nn1) -- (n2);
        \draw[-latex, dashed] (nn2) -- (nf3);
        \draw[-latex] (n3) -- (u);
        \draw[-latex, dashed] (nn3) -- (nf4);
        \draw[-latex] (y) -- (nn3);
        \draw[-latex] (z) -- (nn2);
        \draw[-latex] (x3) -- (nn1);
    \end{tikzpicture}
    \caption{A partial orientation of $K_{p,q}$, as in the proof of \cref{bipartiteprop}, taking into account vertices that are protected in the first four time intervals when the fire starts at $u$.}
    \label{bipartite-layering2}
  \end{figure}

  Since $|N_1| \geq 3$ and $|F_1| \leq 1$, we have $|N_1'| \geq 2$.
  If $|N_1'| = 2$, then $\Delta^+ = 3$, and it follows that $p=q=6$ and $d^+(v)=3$ for each vertex $v$, by \cref{lowerbound1}.  But in this case, it is easily verified that $|N_2'| \geq 2$ and $|N_3'| \geq 1$, implying that at least $p=q=6$ vertices burn, as required, when a fire starts at $u$.  So we may assume that $|N_1'| \geq 3$.

%
  If a fire starts at $u$,
  then at time~$5$ all vertices in $\{u\} \cup N_1' \cup N_2' \cup N_3' \cup N_4'$ burn (where $N_i'$ may be empty, for some $i \in \{2,3,4\}$, but then $N_i' \cup N_{i+1}' \cup \dotsm \cup N_4' = \emptyset$ by definition).
  If $Y=\emptyset$,
  then this is at least $p + |N_1' \cup N_3'| - |F_2 \cup F_4| \geq p$ vertices, satisfying the \lcnamecref{bipartiteprop}.
%
  So we henceforth assume that $Y \neq \emptyset$. 
  In particular, observe that if
  $|N_3'| > |F_1|$, then any vertex in $Y$ has outdegree more than $u$, so $Y = \emptyset$.
So it remains to consider when $|N_3'| \leq |F_1| \leq 1$; 
the remainder of the proof is dedicated to handling 
this case.

\medskip

First, we show that $|F_1 \cup N_2''| \geq 3$.
If $F_1 = \emptyset$, then $N_2'' = N_2$, which has size at least $3$, satifying the claim.  So assume that $F_1 \neq \emptyset$.
Towards a contradiction, suppose that $|N_2''| \leq 1$.
Recall that the only vertices in $P \setminus \{u\}$ with incoming arcs from $N_1'$ are in $N_2''$. 
The vertices in $P \setminus (\{u\} \cup N_2'')$ can be partitioned into $(N_2 \setminus N_2'', N_4)$, where $|N_2 \setminus N_2''| \geq 2$, since $|N_2| \geq 3$.
We first consider a fire that breaks out at the vertex in $F_1$.
Let $P'$ be a subset of $N_2 \setminus N_2''$ of size at least $|N_2 \setminus N_2''|-1$.  Let $Q'$ be the subset of vertices in $N_3$ that have outgoing arcs towards every vertex in $P'$.  
If $Q' = \emptyset$, then if a fire starts at the vertex in $F_1$, it spreads to $P'$ at time~$2$, since if a vertex in $N_2 \setminus N_2''$ is protected, we may assume it is not in $P'$.  At time~$3$, the fire spreads to all unprotected vertices in $N_3$, since $Q'=\emptyset$ implies that every vertex of $N_3$ is reachable from $P'$; and to unprotected vertices in $N_1'$, since $P'$ is non-empty and all arcs between $P'$ and $N_1'$ are towards $N_1'$.
If the first firefighter is positioned in $P$, then at least $q-1 + |P'|$ vertices burn; otherwise, $q-2 + |N_2 \setminus N_2''|$ vertices burn; in either case $q$ vertices burn as required.
%
Now we may assume that $Q' \neq \emptyset$. Consider a fire starting at $q$ in $Q'$.  At time~$2$, all unprotected vertices in $\{u\} \cup P' \cup N_4$ burn, 
where $|\{u\} \cup P' \cup N_4| \geq p-2$.
At time~$3$, all unprotected vertices in $N_1'$ burn, since vertices in $N_1'$ have incoming arcs from both $u$ and the non-empty set $P'$.  
Hence, a total of at least $1 + (p-2) + 3 - 2 \geq p$ vertices are burning at this time.
So we may assume that $|F_1 \cup N_2''| \geq 3$, and, in particular, that $N_2' \neq \emptyset$.

\medskip

If $Z = \emptyset$, then a fire starting at $u$ burns $q - |F_1 \cup F_3| + |\{u\} \cup N_2'|$ vertices.
This value is at least $q$, since 
$N_2' \neq \emptyset$, and 
when $|N_2'| = 1$, then $|F_1 \cup N_2''| \geq 3$ implies that $|F_2| \geq 1$ so $|F_1 \cup F_3| \leq 2$. 
So we may now assume that $Z \neq \emptyset$.

\medskip

Suppose that $|N_3'| = |F_1| =1$.
Recall
that $Y$ is non-empty.
Since any vertex in $Y$ has every vertex in $N_1' \cup N_3'$ as an outneighbour, all arcs between $Z$ and $Y$ are towards $Y$.
Let $(Z_1,Z_2)$ be the partition of $Z$ such that vertices in $Z_1$ are outneighbours of some vertex in $N_4'$, whereas all arcs incident with a vertex in $Z_2$ and a vertex in $N_4'$ are oriented towards the vertex in $N_4'$.
Suppose that $Z_2$ is non-empty, and consider a fire starting at a vertex in $Z_2$.  Then all but at most one vertex of $\{u\} \cup N_2' \cup N_4' \cup Y$ burns at time~$2$, so a total of at least $p-|F_2 \cup F_4|$ vertices are burning at this time.  
At time~$3$, the fire will spread to
the vertex in $N_3'$, if unprotected, since either $Y$ or $N_2'$ is unprotected,
as well as unprotected vertices of $N_1'$, since $|\{u\} \cup Y| \geq 2$.
Hence, 
at least $p- |F_2 \cup F_4| + |N_1'|$ vertices burn.  Since $|N_1'| \geq 3 \geq |F_2 \cup F_4|$, this is at least $p$ vertices, as required. 

So we may assume that $Z_2$ is empty.  Thus $Z=Z_1$, and this set is non-empty.
If $N_4' = \emptyset$, then, 
as in the previous paragraph, at least $p-|F_2 \cup F_4| + |N_1'| \geq p$ vertices burn when a fire starts at a vertex in $Z$.
So assume that $N_4' \neq \emptyset$.
Now, a fire starting at $u$ spreads to $N_1' \cup N_2' \cup N_3' \cup N_4'$ at time $5$, and all unprotected vertices in $Z=Z_1$ at time $6$, so at least $q-|F_1 \cup F_3| + |N_2' \cup N_4'| $ vertices burn.  This value is at least $q$, because $N_2'$ and $N_4'$ are non-empty, and when $|F_1 \cup F_3| = 3$, then $|N_2'| \geq 2$, since $F_2 = \emptyset$.
So the \lcnamecref{bipartiteprop} holds when $|N_3'| = 1$.

\medskip

Suppose that $N_3' = \emptyset$.  
%
Then, by definition, $N_4'' = \emptyset$.
Recall that $Y,Z \neq \emptyset$.
%
Let $z$ be a vertex in $Z$.
If $z$ has arcs towards every vertex in $Y$, then a fire starting at $z$ spreads to at least $p-|F_2|$ vertices by the end of time~$2$, and, since $|Y| \geq 1$, it spreads to unprotected vertices in $N_1'$ at time~$3$; so $p$ vertices burn as required. 
Thus, for each $z$ in $Z$, there exists a vertex $y$ in $Y$ such that there is an arc from $y$ towards $z$.
Moreover, every vertex in $Y$ has at most one outneighbour in $Z$, since $u$ has maximum outdegree.
It follows that $|Y| \geq |Z|$, and that there exists a subset $Y'$ of $Y$ with $|Y'| = |Z|$ such that each $y$ in $Y'$ has precisely one outneighbour in $Z$, and for distinct $y_1,y_2$ in $Y'$, their outneighbours in $Z$ are distinct.
Note, in particular, that each vertex in $Z$ has $|Z|-1$ outneighbours in $Y'$.


Suppose that $|Z| \geq 2$.  We first consider when $|Y| > |Z|$.  Then there exists some vertex $y'$ in $Y \setminus Y'$.  Since $y'$ has at most one outneighbour in $Z$, there is an arc from some vertex $z'$ in $Z$ to $y'$.  So $z'$ has at least $|Z|$ outneighbours in $Y$; we denote these outneighbours by $N_Y^+(z')$.
%
%
Suppose that a fire starts at $z'$.
Then, at time~$2$, all unprotected vertices in $N_2' \cup \{u\} \cup N_Y^+(z')$ burn.
At time~$3$, unprotected vertices in either $N_1' \cup F_1$ (if $u$ was not protected at time~$1$) or $N_1' \cup F_3$ (otherwise) burn.
%
By the end of time~$3$, at least
$|N_1' \cup Z \cup F'| + |N_2'| + |\{u,z'\}| -2 = q- |F| + |N_2'|$ vertices burn, where
$\{F,F'\} = \{F_1,F_3\}$.
Evidently $|N_2'| \geq |F_1|$; and $3 \leq |F_1 \cup F_2| + |N_2'| \leq 3 - |F_3| + |N_2'|$, so $|N_2'| \geq |F_3|$.
Hence at least $q$ vertices burn 
when $|Y| > |Z|$. 
If $|Y|=|Z|$, 
then, by a similar argument, at least 
$|N_1' \cup F| +  |Z| + |N_2'| -1$
vertices burn, where $F \in \{F_1, F_3\}$.  
Since $|Y|=|Z|$ and $|N_1'| \geq 3$, this value is at least $p - |F_2| + |F| + 1$, so we get the desired result when $|F| \geq |F_2|-1$.  
In the exceptional case, $|F_2|=2$, so $F_1=\emptyset$.
But then
$N_2'' = N_2$, so all arcs between vertices in $Y \subseteq N_4$ and $Z \subseteq N_3$ are towards $Y$; a contradiction.

Finally, suppose $|Z|=1$.  Let $Z=\{z\}$ and let $y$ be the vertex in $Y$ that has $z$ as an outneighbour.
A fire starting at $u$ burns all vertices in $\{u\} \cup N_2' \cup (Q \setminus (F_1\cup F_3 \cup \{z\}))$ by the end of time~$2$, so at least $q$ vertices burn when $|N_2'| \geq |F_1 \cup F_3|$.
So we may assume that $|N_2'| < |F_1 \cup F_3|$.
Then
either $|N_2'| = 2$ and $F_2 = \emptyset$; or $|N_2'|=1$, in which case 
$|F_2|=1$.
Thus $|N_2''|=2$, so $|Y| \geq 3$.
%
Suppose there exists $y' \in Y \setminus \{y\}$ such that $z$ is an outneighbour of $y'$.  Since $u$ has maximum outdegree, arcs between $y$ or $y'$ and $F_3$ are oriented away from $F_3$.
Now a fire starting at a vertex in $F_3$ burns $q - |F_1 \cup F_3| +2$ vertices by the end of time~$2$.
If $F_2 \neq \emptyset$, this is at least $q$ burning vertices, as required; otherwise, unprotected vertices in $N_2'$ burn at time~$3$, so again at least $q$ vertices burn as required.
So we may assume that all arcs between $z$ and $Y \setminus \{y\}$ are oriented away from $z$, where $|Y \setminus \{y\}| \geq 2$.
Now, if a fire starts at $z$, then at least $p-|F_2 \cup \{y\}|$ vertices burn by the end of time~$2$, and then the fire spreads to unprotected vertices of $N_1'$ at time~$3$.  So at least $p$ vertices burn.
This completes the proof.
\end{proof}

%
This proposition implies a lower bound on $\overrightarrow{\beta}$ for any graph containing a complete bipartite subgraph.
Moreover, it
shows that any family of graphs containing graphs with arbitrarily large complete bipartite subgraphs has no constant upper bound for $\overrightarrow{\beta}$.

We now consider an upper bound for bipartite graphs.

\begin{observation} \label{observation:upper-bipartite}
For every bipartite graph $G$ with bipartition $(A, B)$, $$\overrightarrow{\beta}(G, f) \leq 1+ \min\{\Delta(A),  \Delta(B)\} - f.$$
\end{observation}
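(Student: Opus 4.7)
The plan is to exhibit a single orientation of $G$ (chosen before the fire breaks out) for which any ignition vertex leads to at most $1 + \min\{\Delta(A), \Delta(B)\} - f$ burnt vertices. Without loss of generality, assume $\Delta(A) \leq \Delta(B)$, so that $\min\{\Delta(A),\Delta(B)\} = \Delta(A)$.

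The orientation I would use is the ``bipartite'' one: orient every edge of $G$ from its endpoint in $A$ to its endpoint in $B$. Call this oriented graph $\overrightarrow{G}$. Then every vertex in $B$ has outdegree $0$ in $\overrightarrow{G}$, while every vertex $a \in A$ has outdegree $d_G(a) \leq \Delta(A)$.

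For the firefighting strategy, consider where the fire starts. If the fire breaks out at some $v \in B$, it cannot propagate at all, so exactly one vertex burns. If the fire breaks out at some $v \in A$, then $v$ has at most $\Delta(A)$ outneighbours, all lying in $B$; we use the $f$ firefighters at time~$1$ to protect $f$ of these outneighbours (protecting arbitrary vertices if $d^+(v) < f$). The remaining at most $\Delta(A) - f$ outneighbours burn at time~$2$, but since each of them lies in $B$ and therefore has outdegree~$0$, the fire halts. In total, at most $1 + \Delta(A) - f = 1 + \min\{\Delta(A),\Delta(B)\} - f$ vertices burn. Taking the maximum over all possible ignition vertices gives the claimed bound on $\beta(\overrightarrow{G},f)$, which in turn upper-bounds $\overrightarrow{\beta}(G,f)$.

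There is essentially no obstacle here: the entire argument is a one-line construction plus a case analysis on which side of the bipartition contains the ignition vertex. The only delicate point is the choice of which side to orient away from, which is resolved by the symmetry in $\min\{\Delta(A),\Delta(B)\}$; if instead $\Delta(B) < \Delta(A)$, one orients every edge from $B$ to $A$ and the analysis is identical.
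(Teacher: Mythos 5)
Your proposal is correct and follows exactly the paper's argument: orient all edges from the side of smaller maximum degree towards the other side, so the latter consists of sinks, and a fire starting on the source side is extinguished after one round of propagation. Nothing to add.
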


\begin{proof}
Assume that $\Delta(A) \leq \Delta(B)$ and consider an orientation $\overrightarrow{G}$ of $G$ where all arcs are oriented from $A$ to $B$. Note that if the fire breaks out at some vertex of $B$, then it cannot propagate to other vertices of $G$. Now if the fire breaks out at some vertex $u$ in $A$, then, assuming the firefighters protect $f$ outneighbours of $u$ at time~$1$, at most $\Delta(A)-f$ new vertices will burn at time~$2$. However, the fire will not be able to propagate further, so at most $1+\Delta(A)-f$ vertices burn.
\end{proof}

When $f=1$, Propositions~\ref{proposition:lower-bipartite} and \ref{bipartiteprop} imply that the strategy described in the proof of \cref{observation:upper-bipartite} 
is optimal for $K_{p,q}$ with $q > p(p-1)$ or $\min\{p,q\} \geq 6$.

\begin{theorem} 
For all $p,q \geq 6$, and for any $p\geq 1$ and $q > p(p-1)$,
$$\overrightarrow{\beta}(K_{p,q},1) = \min\{p,q\}.$$
\end{theorem}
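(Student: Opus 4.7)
The plan is to combine the upper bound from \cref{observation:upper-bipartite} with matching lower bounds supplied by \cref{bipartiteprop} or the first part of \cref{proposition:lower-bipartite}, depending on which hypothesis is in effect. Essentially, all of the real work has already been done in those three results, and what remains is to glue them together and verify one small arithmetic inequality to account for integrality.

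For the upper bound I will instantiate \cref{observation:upper-bipartite} on $K_{p,q}$. Every vertex in the part of size $p$ has degree $q$, and every vertex in the part of size $q$ has degree $p$, so the relevant minimum of the two maximum degrees is $\min\{p,q\}$. Setting $f=1$ yields $\overrightarrow{\beta}(K_{p,q},1)\leq \min\{p,q\}$, and this step is valid under either hypothesis.

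For the lower bound, I will split into two cases matching the two hypotheses. In the case $p,q\geq 6$, \cref{bipartiteprop} immediately gives $\overrightarrow{\beta}(K_{p,q},1)\geq \min\{p,q\}$. In the case $q>p(p-1)$, I may assume $q\geq p$ without loss of generality (by swapping the roles of $p$ and $q$ via $K_{p,q}\cong K_{q,p}$), so $\min\{p,q\}=p$. The first inequality in \cref{proposition:lower-bipartite} with $f=1$ gives
\[
\overrightarrow{\beta}(K_{p,q},1)\geq \frac{pq}{p+q}.
\]
Since $\overrightarrow{\beta}$ is integer-valued, to conclude $\overrightarrow{\beta}(K_{p,q},1)\geq p$ it suffices to verify the strict inequality $\frac{pq}{p+q}>p-1$; clearing denominators and simplifying yields $q>p(p-1)$, which is exactly our hypothesis.

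The only real subtlety is this last integrality step: the raw real-valued bound $\frac{pq}{p+q}$ is never quite equal to $p$, but the hypothesis $q>p(p-1)$ is tailored to push it strictly above $p-1$, at which point the fact that $\overrightarrow{\beta}$ counts vertices pulls it up to $p$. I expect this to be the only point in the write-up that requires more than a pointer to a prior result; everything else is a direct appeal.
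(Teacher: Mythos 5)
Your proposal is correct and follows essentially the same route as the paper, which likewise combines the upper bound from \cref{observation:upper-bipartite} with the lower bounds from \cref{bipartiteprop} (when $p,q\geq 6$) and from \cref{proposition:lower-bipartite} plus integrality (when $q>p(p-1)$); your arithmetic $\frac{pq}{p+q}>p-1\iff q>p(p-1)$ checks out. One cosmetic remark: your ``without loss of generality, $q\geq p$'' is not really a symmetry argument, since the hypothesis $q>p(p-1)$ is not symmetric in $p$ and $q$, but this is harmless because that hypothesis (with $p\geq 1$) already forces $q\geq p$.
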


\noindent In general, however, the strategy in the proof of \cref{observation:upper-bipartite} may not be optimal, even for complete bipartite graphs. 
For example, for $K_{2,2}$, it follows from \cref{observation:upper-bipartite} that $\overrightarrow{\beta}(K_{2,2}, 1) \leq 2$. But $K_{2,2}$ admits an orientation with maximum outdegree~$1$; hence $\overrightarrow{\beta}(K_{2,2}, 1) = 1$. 
More generally, a cyclic orientation can be used on $K_{p,p}$, similar to that used for complete graphs in the proof of \cref{observation:complete-upper}, to ensure that strictly fewer than $p$ vertices burn when $f \geq \frac{p-1}{3}$.
We conjecture the following:

\begin{conjecture}
  For each $f,p,q \geq 1$ with $\min\{p,q\} > 3f+1$, $$\overrightarrow{\beta}(K_{p,q},f) = 1+\min\{p,q\}-f.$$
\end{conjecture}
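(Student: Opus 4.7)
The upper bound $\overrightarrow{\beta}(K_{p,q},f) \leq 1 + \min\{p,q\} - f$ is immediate from \cref{observation:upper-bipartite} applied with $f$ firefighters, so the task is to establish the matching lower bound. My plan is to extend the case-analysis approach from the proof of \cref{bipartiteprop} (the $f = 1$ case) to general $f$. Fix an arbitrary orientation $\overrightarrow{K}$ of $K_{p,q}$, and assume without loss of generality that $p \leq q$. Let $u$ be a vertex of maximum outdegree $\Delta^+$. By \cref{lowerbound1}, $\Delta^+ \geq pq/(p+q) \geq p/2$, and the hypothesis $p > 3f+1$ ensures that $\Delta^+ > (3f+1)/2 > f$, so the firefighters cannot stop the fire by protecting $N^+(u)$ at time~$1$.

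First I would split on which part contains $u$. If $u \in Q$, one shows that either $d^+(u) \geq p$ (so a fire at $u$ already burns at least $1 + p - f$ vertices) or the deficit can be recovered by starting the fire at a specially chosen vertex of $P$. In the main case $u \in P$, I would define the layered outneighborhoods $N_1 \subseteq Q$, $N_2 \subseteq P \setminus \{u\}$, $N_3 \subseteq Q \setminus N_1$, and $N_4 = P \setminus (\{u\} \cup N_2)$ exactly as in \cref{bipartiteprop}. The maximum-outdegree property of $u$ immediately gives $Q = N_1 \cup N_3$, because any vertex in $Q \setminus (N_1 \cup N_3)$ would have all of $P$ as outneighbours, so the fire starting there burns at least $1 + p - f$ vertices. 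I would then partition each $N_i$ as $N_i' \cup F_i$, where $F_i$ records the (at most $f$) vertices protected by the firefighters at time~$i$ (so $|F_1 \cup \dotsm \cup F_i| \leq if$). A fire started at $u$ burns $\{u\} \cup N_1' \cup N_2' \cup N_3' \cup N_4'$ by time~$5$, and the goal is to show this quantity is at least $1 + p - f$.

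The main obstacle is the combinatorial explosion in the case analysis. Even the $f=1$ proof threads through numerous configurations (sizes of $|Z|$, $|N_3'|$, $|N_4'|$, whether $Y$ is empty), and with general $f$ each of these parameters ranges over a wider set of values. My plan to tame the blow-up is a uniform charging argument: each firefighter deployed at time~$i$ saves at most one vertex from the $i$-th outneighborhood reached by the fire, so cumulative savings after reaching the $k$-th layer total at most $kf$, which must be made strictly smaller than $\sum_{i \leq k}|N_i|$ using \cref{lowerbound1} applied locally to the bipartite subgraphs induced on successive pairs of layers. When the fire started at $u$ fails to burn enough, I would select an alternative starting vertex $w$ in $N_3$ or $N_4$ (analogous to the role of $z \in Z$ in \cref{bipartiteprop}), for which two-step propagation reaches most of $P$. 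The hypothesis $\min\{p,q\} > 3f+1$ is essential: as noted in the paper, a cyclic-style orientation of $K_{p,p}$ already saves strictly fewer than $p$ vertices once $f \geq (p-1)/3$, so this threshold is tight and the proof must invoke the inequality $p > 3f+1$ in nearly every subcase to make the counts close.
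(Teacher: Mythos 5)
This statement is a \emph{conjecture} in the paper: the authors prove the matching upper bound (via \cref{observation:upper-bipartite}) and establish the lower bound only for $f=1$ with $p,q\ge 6$ (\cref{bipartiteprop}) or $q>p(p-1)$ (\cref{proposition:lower-bipartite}); the general-$f$ lower bound is left open. What you have written is a plan for a proof, not a proof: every load-bearing step is introduced with ``I would'' or ``one shows'', and the decisive device --- the ``uniform charging argument'' meant to collapse the case analysis --- is never actually carried out. As stated, that device does not work. The difficulty in the $f=1$ proof is not merely that firefighters save $f$ vertices per round; it is that a protected vertex is removed as a \emph{conduit}, so the reachable portion of each subsequent layer shrinks (this is exactly the distinction between $N_2$ and $N_2''=N^+(N_1\setminus F_1)$, and between $N_3$ and $N_3''$, that forces the paper to introduce $Z$, $Y$, $Y'$ and to repeatedly switch the fire's starting vertex). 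A per-layer count of the form ``savings $\le kf$ versus $\sum_{i\le k}|N_i|$ via \cref{lowerbound1}'' ignores this cascading effect entirely and also does not close numerically: with $p\le q$ one only gets $|N_1|\ge pq/(p+q)\ge p/2>(3f+1)/2$, so $|N_1'|=|N_1\setminus F_1|$ can be as small as roughly $(f+2)/2$, while the firefighters may remove up to $2f$ vertices from $P$ by time $4$; the slack that makes the $f=1$ bookkeeping succeed (where $|N_1'|\ge 3$ comfortably dominates $|F_2\cup F_4|\le 2$) degrades as $f$ grows, which is precisely why the authors could not simply scale their argument.

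Two further concrete problems. First, even restricted to $f=1$ your claimed hypothesis $\min\{p,q\}>4$ covers $\min\{p,q\}=5$, a case the paper's \cref{bipartiteprop} does not handle (it needs $p,q\ge 6$ to get $|N_1|\ge 3$), so your plan would have to strengthen the known base case before generalising it. Second, the case $u\in Q$ is dismissed with ``the deficit can be recovered by starting the fire at a specially chosen vertex of $P$,'' but no such vertex is exhibited and no argument is given; the paper's ``without loss of generality $u\in P$'' works only because its subsequent case analysis is symmetric in $p$ and $q$ (it shows at least $p$ \emph{or} at least $q$ vertices burn in each branch), and you would need to verify that the same symmetry survives your modifications. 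In short, the proposal identifies the right starting point (layered outneighbourhoods of a maximum-outdegree vertex, alternative ignition points) but leaves the entire substance of the lower bound unproved; the conjecture remains open.
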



\section{Firefighting in graphs with particular properties} \label{section:invariants}

In this section, we describe several strategies for deducing upper bounds on $\overrightarrow{\beta}$. 
In each case we obtain these bounds by exploiting the value of some graph invariant.


\subsection{Graph classes with bounded chromatic number} \label{section:chromatic}

Given an undirected graph $G$, a \textit{proper $k$-vertex-colouring of $G$} is a partition $(V_1, V_2, \dotsc, V_k)$ of $V(G)$ such that $V_i$ is a stable set for each $i \in \{1,2,\dotsc,k\}$. The least $k$ such that $G$ has a proper $k$-vertex-colouring is called the \textit{chromatic number} of $G$, and is denoted $\chi(G)$.

In the next proposition we give an upper bound on 
$\overrightarrow{\beta}$%
, given a graph $G$, in terms of the maximum degree and the chromatic number of $G$.

\begin{proposition} \label{proposition:bound-colorable}
For a graph $G$ with maximum degree $\Delta$,

\begin{enumerate}
  \item[\normalfont{(i)}] $\overrightarrow{\beta}(G, f \geq \Delta) = 1$, and
    	
  \item[\normalfont{(ii)}] $\overrightarrow{\beta}(G, f) \leq \Delta^{\chi(G)}$ for $1 \leq f < \Delta$.
\end{enumerate}
\end{proposition}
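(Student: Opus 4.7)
The plan is to handle the two parts separately, using very different ideas.

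For part (i), I would simply invoke \cref{observation:outdegree}(i). Any orientation of $G$ has maximum outdegree at most $\Delta$, so for any orientation $\overrightarrow{G}$ of $G$ and any starting vertex, using $f \geq \Delta$ firefighters at time~$1$ suffices to protect every outneighbour of the initially burning vertex. Hence $\overrightarrow{\beta}(G,f) \leq \beta(\overrightarrow{G}, f) = 1$; and the inequality is clearly an equality since at least the initial vertex burns.

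For part (ii), the key idea is that a proper colouring induces an acyclic orientation of bounded directed diameter. Fix a proper $\chi(G)$-vertex-colouring $(V_1, V_2, \dotsc, V_k)$ of $G$, where $k = \chi(G)$, and let $\overrightarrow{G}$ be the orientation obtained by orienting every edge $uv$ with $u \in V_i$, $v \in V_j$, $i < j$, from $u$ towards $v$. Since each colour class is a stable set, every edge receives an orientation, and any directed path in $\overrightarrow{G}$ visits colour classes in strictly increasing order, so its length is at most $k-1$. Therefore, if a fire breaks out at any vertex, it can propagate for at most $k$ time units before halting.

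Now I would bound the number of burnt vertices crudely, even ignoring the help from firefighters. Since the maximum outdegree of $\overrightarrow{G}$ is at most $\Delta$, the number of vertices that start burning at time~$t$ is at most $\Delta$ times the number that started burning at time~$t-1$; hence at most $\Delta^{t-1}$. Summing over $t = 1, 2, \dotsc, k$ and noting that $\Delta \geq 2$ (since $1 \leq f < \Delta$), the total number of burnt vertices is at most
\[
\sum_{t=1}^{k} \Delta^{t-1} \;=\; \frac{\Delta^{k} - 1}{\Delta - 1} \;\leq\; \Delta^{k} - 1 \;<\; \Delta^{\chi(G)},
\]
which yields the claimed bound.

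There is no real obstacle here: part (i) is immediate from \cref{observation:outdegree}, and part (ii) reduces to choosing the acyclic ``colour-increasing'' orientation and estimating a geometric series. The only minor subtlety is checking the degenerate case $\Delta = 1$, which is excluded by the hypothesis $1 \leq f < \Delta$, so that the geometric-series bound holds with $\Delta \geq 2$.
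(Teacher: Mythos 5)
Your proposal is correct and follows essentially the same route as the paper: the colour-increasing orientation whose longest directed path has length $\chi(G)-1$, combined with the outdegree bound $\Delta$, gives the geometric-series estimate $\sum_{t=1}^{\chi(G)}\Delta^{t-1}\leq\Delta^{\chi(G)}$ (the paper states only that ``the result follows easily,'' and your calculation is precisely the omitted detail). Part (i) via \cref{observation:outdegree}(i) also matches the intended argument.
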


\begin{proof}
Set $k=\chi(G)$, and let $\pi=(V_1, V_2, \dotsc, V_k)$ be a proper $k$-vertex-colouring of $G$. Let $\overrightarrow{G}$ be the orientation of $G$ obtained by orienting every edge $uv$ towards the vertex which belongs to the part of $\pi$ with the largest index. That is, if $u \in V_i$ and $v \in V_j$ with $i<j$, then orient $uv$ from $u$ to $v$ (or conversely if $i>j$). Note that the longest paths of $\overrightarrow{G}$ have length $k-1$.
The result follows easily.
\end{proof}

The bound given in \cref{proposition:bound-colorable} when $f < \Delta$ is rough:
we can find a bound that is tighter, but less aesthetically pleasing, 
by considering the number of vertices protected at each step, and
utilising the fact that if a vertex burns at time $t \geq 2$, then it has an in-neighbour, so its outdegree is at most $\Delta -1$.

\begin{proposition} \label{proposition:bound-colorable2}
Let $G$ be a graph with maximum degree $\Delta > 2$ and chromatic number $k$.  Then, for $1 \leq f < \Delta$,
$$\overrightarrow{\beta}(G, f) \leq \frac{\Delta(\Delta-1)^{k-1}-2}{\Delta-2} - f \left(\frac{(\Delta-1)^k-\Delta k+2k-1}{(\Delta-2)^2}\right).$$
\end{proposition}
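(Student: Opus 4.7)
The plan is to reuse the orientation $\overrightarrow{G}$ from the proof of \cref{proposition:bound-colorable}, obtained from a proper $k$-colouring $(V_1,\dotsc,V_k)$ of $G$ by orienting every edge from its endpoint in the lower-indexed colour class to its endpoint in the higher-indexed one. Every directed path in $\overrightarrow{G}$ has length at most $k-1$, so a fire started anywhere is extinguished after at most $k$ time units. The refinement over \cref{proposition:bound-colorable} comes from the observation already flagged in the paragraph preceding the statement: if a vertex $v$ burns at time $t \geq 2$, then $v$ has at least one burning in-neighbour in $\overrightarrow{G}$, so its outdegree is at most $\Delta-1$, rather than $\Delta$.

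For the firefighting strategy, I would have the firefighters spend each of their turns on $f$ distinct unprotected, unburnt outneighbours of the vertices that just burnt (this is always feasible while the fire is alive, since $f \leq \Delta-1$). Writing $b_t$ for the number of vertices that newly burn at time $t$, this strategy together with the in-neighbour observation gives $b_1 = 1$, $b_2 \leq \Delta - f$, and $b_t \leq (\Delta-1)\,b_{t-1} - f$ for every $t \geq 3$. The total number of burnt vertices is then at most $\sum_{t=1}^{k} b_t$.

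It then remains to solve the recurrence and sum. The linear recurrence with constant forcing has general solution $b_t = A(\Delta-1)^{t-2} + \frac{f}{\Delta-2}$ for $t \geq 2$, and the initial condition $b_2 = \Delta - f$ pins down $A = \frac{\Delta(\Delta-2) - f(\Delta-1)}{\Delta-2}$. Summing the geometric series $\sum_{t=2}^{k}(\Delta-1)^{t-2} = \frac{(\Delta-1)^{k-1}-1}{\Delta-2}$, adding $b_1 = 1$ and the $(k-1)$ copies of $\frac{f}{\Delta-2}$, and then collecting the constant part and the part linear in $f$, one arrives at exactly the claimed bound
\[
\frac{\Delta(\Delta-1)^{k-1}-2}{\Delta-2} - f\cdot\frac{(\Delta-1)^{k}-\Delta k+2k-1}{(\Delta-2)^2}.
\]

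The set-up is straightforward and the main obstacle is algebraic bookkeeping. Two identities worth isolating are the factorisation $A(\Delta-2) = \Delta(\Delta-2) - f(\Delta-1)$, which cleanly separates the $f$-free and $f$-linear contributions, and the simplification $(\Delta-1)^{k} - (\Delta-1) - (k-1)(\Delta-2) = (\Delta-1)^{k} - \Delta k + 2k - 1$, which produces the numerator of the $f$-coefficient. The hypothesis $\Delta > 2$ is exactly what allows division by $\Delta-2$, and $f < \Delta$ is used to guarantee both that the firefighters can always match the $f$ threats produced at each step and that $b_2 \geq 0$ so that the recurrence is meaningful from time $2$ onwards.
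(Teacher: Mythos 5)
Your proposal is correct and follows essentially the same route as the paper's own proof: the identical colouring-based orientation from \cref{proposition:bound-colorable}, the same strategy of protecting $f$ outneighbours of the newly burnt vertices, the same recurrence $S_1=1$, $S_2=\Delta-f$, $S_t=(\Delta-1)S_{t-1}-f$, and the same closed-form summation $1+\sum_{t=2}^{k}S_t$. The one caveat you share with the paper is the implicit assumption that the recurrence values stay nonnegative through time $k$ (which can fail, e.g.\ for $f=\Delta-1$ and large $k$); in that regime the fire is already contained and the summation should be truncated, so neither write-up addresses it, and your argument is neither more nor less complete than the published one.
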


\begin{proof}
\begin{figure}
	\centering
	
	\begin{tikzpicture}[inner sep=0.7mm]
		\node[draw, circle, line width=1pt, fill=black](u) at (0,0)[label=above:{\scriptsize $1$}]{};	
		
		\node[draw, diamond, line width=1pt](v1) at (-2,-1.5)[label=left:{\scriptsize $1'$}]{};	
		\node[draw, circle, line width=1pt, fill=black](v2) at (0,-1.5)[label=left:{\scriptsize $2$}]{};	
		\node[draw, circle, line width=1pt, fill=black](v3) at (2,-1.5)[label=right:{\scriptsize $2$}]{};	
		
		\node[draw, circle, line width=1pt](w1) at (-3,-3) {};
		\node[draw, circle, line width=1pt](w2) at (-2,-3) {};
		
		\node[draw, diamond, line width=1pt](w3) at (-0.75,-3)[label=left:{\scriptsize $2'$}]{};	
		\node[draw, circle, line width=1pt, fill=black](w4) at (0.75,-3)[label=right:{\scriptsize $3$}]{};	
		
		\node[draw, circle, line width=1pt, fill=black](w5) at (2,-3)[label=left:{\scriptsize $3$}]{};	
		\node[draw, circle, line width=1pt, fill=black](w6) at (3,-3)[label=right:{\scriptsize $3$}]{};	
		
		\draw[-latex,line width=1pt] (u) -- (v1);
		\draw[-latex,line width=1pt] (u) -- (v2);
		\draw[-latex,line width=1pt] (u) -- (v3);
		
		\draw[-latex,line width=1pt] (v1) -- (w1);
		\draw[-latex,line width=1pt] (v1) -- (w2);
		\draw[-latex,line width=1pt] (v2) -- (w3);
		\draw[-latex,line width=1pt] (v2) -- (w4);
		\draw[-latex,line width=1pt] (v3) -- (w5);
		\draw[-latex,line width=1pt] (v3) -- (w6);
		
		
		
		
	\end{tikzpicture}
	\caption{Strategy described following the proof of \cref{proposition:bound-colorable} for $\Delta=3$, $k=3$ and $f=1$.}
	\label{figure:drawing-colorable}
\end{figure}
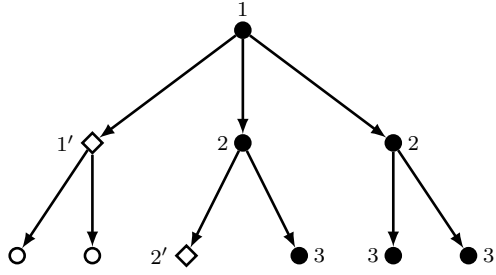

Orient $G$ as described in the proof of \cref{proposition:bound-colorable}.
The maximum number of vertices will burn in the case where the fire starts at a vertex~$v$ with outdegree~$\Delta$, and all the neighbours encountered in a search of depth~$k$ starting at $v$ are distinct (see \cref{figure:drawing-colorable}).
At each time interval, the firefighters protect $f$ outneighbours of burning vertices.
We now calculate the number of vertices that burn in this situation.
Let $S_t$ be the maximum number of vertices that burn at time~$t$. 
Then $S_1=1$, $S_2=\Delta-f$ and, for any $t \geq 3$, we have $S_t = (\Delta-1)S_{t-1} - f$.
By solving this recurrence relation, we deduce that for $t \geq 1$, $$S_{t+1} = \Delta (\Delta-1)^{t-1}-f\left(\frac{(\Delta-1)^t -1}{\Delta-2}\right).$$
The chosen orientation ensures the fire propagates for at most $k$ time intervals.
Thus, an upper bound on the total number of vertices that burn 
is given by $1 + \sum_{t=2}^{k} S_t$. Hence
\begin{align*}
  \overrightarrow{\beta}(G, f) &\leq 1 + \Delta \cdot \sum_{t=1}^{k-1}\left[(\Delta-1)^{t-1}\right] - \frac{f}{\Delta-2} \cdot \sum_{t=1}^{k-1}\left[(\Delta-1)^t-1 \right] \\
  &= 1 +\frac{(\Delta-1)^{k-1}-1}{\Delta-2}\left(\Delta- \frac{f(\Delta-1)}{\Delta-2}\right) + \frac{f(k-1)}{\Delta-2},
\end{align*}
which can be manipulated into the form given in the statement of the \lcnamecref{proposition:bound-colorable2}.
\end{proof}

\Cref{proposition:bound-colorable2} implies, in particular, that, since $f \geq 1$, for any $\Delta > 1$ we have $$\overrightarrow{\beta}(G,f) \leq 2(\Delta-1)^{\chi(G)-1}.$$  Thus,
we can orient the edges of any $3$-colourable graph so that, wherever the fire breaks out, at most $2(\Delta-1)^2$ vertices burn by some firefighting strategy. Furthermore, for a planar graph (or, more generally, a $4$-colourable graph), we can orient its edges so that at most $2(\Delta-1)^3$ vertices burn. By Brooks' Theorem, we have that $$\overrightarrow{\beta}(G, f) \leq 2(\Delta-1)^{\Delta}.$$
In fact, since for a complete graph at most $\Delta$ vertices burn, trivially (as $f \geq 1$), and an odd cycle has an orientation with maximum outdegree~$1$, we have that $$\overrightarrow{\beta}(G, f) \leq 2(\Delta-1)^{\Delta-1}.$$

Thus, any class of graphs with bounded maximum degree has bounded $\overrightarrow{\beta}$.
%
Given a graph $G$ with maximum degree~$3$, 
at most $6$
vertices can burn using an optimal firefighting strategy when one firefighter is available, by \cref{proposition:bound-colorable2}.
For graphs with maximum degree~$4$, the bound is $35$.
%
We will see in \cref{section:subcubic,section:maxdegree4} 
that these bounds are far from best possible.


\subsection{Graph classes with bounded arboricity}

The \textit{arboricity} of an undirected graph $G$, denoted by $a(G)$, is the least number of forests into which the edges of $G$ can be partitioned. 
A graph with small arboricity admits an orientation with small maximum outdegree.

\begin{observation} \label{arboricity}
Every graph $G$ admits an orientation with maximum outdegree at most~$a(G)$.
\end{observation}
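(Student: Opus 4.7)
The plan is to combine the arboricity decomposition with \cref{lemma:1orienting-trees}. By definition of $a(G)$, we can partition $E(G)$ into forests $F_1, F_2, \ldots, F_{a(G)}$. The key idea is that each forest can be oriented independently so that every vertex has outdegree at most~$1$ in that forest, and these orientations can then be superposed to obtain an orientation of $G$ whose outdegrees add up.

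First, I would apply \cref{lemma:1orienting-trees} to each tree component of each forest $F_i$, obtaining an orientation $\overrightarrow{F_i}$ of $F_i$ in which every vertex has outdegree at most~$1$. (The lemma is stated for trees, but a forest is a disjoint union of trees, so orienting each tree separately yields an orientation of the whole forest with the same outdegree bound.) Then I would define $\overrightarrow{G}$ to be the orientation whose arc set is the union of the arc sets of $\overrightarrow{F_1}, \ldots, \overrightarrow{F_{a(G)}}$. Since the forests partition $E(G)$, this is a well-defined orientation of $G$.

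For any vertex $v \in V(G)$, its outdegree in $\overrightarrow{G}$ is
\[
  d^+_{\overrightarrow{G}}(v) \;=\; \sum_{i=1}^{a(G)} d^+_{\overrightarrow{F_i}}(v) \;\leq\; \sum_{i=1}^{a(G)} 1 \;=\; a(G),
\]
which gives the desired bound.

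There is no serious obstacle here: the argument is a one-line composition of the arboricity decomposition and \cref{lemma:1orienting-trees}. The only mild subtlety is making sure that the per-forest orientation bound of~$1$ (which \cref{lemma:1orienting-trees} states for a single tree) is applied correctly to each connected component of the forest, but this is immediate.
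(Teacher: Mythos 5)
Your proof is correct and follows the same route as the paper: decompose the edge set into $a(G)$ forests, orient each tree of each forest with maximum outdegree at most~$1$ via \cref{lemma:1orienting-trees}, and superpose the orientations so that each vertex gains outdegree at most~$1$ per forest. Nothing to add.
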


\begin{proof}
Let $(E_1, E_2, \dotsc, E_{a(G)})$ be a partition of $E(G)$ inducing forests. Then, for every $i \in \{1, 2, \dotsc, a(G)\}$ and for every tree $T$ of the forest $G[E_i]$, choose an arbitrary orientation of $T$ with maximum outdegree at most~$1$ (which exists by \cref{lemma:1orienting-trees}).
Let $\overrightarrow{G}$ be the orientation of $G$ induced by the orientations of each tree of every $G[E_i]$.
Then, since each vertex $u$ of $G$, in each of the $a(G)$ directed forests, 
has outdegree at most~$1$, $u$ has outdegree at most~$a(G)$ in $\overrightarrow{G}$.
\end{proof}

The following corollary is a straightforward consequence of 
\cref{corollary:f-is-delta-1} and Observations~\ref{observation:outdegree}(i) and~\ref{arboricity}.

\begin{corollary} \label{corollary:arboricity}
For a graph $G$, 

\begin{enumerate}
  \item[\normalfont{(i)}] $\overrightarrow{\beta}(G, f \geq a(G)) = 1$, and
	
  \item[\normalfont{(ii)}] $\overrightarrow{\beta}(G, a(G)-1) \leq 1+\frac{|V(G)|-1}{a(G)}$.
\end{enumerate}
\end{corollary}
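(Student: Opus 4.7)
The plan is to combine the three cited results in a direct way. First I would fix an orientation $\overrightarrow{G}$ of $G$ whose maximum outdegree $\Delta^+$ satisfies $\Delta^+\leq a(G)$, which is provided by \cref{arboricity}. Throughout, I would use the fact that, since $\overrightarrow{G}$ is a particular orientation of $G$, we have $\overrightarrow{\beta}(G,f)\leq\beta(\overrightarrow{G},f)$, so any upper bound on $\beta(\overrightarrow{G},f)$ transfers to $\overrightarrow{\beta}(G,f)$.

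For part (i), I would simply observe that when $f\geq a(G)$ we also have $f\geq\Delta^+$, so by \cref{observation:outdegree}(i), $\beta(\overrightarrow{G},f)=1$. Hence $\overrightarrow{\beta}(G,f)\leq 1$, and the matching lower bound is trivial since the initial vertex always burns.

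For part (ii), I would split according to whether equality is attained in $\Delta^+\leq a(G)$. If $\Delta^+=a(G)$, then applying \cref{corollary:f-is-delta-1} to $\overrightarrow{G}$ with $f=\Delta^+-1$ gives
\[
  \beta(\overrightarrow{G},a(G)-1)\leq 1+\frac{|V(G)|-1}{a(G)},
\]
which is exactly the desired bound. If instead $\Delta^+<a(G)$, then $a(G)-1\geq\Delta^+$, so \cref{observation:outdegree}(i) yields $\beta(\overrightarrow{G},a(G)-1)=1$, and this is certainly at most $1+\frac{|V(G)|-1}{a(G)}$.

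There is essentially no obstacle here: the work has already been done in \cref{arboricity}, \cref{observation:outdegree}(i), and \cref{corollary:f-is-delta-1}. The only minor care needed is to handle the possibility that the orientation produced by \cref{arboricity} has maximum outdegree strictly smaller than $a(G)$, but in that regime \cref{observation:outdegree}(i) applies directly and gives a stronger conclusion than required.
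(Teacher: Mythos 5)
Your proposal is correct and matches the paper's intent exactly: the paper gives no proof, stating only that the corollary is a straightforward consequence of \cref{arboricity}, \cref{observation:outdegree}(i), and \cref{corollary:f-is-delta-1}, which is precisely the combination you carry out. Your extra care in handling the case $\Delta^+ < a(G)$ for part (ii) is a sensible touch but does not change the substance.
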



\subsection{Graph classes with small feedback vertex set}

A \textit{feedback vertex set} of an undirected graph $G$ is a subset $F \subseteq V(G)$ of vertices whose removal from $G$ results in a forest.
The next observation shows that for a graph with a small feedback vertex set, there is an effective strategy for firefighting.

\begin{observation} \label{observation:feedback}
  Let $F \subseteq V(G)$ be a feedback vertex set of a graph $G$. Then $$\overrightarrow{\beta}(G, f) \leq  \max\{1, |F|-f+2\}.$$
\end{observation}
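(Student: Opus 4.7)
My plan is to establish the bound by exhibiting a single orientation of $G$ together with a firefighting strategy that works no matter where the fire breaks out. Using \cref{lemma:1orienting-trees} applied to each component of the forest $G - F$, I would first orient $G - F$ so that every vertex has outdegree at most~$1$. I would then orient every edge with exactly one endpoint in $F$ from $V(G) \setminus F$ toward $F$, and orient the remaining edges (those internal to $F$) arbitrarily. In the resulting orientation $\overrightarrow{G}$, every vertex of $F$ has all its outneighbours in $F$ (so outdegree at most $|F|-1$); and every vertex $u \in V(G) \setminus F$ has at most one outneighbour in $V(G) \setminus F$ together with at most $|F|$ outneighbours in $F$, so $d^+(u) \leq |F|+1$.

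For the strategy, I would split on where the fire breaks out. If the fire starts at $v \in F$, then it is automatically confined to $F$ for all time, since vertices of $F$ have no outneighbours in $V(G) \setminus F$. Thus at most $|F|$ vertices can ever burn, and by protecting $\min\{f, |F|-1\}$ of $v$'s outneighbours at time~$1$, at most $|F|-f$ vertices burn in total. If instead the fire starts at $u \in V(G) \setminus F$, the key move at time~$1$ is to spend one firefighter on $u$'s (unique) outneighbour in $V(G) \setminus F$, if it exists, and the remaining $f-1$ firefighters on any $f-1$ outneighbours of $u$ lying in $F$. From time~$2$ onward, the fire cannot leave $\{u\} \cup F$: $u$'s sole escape route into $V(G) \setminus F$ has been severed, and $F$-vertices propagate only within $F$. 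Since $f-1$ members of $F$ have been protected at time~$1$, at most $|F| - (f-1) = |F| - f + 1$ vertices of $F$ ever burn, so the total number of burnt vertices is at most $|F| - f + 2$.

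Combining both cases gives $\overrightarrow{\beta}(G, f) \leq \max\{1, |F|-f+2\}$; the $\max$ with $1$ accounts for the fact that the starting vertex always burns and becomes binding precisely when $f \geq |F|+1$, in which case all outneighbours of the source can be protected at time~$1$. The main subtlety I would have to handle carefully is the degenerate configurations in the second case — for instance, when $u$ has no outneighbour in $V(G) \setminus F$, or fewer than $f-1$ outneighbours in $F$ — but in each such case the strategy either saves strictly more vertices or the bound becomes trivial, so nothing is lost. Notably, no delicate argument beyond time~$2$ is required: once the single ``escape edge'' from $u$ is severed, the whole set of vertices that can ever burn is contained in $\{u\} \cup F$, and the counting is then immediate.
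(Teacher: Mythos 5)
Your proposal is correct and follows essentially the same approach as the paper: the same orientation (orient $G-F$ with outdegree at most~$1$ via \cref{lemma:1orienting-trees}, orient edges into $F$, orient edges within $F$ arbitrarily) and the same strategy of spending one firefighter to cut the single escape arc within $G-F$ and the rest on $F$. The case analysis and counting match the paper's argument.
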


\begin{proof}
Consider the following orientation $\overrightarrow{G}$ of $G$. First, for every tree of $G - F$, choose a root and orient its edges as described in the proof of \cref{lemma:1orienting-trees} so that $\overrightarrow{G}[V(G) \setminus F]$ has maximum outdegree at most~$1$. Next orient all edges between $V(G) \setminus F$ and $F$ towards $F$. Finally orient all remaining edges, that is those joining vertices in $F$, arbitrarily.

Assume the fire starts at some vertex $u$ of $G$. 
Note that, by the orientation of $\overrightarrow{G}$, the fire cannot propagate from $F$ to $V(G) \setminus F$. 
Moreover, if $u \in V(G) \setminus F$, 
then, using one firefighter, we can stop the propagation of the fire in $G - F$.
Therefore, the worst case is where the fire breaks out at a (non-root) vertex of $V(G) \setminus F$.
In that situation, use the following strategy: at time~$1$, use one firefighter to protect the other vertices of $G - F$ (by protecting the parent of the burnt vertex), and any remaining firefighters to protect vertices of $F$. At time~$2$, all unprotected vertices of $F$ can then burn (if $F$ is complete to $V(G) \setminus F$), but the fire will not be able to propagate further, so $1 + |F| - (f-1)$ vertices burn in this case. 
\end{proof}

\section{Firefighting in particular families of graphs} \label{section:families}

In this section, we give lower and upper bounds on 
$\overrightarrow{\beta}$ for
specific families of graphs.


\subsection{Partial $k$-trees}

A \emph{$k$-tree} is either a complete graph on $k+1$ vertices or a graph that can be obtained from a $k$-tree by adding a vertex that is adjacent to each of $k$ vertices forming a clique.
A \emph{partial $k$-tree} is a subgraph of a $k$-tree.
It is well known that a \textit{$k$-tree} is a maximal graph (in terms of size) with treewidth exactly~$k$, while a \textit{partial $k$-tree} has treewidth at most $k$.

Since every $k$-tree 
contains a clique on $k+1$ vertices, 
we obtain the following lower bound 
using \cref{proposition:lower-complete} and the fact that $\overrightarrow{\beta}(K_4,1) = 2$.

\begin{corollary}
  For a $k$-tree $G$, with $k \geq 3$, we have $\overrightarrow{\beta}(G, 1) \geq \max\{2,k-2\}$.
\end{corollary}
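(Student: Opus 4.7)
The plan is to observe that every $k$-tree contains $K_{k+1}$ as a subgraph (by induction on the construction, the initial complete graph on $k+1$ vertices persists as a subgraph throughout the construction), and then apply the monotonicity of $\overrightarrow{\beta}$ under subgraphs together with the known values of $\overrightarrow{\beta}$ for small complete graphs.

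First, I would invoke \cref{observation:subgraph} to conclude that $\overrightarrow{\beta}(G,1) \geq \overrightarrow{\beta}(K_{k+1},1)$ for any $k$-tree $G$ with $k \geq 3$. Then I would split into two cases according to $k$.

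For $k \geq 4$, we have $k+1 \geq 5$, so \cref{theorem:complete-1ff} gives $\overrightarrow{\beta}(K_{k+1},1) = (k+1)-3 = k-2$. For $k \geq 4$ we have $k-2 \geq 2$, so $\max\{2,k-2\} = k-2$, and the bound follows.

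For the remaining case $k=3$, we use the already-noted value $\overrightarrow{\beta}(K_4,1) = 2$ (stated in the paper just after \cref{theorem:complete-1ff} and justified via \cref{lowerbound2} and \cref{observation:complete-upper}/\cref{corollary:complete}). Here $\max\{2,k-2\} = \max\{2,1\} = 2$, matching the bound. There is no real obstacle here: the only thing to be slightly careful about is correctly combining the two sub-bounds into the single expression $\max\{2,k-2\}$, which is precisely why the case split on $k=3$ versus $k\geq 4$ is needed (since \cref{theorem:complete-1ff} is stated only for $n \geq 5$).
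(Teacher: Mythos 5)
Your proposal is correct and follows essentially the same route as the paper: the paper also observes that a $k$-tree contains $K_{k+1}$, applies \cref{observation:subgraph}, and combines the lower bound for complete graphs (via \cref{proposition:lower-complete}) with the fact that $\overrightarrow{\beta}(K_4,1)=2$. Your use of \cref{theorem:complete-1ff} in place of \cref{proposition:lower-complete} for $k \geq 4$ is an immaterial difference, since only the lower bound is needed.
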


We now give upper bounds on $\overrightarrow{\beta}$ for $k$-trees. 
The proof is based on the existence of an orientation in which the fire can only spread towards a $k$-clique located, loosely speaking, at the centre of the graph.
By definition, a $k$-tree can be constructed starting from \emph{some} $(k+1)$-clique
by repeatedly adding a new vertex that is adjacent to each vertex of a $k$-clique.
We require the following lemma stating that any $k$-tree can be constructed 
in this way starting
from \emph{any} of its $(k+1)$-cliques.
We omit the routine proof.

\begin{lemma} \label{observation:tree-recovery}
Let $G$ be a $k$-tree. For every $(k+1)$-clique $K$ of $G$, we can construct $G$ starting from $K$ by repeatedly adding a vertex that is complete to $k$ vertices forming a clique. 
\end{lemma}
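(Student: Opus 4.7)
The plan is to exploit the standard clique-tree structure of a $k$-tree. Recall that a $k$-tree $G$ is chordal, its maximal cliques are exactly its $(k+1)$-cliques, and these can be arranged into a tree $T_G$ (a clique tree) whose edges correspond to pairs of $(k+1)$-cliques sharing $k$ vertices. This tree enjoys the running intersection property: for every vertex $v \in V(G)$, the set of $(k+1)$-cliques containing $v$ induces a connected subtree of $T_G$. (That a $k$-tree admits such a clique tree can itself be proved by a short induction on $|V(G)|$ using the usual construction from some $(k+1)$-clique, since each newly-added vertex attaches to a clique tree by a single new leaf.)

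Given an arbitrary $(k+1)$-clique $K$ of $G$, I would root $T_G$ at the node corresponding to $K$, and enumerate the remaining cliques $C_1, C_2, \ldots, C_m$ in any order consistent with this rooting (for instance, BFS order), so that the parent $C'$ of each $C_i$ appears earlier in the list. For each $i$, since $C_i C'$ is an edge of $T_G$, we have $|V(C_i) \cap V(C')| = k$, and so there is a unique vertex $v_i \in V(C_i) \setminus V(C')$. Starting from $K$, at step $i$ I add the vertex $v_i$ and declare its neighbours (at the time of addition) to be the $k$ vertices of $V(C_i) \setminus \{v_i\} = V(C_i) \cap V(C')$, which form a $k$-clique since $C'$ is a clique in $G$.

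The two things that need to be checked are (a) that $v_i$ has not already been introduced in a previous step, and (b) that $v_i$ has no neighbour in $G$ among the vertices already present other than the declared $k$ vertices. Both are routine consequences of running intersection. For (a), the subtree of $T_G$ consisting of cliques containing $v_i$ includes $C_i$ but not $C'$, so it lies entirely in the subtree of $T_G$ rooted at $C_i$; hence $v_i$ does not appear in any $K, C_1, \ldots, C_{i-1}$, all of which lie outside that subtree. For (b), if $u$ is a previously-added neighbour of $v_i$ in $G$, then $\{u, v_i\}$ is contained in some maximal clique $D$, and the subtrees for $u$ and for $v_i$ both contain $D$; since $u$'s subtree also contains a clique processed before $C_i$ (hence outside the subtree rooted at $C_i$), the path joining these two cliques in $T_G$ must traverse the edge $C' C_i$, forcing $u \in V(C') \cap V(C_i) = V(C_i) \setminus \{v_i\}$, as claimed.

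The main (mild) obstacle is just setting up the clique tree and invoking running intersection cleanly; once that is in place, the construction is determined and the verification is immediate. This is exactly why the authors deem the proof routine.
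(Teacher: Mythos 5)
Your proof is correct. Note that the paper gives no proof at all for this lemma (the authors write ``We omit the routine proof''), so there is no argument of theirs to compare against; your clique-tree construction is a complete and valid way to fill the gap. The points you isolate as (a) and (b) are exactly the ones that need checking, and your running-intersection arguments for both are sound; together they also show that the partial graph built after each step is the induced subgraph of $G$ on the vertices present so far, which is what guarantees that the declared $k$ vertices form a clique \emph{in the graph constructed so far} and that no edge of $G$ is missed. For what it is worth, the ``routine'' proof the authors likely had in mind is shorter and avoids clique trees entirely: a $k$-tree on more than $k+1$ vertices has at least two nonadjacent simplicial vertices, each of degree $k$ with a $k$-clique neighbourhood; since the $(k+1)$-clique $K$ cannot contain two nonadjacent vertices, one such simplicial vertex $v$ lies outside $K$, and $G-v$ is a $k$-tree containing $K$, so one applies induction to $G-v$ and adds $v$ last. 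Your approach costs more setup but makes the order of vertex additions explicit and tied to the clique-tree rooted at $K$, which is arguably closer in spirit to how the lemma is then used in the orientation of \cref{proposition:ktree}.
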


For an undirected graph $G$,
the \emph{diameter} of $G$, denoted $\diam(G)$, is the maximum distance between any two vertices of $G$.

\begin{proposition} \label{proposition:ktree}
For a partial $k$-tree $G$, 

\begin{enumerate}
  \item[\normalfont{(i)}] $\overrightarrow{\beta}(G, f \leq \frac{2k}{\diam(G)}) \leq	1+ \lfloor \frac{\diam(G)}{2} \rfloor \cdot (k-f) - f$, and
	
  \item[\normalfont{(ii)}] $\overrightarrow{\beta}(G, f > \frac{2k}{\diam(G)}) \leq 1+ k \cdot (\lfloor \frac{k}{f} \rfloor - 1)$.
\end{enumerate}
\end{proposition}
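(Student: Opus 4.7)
The plan is to extend $G$ to a $k$-tree $G'$ on the same vertex set (which exists since $G$ has treewidth at most $k$), so that $\diam(G') \leq \diam(G)$, and then to invoke \cref{observation:tree-recovery} to rebuild $G'$ starting from a carefully chosen $(k+1)$-clique $K$. First I would locate a $(k+1)$-clique $K$ of $G'$ whose eccentricity in $G'$ is at most $\lfloor \diam(G')/2 \rfloor$; informally, $K$ is a ``central'' simplex whose existence follows from the standard centrality properties of $k$-trees. I would then choose a reconstruction order respecting BFS distance from $K$: vertices at distance $i$ from $K$ are appended only after all vertices at distance $< i$. Each newly appended vertex $v$ has its $k$ incident edges oriented towards its parent $k$-clique, and the edges of $K$ itself are oriented as an acyclic tournament. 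Restricting the resulting orientation of $G'$ to $G$ yields the orientation $\overrightarrow{G}$ used in the proof. The key structural properties are that every vertex has outdegree at most $k$, every out-neighbourhood is contained in a $k$-clique, and the ``depth'' of any vertex (the length of the longest directed path starting there) is bounded by its distance to $K$ in $G'$, which is at most $\lfloor \diam(G)/2 \rfloor$.

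Next I would describe the firefighter strategy and count the burnings. At each time unit, the firefighters concentrate their $f$ protections within the single $k$-clique containing the out-neighbours of the most recently ignited vertex; because every out-neighbourhood lies in such a clique and successive waves travel through overlapping parent cliques towards $K$, this caps the new ignitions at $k - f$ per time unit. The fire propagates for at most $\lfloor \diam(G)/2 \rfloor$ time units before reaching $K$, where its progress is stifled by the acyclic tournament orientation. A careful accounting of the final round (in which $f$ firefighters protect vertices that would otherwise have ignited, but whose own outneighbours lie entirely inside $K$) yields the bound in part~(i), including the $-f$ correction.

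For part (ii), when $f > 2k/\diam(G)$, the ratio of firefighters to out-degree is high enough to suffocate the fire quickly. The plan is to protect $f$ fresh outneighbours of the burning frontier at each round; after at most $\lfloor k/f \rfloor$ rounds, all $k$ outneighbours of the initial ignition have been protected, halting the fire. Up to $k$ new vertices burn during each of the intermediate $\lfloor k/f \rfloor - 1$ rounds, yielding the bound $1 + k(\lfloor k/f \rfloor - 1)$.

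The main obstacle will be formalising the ``single-clique propagation'' claim: verifying that at each time unit the out-neighbours of the burning frontier fit inside one $k$-clique that the $f$ firefighters can cap at $k - f$ survivors. This relies on the nested structure of parent cliques produced by the BFS-respecting reconstruction of $G'$ from $K$, and will most likely require a careful inductive argument showing that the burning set at time $t$ is always contained in a controlled family of overlapping cliques. A secondary technicality is the transition at $K$ itself, which requires handling the acyclic tournament during the final rounds of propagation.
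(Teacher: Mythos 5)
Your proposal follows essentially the same route as the paper's proof: reduce to a (maximal) $k$-tree, pick a central $(k+1)$-clique $K$ of eccentricity at most $\lfloor \diam(G)/2\rfloor$, rebuild the $k$-tree from $K$ via \cref{observation:tree-recovery} orienting each new vertex towards its parent clique so the fire funnels into $K$ in at most $\lfloor \diam(G)/2\rfloor$ steps with at most $k$ new ignitions per step, and use the anticipation strategy for part~(ii). The only deviations (an acyclic tournament on $K$ rather than the cyclic orientation of \cref{observation:complete-upper}, and completing $G$ to a $k$-tree on the same vertex set rather than invoking \cref{observation:subgraph}) are immaterial, and the ``at most $k$ new vertices per time unit'' claim you flag as the main obstacle is asserted at the same level of informality in the paper itself.
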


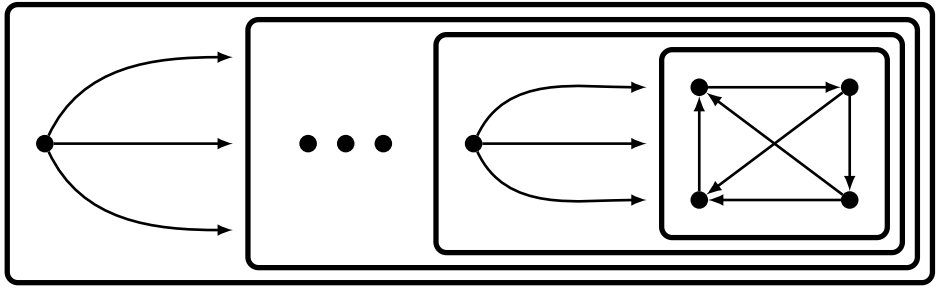
\begin{figure}[!t]
	\centering
	
	\begin{tikzpicture}[inner sep=0.7mm]
		
		\node[draw, circle, line width=1pt, fill=black](u3) at (0, 0) {};
		\node[draw, circle, line width=1pt, fill=black](u4) at (2, 0) {};
		\node[draw, circle, line width=1pt, fill=black](u1) at (0, 1.5) {};
		\node[draw, circle, line width=1pt, fill=black](u2) at (2, 1.5) {};
		
		\draw[-latex, line width=1pt] (u1) -- (u2);
		\draw[-latex, line width=1pt] (u2) -- (u4);
		\draw[-latex, line width=1pt] (u4) -- (u3);
		\draw[-latex, line width=1pt] (u3) -- (u1);
		\draw[-latex, line width=1pt] (u2) -- (u3);
		\draw[-latex, line width=1pt] (u4) -- (u1);
		
		\draw[-,line width=2pt,rounded corners] (-0.5, -0.5) rectangle (2.5,2);
		
		\node[draw, circle, line width=1pt, fill=black](u5) at (-3, 0.75) {};
		
		\draw[-latex, line width=1pt] (u5) to[out=-65,in=180] (-0.7, 0);
		\draw[-latex, line width=1pt] (u5) -- (-0.7, 0.75);
		\draw[-latex, line width=1pt] (u5) to[out=65,in=180] (-0.7, 1.5);
		
		\draw[-,line width=2pt,rounded corners] (-3.5, -0.7) rectangle (2.7,2.2);
		
		\node[draw, circle, line width=1pt, fill=black](u5) at (-4.2, 0.75) {};
		\node[draw, circle, line width=1pt, fill=black](u5) at (-4.7, 0.75) {};
		\node[draw, circle, line width=1pt, fill=black](u5) at (-5.2, 0.75) {};
		
		\draw[-,line width=2pt,rounded corners] (-6, -0.9) rectangle (2.9,2.4);
		
		\node[draw, circle, line width=1pt, fill=black](u6) at (-8.7, 0.75) {};
		
		\draw[-latex, line width=1pt] (u6) to[out=-65,in=180] (-6.2, -0.4);
		\draw[-latex, line width=1pt] (u6) -- (-6.2, 0.75);
		\draw[-latex, line width=1pt] (u6) to[out=65,in=180] (-6.2, 1.9);
		
		\draw[-,line width=2pt,rounded corners] (-9.2, -1.1) rectangle (3.1,2.6);
	\end{tikzpicture}
	\caption{The orientation of a $3$-tree described in the proof of \cref{proposition:ktree}.}
	\label{figure:k-tree}
\end{figure}

\begin{proof}
  By \cref{observation:subgraph}, it suffices to prove these bounds for a (maximal) $k$-tree $G$.
So let $G$ be a $k$-tree, and consider the orientation $\overrightarrow{G}$ of $G$ that we now describe (as illustrated in \cref{figure:k-tree}). Let $K$ be a $(k+1)$-clique in $G$ such that every vertex of $G$ is at distance at most $\lfloor \frac{\diam(G)}{2} \rfloor$ from a vertex of $K$ in $G$. By \cref{observation:tree-recovery}, we can construct $G$ starting from $K$ by repeatedly adding a vertex that is adjacent to each vertex of a $k$-clique in the existing graph. In $\overrightarrow{G}$, first orient the arcs of $K$ as in the proof of \cref{observation:complete-upper}. Then, for each iteration of the construction of $G$ that consists of adding a new vertex $u$ and joining it to all vertices of a $k$-clique, say $K_u$, orient all arcs in $\overrightarrow{G}$ from $u$ towards $V(K_u)$. 


Assume the fire breaks out at some vertex $u$ of $\overrightarrow{G}$. The orientation $\overrightarrow{G}$ of $G$ ensures that the fire can only propagate towards the clique $K$. In particular, the fire will be contained as soon as the vertices of $K$ are reached. Therefore, the most vertices can burn when $u$ is at distance $ \lfloor \frac{\diam(G)}{2} \rfloor$ from a vertex of $K$, so we assume this is the case.

By the construction of $G$ and the orientation of the arcs of $\overrightarrow{G}$, we note that, in $\overrightarrow{G}$, the fire can only propagate to $k$ new vertices at each time unit, and the fire will not reach $K$ until time at most~$\lfloor \frac{\diam(G)}{2} \rfloor$. First, if $f \leq \frac{2k}{\diam(G)}$, then, no matter which vertices are protected at each time unit, the fire spreads to the vertices of $K$ at time $\frac{\diam(G)}{2}$.
In this situation, 
$f$ vertices can be saved at each of the $\frac{\diam(G)}{2}$ time intervals.

On the other hand, if $f > \frac{2k}{\diam(G)}$, then there is a set $S$ of $k$ vertices which will burn at the same time unit at least $\lfloor \frac{k}{f} \rfloor$, when unobstructed by firefighters. 
If, at each time unit, we protect $f$ unprotected vertices of $S$, then all vertices of $S$ will be protected by the time the fire neighbours $S$. In this case, at most $1+ k \cdot (\lfloor \frac{k}{f} \rfloor - 1)$ vertices burn, as required.
\end{proof}

It is worth noting that the anticipation strategy described in the proof of \cref{proposition:ktree} 
demonstrates that, for some oriented $k$-trees, it is not always best to protect vertices adjacent to the fire.
As an example, consider
\textit{$k$th powers of paths}, where the $k$th power $P_n^k$ of the path $P_n$ on $n \geq 1$ vertices is the graph with vertex set $V(P_n)$ for which two vertices are adjacent if and only if they are at distance at most~$k$ in $P_n$.
Using the approach in the proof of \cref{proposition:ktree}, 
pick a $(k+1)$-clique of $P_n^k$ with minimum distance to any other vertex, and then orient all the edges 
towards this centre clique. When $k$ is much greater than $f$ and the underlying path is long, it is clear that if the firefighters protected close to the fire, then the fire would propagate until the fire reaches the centre clique. Hence this is a situation where it is better for the firefighters to anticipate the spread of the fire, as in \cref{proposition:ktree}(ii). 

\medskip

The following is a special case of the strategy described in the proof of \cref{proposition:ktree}.

\begin{proposition} \label{observation:ktree-km1}
For every partial $k$-tree $G$, we have $\overrightarrow{\beta}(G, f \geq \lfloor \frac{k}{2} \rfloor) \leq 1 + \lceil \frac{k}{2} \rceil$.
\end{proposition}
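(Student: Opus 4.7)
The plan is to apply the orientation and firefighting strategy from the proof of \cref{proposition:ktree}, specialized to the regime $f = \lfloor k/2 \rfloor$. By \cref{observation:subgraph}, I may assume that $G$ is a $k$-tree. I will pick a central $(k+1)$-clique $K$ as in the proof of \cref{proposition:ktree}, orient the edges inside $K$ as in the proof of \cref{observation:complete-upper} or \cref{corollary:complete} (so that every vertex of $K$ has outdegree at most $\lceil k/2 \rceil$ inside $\overrightarrow{K}$), and orient each subsequently added vertex $u$ so that its incident arcs all point toward its parent $k$-clique $K_u$. Under this orientation, the fire may only propagate toward $K$.

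I then distinguish two cases based on where the fire breaks out. If the fire starts at a vertex $u \in K$, then its entire outneighbourhood lies in $K$ and has size at most $\lceil k/2 \rceil$; applying the basic protection strategy inside $\overrightarrow{K}$ with $f \geq \lfloor k/2 \rfloor$ firefighters, the same count as in the proofs of \cref{observation:complete-upper,corollary:complete} yields at most $1 + \lceil k/2 \rceil$ burning vertices. If instead the fire starts at a vertex $u \notin K$, then $u$ has outdegree exactly $k$; the strategy is to protect $\lfloor k/2 \rfloor$ carefully chosen outneighbours of $u$ at time~1, so that exactly $\lceil k/2 \rceil$ vertices of $K_u$ burn at time~2, and then to use the remaining firefighters at later times to block any further propagation.

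The main obstacle is showing that, in the second case, the fire truly halts by time~2, rather than leaking into further $k$-cliques of the construction. Here I plan to use the anticipation principle from \cref{proposition:ktree}(ii): across times~1 and~2 there are $2\lfloor k/2 \rfloor \geq k-1$ protections available, which saturate the next $k$-clique the fire is about to reach, provided the protected vertices at time~1 are chosen so that the burning subset of $K_u$ at time~2 has outneighbourhood outside of $K_u$ of size at most $\lfloor k/2 \rfloor$. Verifying that such a choice always exists — by exploiting the fact that the outneighbours of vertices in a given $k$-clique of the oriented construction are concentrated in neighbouring $k$-cliques of the clique tree closer to $K$ — and then handling the boundary case when the burning front reaches $K$ (where the orientation switches from one-directional to the internal oriented-clique structure) constitute the technical core of the argument.
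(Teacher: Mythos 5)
Your proposal follows essentially the same route as the paper: the same centred orientation from \cref{proposition:ktree}, the same treatment of a fire starting in the root clique via the clique orientation of \cref{observation:complete-upper}/\cref{corollary:complete}, and the same two-round strategy of protecting $\lfloor \frac{k}{2} \rfloor$ outneighbours at time~$1$ and the remaining $\lfloor \frac{k}{2} \rfloor$ escape routes at time~$2$. The ``technical core'' you flag as still needing verification --- that the time-$1$ protections can be chosen so that the $\lceil \frac{k}{2} \rceil$ vertices burning at time~$2$ have only $\lfloor \frac{k}{2} \rfloor$ further unprotected outneighbours --- is exactly the structural fact the paper itself asserts from the $k$-tree construction order without further elaboration, so your argument is at the same level of completeness as the published proof.
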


\begin{proof}
By \cref{observation:subgraph}, we may assume that $G$ is a $k$-tree.
Let $\overrightarrow{G}$ be the orientation of $G$ obtained as described in the proof of \cref{proposition:ktree}. Assume the fire breaks out at some vertex $u$ of $\overrightarrow{G}$, and denote by $v_1, v_2, \dotsc, v_k$ its $k$ outneighbours. By the construction of $G$ and the way $\overrightarrow{G}$ was obtained, note that there are $\lceil \frac{k}{2} \rceil$ of the $v_i$'s, say $v_1, v_2, \dotsc, v_{\lceil \frac{k}{2} \rceil}$, with only $\lfloor \frac{k}{2} \rfloor$ outneighbours not among $v_{\lceil \frac{k}{2} \rceil+1}, v_{\lceil \frac{k}{2} \rceil+2}, \dotsc, v_k$. Then protect $v_{\lceil \frac{k}{2} \rceil+1}, v_{\lceil \frac{k}{2} \rceil+2}, \dotsc, v_k$ at time~$1$. The fire will then propagate to $v_1, v_2, \dotsc, v_{\lceil \frac{k}{2} \rceil}$ at time~$2$, but it then suffice to protect $v_{\lceil \frac{k}{2} \rceil+1}, v_{\lceil \frac{k}{2} \rceil+2}, \dotsc, v_k$ at time~$2$ to stop the fire propagation. Note that this strategy remains applicable if $u$ belongs to the root clique since, by its orientation, its vertices have `small' outdegree. Hence, with that strategy, at most $1 + \lceil \frac{k}{2} \rceil$ vertices of $\overrightarrow{G}$ will burn.
\end{proof}

We note that \cref{observation:ktree-km1} is particularly interesting when $k=2$: the edges of every partial $2$-tree can be oriented so that, firefighting with only one firefighter, at most~$2$ vertices burn. This applies to well-known families of partial $2$-trees, such as series-parallel or outerplanar graphs.


\subsection{Subcubic graphs} \label{section:subcubic}

We now focus on \emph{subcubic} graphs: that is, graphs with maximum degree~$3$. Recall that for these graphs, \cref{proposition:bound-colorable2} implies that at most~$6$ vertices burn when firefighting with $1$ firefighter. We reduce this upper bound to~$2$, which is best possible. 

\begin{theorem} \label{subcubic2}
For a subcubic graph $G$, we have $\overrightarrow{\beta}(G, 1) \leq 2$.
\end{theorem}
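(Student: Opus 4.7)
The plan is to exhibit an orientation $\overrightarrow{G}$ with maximum outdegree at most $2$ in which every outdegree-$2$ vertex has at least one outneighbour of outdegree at most $1$. From such an orientation the firefighting argument is short: if the fire starts at $u$ with $d^+(u)\leq 1$, one firefighter at time~$1$ suffices; if $d^+(u)=2$, let $w$ be an outneighbour of $u$ with $d^+(w)\leq 1$, protect the other outneighbour at time~$1$, so that $w$ burns at time~$2$ and its at-most-one outneighbour can be saved at time~$2$, halting propagation after two vertices burn.

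For the orientation, I would reuse the cycle-plus-forest decomposition from the proof of \cref{proposition:orientation-half}: decompose $E(G)$ into edge-disjoint cycles $\mathcal{C}$ together with a forest $F$, orient every cycle of $\mathcal{C}$ as a directed cycle, and orient each tree of $F$ towards a chosen root. The crucial feature of the subcubic hypothesis is that each vertex belongs to at most one cycle of $\mathcal{C}$ (two edge-disjoint cycles through the same vertex would require four incident edges), so the vertices lying on some cycle of $\mathcal{C}$ have degree at most $1$ in $F$, and therefore appear only as leaves or isolated vertices of the trees of $F$.

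The rooting rule I would use is: for each tree $T$ of $F$ that contains a vertex lying on some cycle of $\mathcal{C}$, pick such a vertex as the root of $T$; otherwise pick the root of $T$ arbitrarily. With this choice, the only vertices of outdegree exactly $2$ are the non-root cycle-leaves of the trees of $F$, and for such a vertex $v$ I would single out its tree parent $w_T$. Either $w_T$ is the root of its tree---in which case it has tree outdegree $0$ and so total outdegree at most $1$---or $w_T$ is an internal vertex of its tree; in the latter case $w_T$ has tree-degree at least $2$, hence cannot lie on any cycle of $\mathcal{C}$, so $w_T$ has cycle outdegree $0$ and total outdegree exactly $1$. Either way $d^+(w_T)\leq 1$, so the firefighter's time-$1$ move is to protect $v$'s cycle-successor and the first-paragraph strategy finishes the argument.

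The main obstacle is verifying that the rooting rule actually delivers the safe-tree-parent property for every outdegree-$2$ vertex. This relies on two observations, each essentially forced by the subcubic hypothesis: first, cycle vertices have tree-degree at most $1$ and therefore are leaves (or isolated) in their respective trees of $F$; second, once we root at a cycle vertex whenever possible, the parent of any non-root cycle-leaf is either the chosen root (safe, with tree outdegree $0$) or an internal vertex of the tree (safe, because it must then be non-cycle). Once these structural points are set out cleanly, the strategy and the tightness of the bound---witnessed for example by $K_4$ via \cref{theorem:complete-1ff}, which gives $\overrightarrow{\beta}(K_4,1)=2$---follow immediately.
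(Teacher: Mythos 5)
Your proof is correct, but it takes a genuinely different route from the paper's. The paper first splits off the bridges of $G$, orients the bridge-forest via \cref{lemma:1orienting-trees}, and then, for each $2$-edge-connected component, suppresses the degree-$2$ vertices to obtain a cubic multigraph and invokes Petersen's Theorem (in the strengthened form where the $2$-factor can be forced through a prescribed edge) to split the edges into directed cycles plus a perfect matching; the matching/bridge arcs then play the role of your tree arcs, carrying the fire to a vertex of outdegree at most~$1$. You instead reuse the greedy cycle-plus-forest decomposition already established in \cref{proposition:orientation-half} and exploit the one fact that subcubicity hands you for free: each vertex lies on at most one cycle of $\mathcal{C}$, so cycle vertices are leaves of the residual forest, and rooting each residual tree at a cycle vertex (when one exists) forces the tree-parent of every outdegree-$2$ vertex to have outdegree at most~$1$. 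Your case analysis for that parent (root, hence tree-outdegree $0$; or internal, hence tree-degree at least $2$ and therefore off every cycle) is exhaustive, since a non-root parent of a leaf always has its own parent. Both constructions arrive at the same invariant --- maximum outdegree $2$, with every outdegree-$2$ vertex owning a designated outneighbour of outdegree at most~$1$ --- and the identical two-move strategy. Your route buys simplicity: no Petersen's Theorem, no bridge/$2$-connectivity case split, no multigraph suppression. What the paper's more elaborate orientation buys is finer labelled structure (cycle arcs versus path/bridge arcs) that it reuses verbatim in the proof of \cref{4reg}; if one wanted to substitute your orientation there, one would need to re-verify the degree-based accounting in that later proof. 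One small slip: for tightness you cite \cref{theorem:complete-1ff}, which is stated for $n \geq 5$; the fact that $\overrightarrow{\beta}(K_4,1)=2$ comes from \cref{lowerbound2} together with \cref{corollary:complete}, but this only concerns sharpness, not the claimed upper bound.
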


\begin{proof}
  We will describe an orientation $\overrightarrow{G}$ of $G$ and a firefighting strategy on $\overrightarrow{G}$ for which at most two vertices burn.
  Let $B \subseteq E(G)$ be the set of all bridges of $G$.  We first describe the orientation on this set of edges.
  Note that $B$ induces a forest.  Moreover, the graph $G / (E(G) \setminus B)$, obtained by contracting the edges not in $B$, is a tree.  This tree has an orientation where each edge has outdegree~$1$, by \cref{lemma:1orienting-trees}.
  Let this be the orientation of the edges of $B$ in $\overrightarrow{G}$, and
  call any such arc in $\overrightarrow{G}$ a \emph{bridge arc}.
  After orienting the remaining edges, such an orientation has the property that 
  for each connected (and $2$-connected) component $X$ of $G \backslash B$, there is at most one arc $\overrightarrow{bz}$ in $B$ for which the tail $b$ is in $V(X)$.


  We now consider the orientation of a connected component $X$ of $G \backslash B$ in $\overrightarrow{G}$.  Each vertex of $X$ has degree~$2$ or~$3$ (if there was a vertex of degree~$1$, the incident edge would be a bridge in $G$).
  %
%
  If $X$ consists only of degree-$2$ vertices, then $X$ is a cycle, and we orient the edges such that each edge has indegree~$1$ and outdegree~$1$.
  Now we may assume that there are at least two vertices of degree~$3$ in $X$.
  We will construct a cubic multigraph $X'$, and describe an orientation on $X'$ that extends to $X$.
  We obtain $X'$ by replacing each 
  maximal path $vv_1v_2\dotsm v_pv'$ for which each internal vertex 
  has degree~$2$ with an edge $vv'$
  (see \cref{figure:cubic1}).
  Note that if there is a vertex $b \in V(X)$ for which $B$ has an arc $\overrightarrow{bz}$, then $b$ is contained in some maximal $cc'$-path (say) in $X$ for which each internal vertex has degree~$2$.  
  Clearly, $X'$ is cubic and remains $2$-connected. 
  Thus, according to Petersen's Theorem there exists a partition $(P',C')$ of the edges of $X'$ such that $C'$ induces a collection of cycles (a \emph{$2$-factor}), while $P'$ induces a perfect matching.
  Moreover, it is well-known that there is such a partition for which $P'$ contains any given edge of $X'$.  If $X'$ has an edge $cc'$ corresponding to a path containing $b$, then we pick a partition $(P',C')$ such that $P'$ contains any edge adjacent to $cc'$; that is, $C'$ contains $cc'$.

\begin{figure} 
  \begin{subfigure}{0.5 \textwidth}
        \centering
        \begin{tikzpicture}[inner sep=0.7mm]
                \node[draw, circle, line width=1pt](a0) at (1.75,-1) {};
                \node[draw, circle, line width=1pt](a1) at (0,0) {};
                \node[draw, circle, line width=1pt, fill=black](a2) at (0,1)[label=left:{\scriptsize $v$}]{};
                \node[draw, circle, line width=1pt](a3) at (1,2) {};
                \node[draw, circle, line width=1pt](a4) at (2.5,2) {};
                \node[draw, circle, line width=1pt, fill=black](a5) at (3.5,1)[label=right:{\scriptsize $v'$}]{};
                \node[draw, circle, line width=1pt](a6) at (3.5,0) {};
                                
                \draw[-,line width=1pt] (a0) -- (a1);
                \draw[-,line width=1pt] (a1) -- (a2);
                \draw[-,line width=1pt] (a2) -- (a3);
                \draw[-,line width=1pt] (a3) -- (a4);
                \draw[-,line width=1pt] (a2) -- (a5);
                \draw[-,line width=1pt] (a4) -- (a5);
                \draw[-,line width=1pt] (a5) -- (a6);
                \draw[-,line width=1pt] (a6) -- (a0);
                \draw[-,line width=2pt,rounded corners] (-0.5, -1.5) rectangle (4, 2.5);
                
                \draw[-, dotted, line width=1pt] (a3) -- (1,3);
                \draw[-, dotted, line width=1pt] (a4) -- (2.5,3);
                \draw[-, dotted, line width=1pt] (a0) -- (1.75,-2);
                
                \draw[-, dotted, line width=1pt] (-1,0) -- (a1);
                \draw[-, dotted, line width=1pt] (a6) -- (4.5,0);

                \node at (-1,-1.25) {$X$};
        \end{tikzpicture}
  \end{subfigure}%
  \begin{subfigure}{0.5 \textwidth}
      \centering
        \begin{tikzpicture}[inner sep=0.7mm]
                
                \node[draw, circle, line width=1pt, fill=black](a22) at (5.5,1)[label=left:{\scriptsize $v$}]{};
                \node[draw, circle, line width=1pt, fill=black](a52) at (9,1)[label=right:{\scriptsize $v'$}]{};
                                
                \draw[-,line width=1pt] (a22) -- (a52);
                \draw[-,line width=1pt] (a22) to[out=90,in=90] (a52);
                \draw[-,line width=1pt] (a22) to[out=-90,in=-90] (a52);
                \draw[-,line width=2pt,rounded corners] (5, -0.5) rectangle (9.5, 2.5);

                \node at (10,-.25) {$X'$};
        \end{tikzpicture}
   \end{subfigure}%
        \caption{An example of a subcubic component $X$ of $G \backslash B$, and the corresponding cubic multigraph $X'$, as described in the proof of \cref{subcubic2}. A dotted edge represents a bridge in $B$.}
        \label{figure:cubic1}
\end{figure}
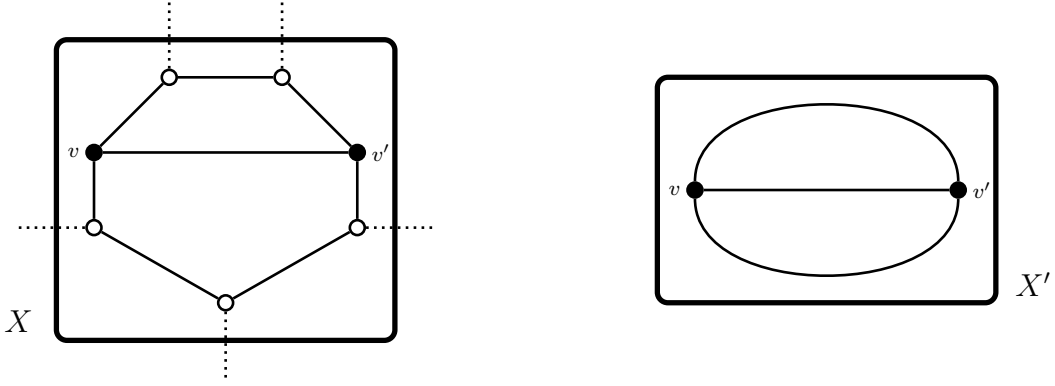

We orient the edges of $X'$ such that
each cycle of $C'$ is $1$-outregular (each vertex in $C'$ has indegree~$1$ and outdegree~$1$), and each edge of $P'$ is oriented arbitrarily. 
This orientation extends to an orientation of the edges in $X$, in the obvious way.  Let $(P,C)$ be the partition of edges of $X$ for which $P$ (respectively, $C$) contains each edge in the path $vv_1\dotsm v_pv'$ corresponding to an edge $vv'$ of $P'$ (respectively, $C'$).
We call an arc in $C$ a \emph{cycle arc}, and an arc in $P$ a \emph{path arc}.
Note that $P$ induces a disjoint union of paths, each oriented from one end to the other.
Moreover, each vertex $v \in V(X)$ is incident to at most two outgoing arcs, and, when $v$ has outdegree precisely two, it has one outgoing path arc, one outgoing cycle arc, and one incoming cycle arc.





\begin{figure}
  \begin{subfigure}{0.5 \textwidth}
        \centering
        
        \begin{tikzpicture}[inner sep=0.7mm]
                \node[draw, circle, line width=1pt](a2) at (0,1) {};
                \node[draw, circle, line width=1pt](a5) at (3.5,1) {};
                                
                \draw[-,line width=2pt] (a2) -- (a5);
                \draw[-,line width=2pt] (a2) to[out=90,in=90] (a5);
                \draw[-,line width=1pt] (a2) to[out=-90,in=-90] (a5);
                \draw[-,line width=2pt,rounded corners] (-0.5, -0.5) rectangle (4, 2.5);
        
                \node at (-1,-0.25) {$X'$};
        \end{tikzpicture}
  \end{subfigure}
  \begin{subfigure}{0.5 \textwidth}
        \centering
        \begin{tikzpicture}[inner sep=0.7mm]
        
                \node[draw, circle, line width=1pt](a1) at (6.5,0){};
                \node[draw, circle, line width=1pt](a2) at (6.5,1){};
                \node[draw, circle, line width=1pt, fill=black](a3) at (7.5,2)[label=left:{\scriptsize $1$}]{};
                \node[draw, diamond, line width=1pt](a4) at (9,2)[label=right:{\scriptsize $1'$}]{};
                \node[draw, circle, line width=1pt](a5) at (10,1) {};
                \node[draw, circle, line width=1pt](a6) at (10,0) {};
                \node[draw, circle, line width=1pt](a7) at (8.25,-1) {};
                                
                \draw[latex-,line width=1pt] (a1) -- (a2);
                \draw[-latex,line width=2pt] (a2) -- (a3);
                \draw[-latex,line width=2pt] (a3) -- (a4);
                \draw[latex-,line width=2pt] (a2) -- (a5);
                \draw[-latex,line width=2pt] (a4) -- (a5);
                \draw[latex-,line width=1pt] (a5) -- (a6);
                \draw[latex-,line width=1pt] (a6) -- (a7);
                \draw[latex-,line width=1pt] (a7) -- (a1);
                \draw[-,line width=2pt,rounded corners] (6, -1.5) rectangle (10.5, 2.5);
                
                \node[draw, diamond, line width=1pt](b1) at (6,4)[label=above:{\scriptsize $2'$}]{};
                \node[draw, circle, line width=1pt, fill=black](b2) at (7.5,4)[label=right:{\scriptsize $2$}]{};
                \node[draw, circle, line width=1pt](b3) at (8,5) {};
                
                \draw[latex-,line width=2pt] (b1) -- (b2);
                \draw[latex-,line width=2pt] (b2) -- (b3);
                \draw[-,line width=2pt,rounded corners] (5.5, 3.5) rectangle (8.5,5.5);
                
                
                \draw[-latex,line width=1pt,dotted] (a3) -- (b2);
                \draw[latex-,line width=1pt,dotted] (a4) -- (10.5,4);
                \draw[latex-,line width=1pt,dotted] (a7) -- (8.25,-2);
                
                \draw[-latex, dotted, line width=1pt] (5.5,0) -- (a1);
                \draw[latex-, dotted, line width=1pt] (a6) -- (11,0);
                
                \node at (11,-1.25) {$X$};
        \end{tikzpicture}
      \end{subfigure}
      \caption{An example of how the orientation $\protect\overrightarrow{G}$ is obtained, and an application of the firefighting strategy on $\protect\overrightarrow{G}$, as described in the proof of \cref{subcubic2}. Thin arcs (respectively, edges) represent path arcs (respectively, edges in the perfect matching $P'$), the thick arcs (respectively, edges) represent cycle arcs (respectively, edges in the $2$-factor $C'$), and the dotted arcs represent bridge arcs.} 
      \label{figure:cubic2}
\end{figure}
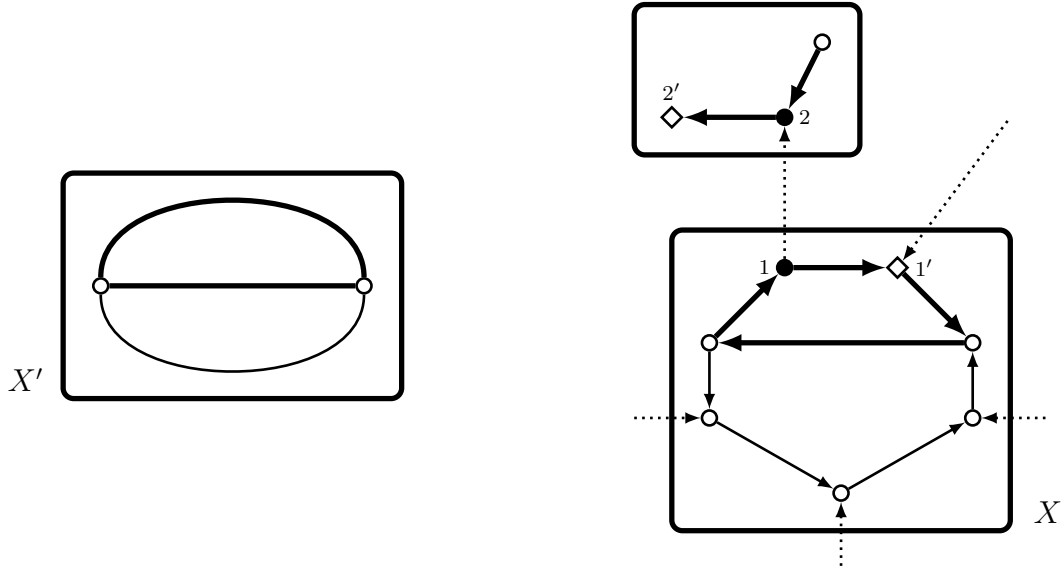

Now consider the orientation $\overrightarrow{G}$ obtained by combining the described orientations on $B$ and each component $X$ of $G \backslash B$ (see \cref{figure:cubic2}). A vertex may be incident to at most one outgoing bridge arc.
However, by the foregoing, such a vertex is either incident to two cycle arcs (one outgoing, one incoming) or other bridge arcs (each incoming).
Thus, the orientation $\overrightarrow{G}$ has the property that every vertex has outdegree at most two, and each vertex with outdegree two is incident to two cycle arcs. 

Finally, we show that, regardless of where the fire breaks out in $\overrightarrow{G}$, there is a strategy, using one firefighter at each step, for which at most two vertices burn.
Say the fire breaks out at some vertex $u$ of $\overrightarrow{G}$. 
If $u$ has outdegree~$1$, containing the fire is trivial.
Otherwise, $u$ has outdegree~$2$, and hence is incident to cycle arcs $\overrightarrow{tu}$ and $\overrightarrow{uv}$, and a bridge arc or path arc $\overrightarrow{uw}$, say. The firefighter blocks $v$ at time~$1$, so,
at time~$2$, the fire spreads to $w$.
Now we prove that $w$ has outdegree~$1$, so the fire can be completely contained.
Evidently this is the case if $w$ is incident to cycle arcs.  If there is a path arc $\overrightarrow{wx}$, the presence of adjacent path arcs implies that $w$ has degree~$2$, and hence outdegree~$1$. Finally, if $w$ is incident only to bridge arcs, then it has outdegree~$1$, by the choice of orientation of the edges $B$ in $\overrightarrow{X}$.  This completes the proof.
%
	%
	%
	%
\end{proof}

Theorem~\ref{subcubic2} is best possible, since there are subcubic graphs, such as $K_4$ or the Petersen graph, for which at least two vertices will necessarily burn, 
by \cref{lowerbound2}.

\subsection{Graphs with bounded maximum degree} \label{section:maxdegree4}

Recall that, so far, the best upper bound on $\overrightarrow{\beta}$ we have seen when firefighting with one firefighter in a graph with maximum degree~$d$ is roughly $2(d-1)^{d-1}$. 
When $d \leq 3$, however, $\overrightarrow{\beta}$ is at most $2$.
In this section, we 
start by considering the case where $d=4$, and show that, for such graphs, $\overrightarrow{\beta}$ is at most $5$.  We then use a similar strategy to improve the upper bound in general, for $d \geq 5$.
%
Here we are interested in the case where $f=1$, 
although a similar approach could also be used to obtain bounds when $f \geq 2$.
We define the following for legibility:
$$\overrightarrow{\beta_d} = \max \{\overrightarrow{\beta}(G, 1) : G {\rm ~is~a~graph~with~maximum~degree~} d\}.$$
So, for example, $\overrightarrow{\beta_3}=2$, by \cref{subcubic2}.

First, we observe that when finding upper bounds on $\overrightarrow{\beta}$ for the class of graphs with maximum degree~$\Delta$, we can restrict our attention to $\Delta$-regular graphs.  

\begin{lemma} \label{observation:reg-subgraph}
If there exists an integer $x$ such that $\overrightarrow{\beta}(G , f) \leq x$ for every $\Delta$-regular graph $G$, then $\overrightarrow{\beta}(G, f) \leq x$ for every graph $G$ with maximum degree $\Delta$.
\end{lemma}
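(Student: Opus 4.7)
The plan is to show that every graph $G$ with maximum degree $\Delta$ is a subgraph of some $\Delta$-regular graph $G'$, and then to transfer a good orientation and firefighting strategy from $G'$ down to $G$. By \cref{observation:subgraph} we already have a lower bound going the right way, but here we need the reverse implication on the upper bound, so \cref{observation:subgraph} cannot be applied directly: we have to actually build an orientation of $G$ from one of $G'$.

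First, I would establish the embedding. The standard construction is iterative: starting from $G_0 = G$, at each step take two disjoint copies of $G_i$ and, for every vertex $v$ whose degree in $G_i$ is still strictly less than $\Delta$, add an edge between its two copies to obtain $G_{i+1}$. Each vertex with positive deficiency has its deficiency reduced by exactly one, while vertices of degree $\Delta$ are untouched; hence after at most $\Delta$ iterations we obtain a $\Delta$-regular graph $G'$ containing $G$ as an induced subgraph.

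Next, by hypothesis, $G'$ admits an orientation $\overrightarrow{G'}$ such that, wherever the fire breaks out, some strategy using $f$ firefighters saves all but at most $x$ vertices. Let $\overrightarrow{G}$ be the restriction of $\overrightarrow{G'}$ to $V(G)$, i.e.\ retain exactly those arcs of $\overrightarrow{G'}$ whose two endpoints both lie in $V(G)$. Given a fire starting at $u \in V(G)$ in $\overrightarrow{G}$, I would mimic the optimal strategy $\sigma$ for $\overrightarrow{G'}$ with the same starting vertex: whenever $\sigma$ protects a vertex $w$, if $w \in V(G)$ then protect $w$ in $\overrightarrow{G}$, and otherwise leave that firefighter unused.

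The core verification is then an induction on time $t$ showing that the set $B_t$ of burnt vertices in $\overrightarrow{G}$ is contained in the set $B'_t$ of burnt vertices in $\overrightarrow{G'}$. The inductive step uses two facts: every arc of $\overrightarrow{G}$ is an arc of $\overrightarrow{G'}$, and every vertex of $V(G)$ protected in $\overrightarrow{G'}$ under $\sigma$ is also protected in $\overrightarrow{G}$ under the mimicking strategy. So if some $v \in V(G) \setminus (B_t \cup P_t)$ receives fire at time $t+1$ in $\overrightarrow{G}$ from a burnt in-neighbour $u \in B_t \subseteq B'_t$, then either $v \in B'_t$ (and we are done) or $v$ is unprotected in $\overrightarrow{G'}$ as well (otherwise it would be protected in $\overrightarrow{G}$), so $v \in B'_{t+1}$. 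Hence $|B_\infty| \leq |B'_\infty| \leq x$, proving $\overrightarrow{\beta}(G,f) \leq x$.

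The embedding step is completely routine, and the domination argument is standard in reduction-type proofs for this kind of problem; the only subtle point to be careful about is the interplay between protected and burning vertex sets across the two graphs, since the mimicking strategy ``wastes'' firefighters on vertices outside $V(G)$, and one must make sure that this waste is not needed to contain a part of the fire that actually threatens $V(G)$ through arcs of $\overrightarrow{G}$ — which is exactly what the containment of arc sets guarantees.
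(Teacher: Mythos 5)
Your proof is correct and uses essentially the same approach as the paper: the identical iterative doubling construction (take two disjoint copies and join the copies of each deficient vertex) to embed $G$ in a $\Delta$-regular supergraph $G'$. The only divergence is that your explicit orientation-restriction and burnt-set-domination argument is redundant --- you misread the direction of \cref{observation:subgraph}, which with $H=G$ and supergraph $G'$ already yields $\overrightarrow{\beta}(G,f) \leq \overrightarrow{\beta}(G',f) \leq x$ directly (your inline argument is, in effect, a correct re-proof of that observation).
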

\begin{proof}
  We will show that 
for every graph $G$ with maximum degree~$\Delta$, there exists a $\Delta$-regular graph containing $G$ as a subgraph; the lemma follows from this claim.
  Let $G$ be a graph with minimum degree~$d$ and maximum degree~$\Delta$.
  We describe a construction by which we can obtain a $\Delta$-regular graph 
  that contains $G$ as a subgraph. 
  Clearly the \lcnamecref{observation:reg-subgraph} holds if $d = \Delta$, so
assume that $G$ is not $\Delta$-regular. Take two copies $G_1$ and $G_2$ of $G$, and, for every vertex $v$ of $G$ with degree strictly less than $\Delta$, add an edge between the two vertices corresponding to $v$ in $G_1$ and $G_2$.
We obtain a graph with minimum degree strictly greater than $d$, maximum degree $\Delta$, and containing $G$ as a subgraph.
By repeating this process, for $\Delta-d$ iterations, we eventually obtain a $\Delta$-regular graph as desired.
\end{proof}

\begin{proposition} \label{4reg}
 We have $\overrightarrow{\beta_4} \leq 5$.
\end{proposition}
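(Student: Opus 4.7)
The plan is, first, to invoke \cref{observation:reg-subgraph} to reduce the claim to the case where $G$ is 4-regular. So let $G$ be 4-regular. By Petersen's classical 2-factor theorem, $E(G)$ decomposes into two edge-disjoint 2-factors $F_1$ and $F_2$. I would orient each of $F_1$ and $F_2$ as a disjoint union of directed cycles, thereby obtaining an orientation $\vec{G}$ in which every vertex has both indegree and outdegree exactly 2.

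Next, I would describe a firefighting strategy on $\vec{G}$ guaranteeing that at most five vertices burn. The crucial starting observation is that, if the firefighter at each time step protects one of the two outneighbors of the most recently burnt vertex, then the fire grows by at most one new vertex per time step. So the bound reduces to showing that the fire can always be contained within four steps of propagation from any source vertex $u$, yielding the desired total of $1 + 4 = 5$ burnt vertices.

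The main obstacle is to justify this four-step containment. I expect this to require a delicate case analysis on the local structure of the two 2-factors near $u$: specifically, on how the $F_1$- and $F_2$-cycles through the successive fire-front vertices interact, and on whether second or third outneighbors of $u$ coincide with vertices already burnt or protected. At each step, the firefighter would choose whether to block the $F_1$- or the $F_2$-outneighbor of the current fire front, and possibly protect a vertex slightly ahead of the fire in anticipation, with the aim of ensuring that both outneighbors of the fire-front vertex at time 5 lie in the already-burnt-or-protected set. A useful tool is that $\vec{G}$ is simultaneously 2-inregular and 2-outregular, so the set of vertices reachable from $u$ in at most four steps has size at most $1+2+4+8+16 = 31$, giving a bounded playground for the analysis.

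Should the direct case analysis on the naive 2-factor orientation prove too cumbersome, a fallback plan is to pick the orientations of $F_1$ and $F_2$ not independently but in a correlated way -- for instance, choosing the cycle directions so as to create short digons or small directed cycles in $\vec{G}$ near every vertex. Such additional local structure would shrink the out-reachable set of $u$, making the four-step containment argument substantially cleaner, and possibly allow some vertices to exhibit outdegree 1 (at the cost of others having outdegree 3) to create "trap" vertices where the fire naturally stops.
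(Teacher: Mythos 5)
There is a genuine gap: the entire content of the proof is deferred to the ``four-step containment'' claim, and that claim is not justified --- indeed, it is not true for an arbitrary orientation obtained by directing two 2-factors. A 2-outregular orientation together with one firefighter only guarantees that the fire front advances by at most one new vertex per time unit; nothing forces the front to run into already burnt or protected territory after four steps. This is exactly the situation of \cref{corollary:f-is-delta-1} with $\Delta^+=2$, which yields only the linear bound $1+\frac{|V(G)|-1}{2}$, not a constant. To see the failure concretely, take two 2-factors whose cycles are long and ``spread out'': the unblocked outneighbour of each newly burnt vertex can keep leading to fresh vertices for arbitrarily many steps, so the proposed strategy burns far more than five vertices. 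Your fallback of correlating the two cycle orientations to create local traps is the right instinct, but it is precisely where all the work lies, and you have not supplied it; note also that making some vertices have outdegree $1$ at the cost of others having outdegree $3$ reintroduces a three-way branching of the fire that your ``one new vertex per step'' bookkeeping no longer controls.

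For comparison, the paper avoids this difficulty entirely by not insisting on regularity (so your appeal to \cref{observation:reg-subgraph}, while correct, is a step in an unhelpful direction). It greedily extracts an independent set $X$ such that $G-X$ is subcubic, orients every edge meeting $X$ \emph{towards} $X$ so that the vertices of $X$ are sinks, and orients $G-X$ as in the proof of \cref{subcubic2}. Then at most two vertices of $G-X$ burn, the fire dies instantly upon entering $X$, and a short case analysis on the degrees (in $G-X$) of the at most two burnt vertices bounds the number of burnt sinks by three, giving $5$ in total. The key structural ingredient you are missing is some mechanism --- here, the sink set $X$ plus the already-proved constant bound for subcubic graphs --- that forces the fire to terminate after a bounded number of steps, rather than merely to advance slowly.
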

\begin{proof}
  Let $G$ be a graph with maximum degree~$4$. 
  Let $X$ be the subset of $V(G)$ obtained by the following iterative procedure: starting with $X=\emptyset$, while $G-X$ has a degree-$4$ vertex $x$, add $x$ to $X$ and remove $x$ from $G$.
  Note that when this procedure terminates, $X$ contains no two adjacent vertices of $G$, and $G-X$ has maximum degree at most $3$.
  Let $G' = G-X$.
  Recall that, by \cref{subcubic2}, 
  $\overrightarrow{\beta}(G', 1) \leq 2$.
  Let $\overrightarrow{G'}$ be the orientation of $G'$ described in the proof of \cref{subcubic2}.
  Let $\overrightarrow{G}$ be the orientation of $G'$ such that $\overrightarrow{G} -X= \overrightarrow{G'}$, and all edges incident to a vertex $x$ in $X$ are oriented towards $x$.

  If the fire starts at a vertex in $X$, it cannot propagate any further due to the orientation of $\overrightarrow{G}$.  So suppose the fire starts at a vertex of $G'$.
  We then employ the same firefighting strategy on $G'$ as in the proof of \cref{subcubic2}.
  Whenever the fire spreads to a vertex of $X$, the orientation ensures that it will not spread further from this vertex.  So we only need to consider how many vertices burn in $G'$, and how many vertices of $X$ are adjacent to these burnt vertices in $G$.
%
  Observe that at most two vertices of $G'$ burn. 
  If the two vertices of $G'$ that burn have degree~$3$ in $G'$, then these two vertices each have at most one neighbour in $X$, so at most $4$ vertices burn in total.
  If, instead, the fire starts at a vertex that has degree at most $2$ in $G'$, then this vertex has one in-neighbour and one outneighbour in $\overrightarrow{G'}$.  So only one vertex of $G'$ burns, and hence at most $4$ vertices burn in total.
  Finally, suppose the fire starts a vertex $v_1$ with degree~$3$ in $G'$, then spreads to a vertex $v_2$ at time~$2$ that has degree~$2$ in $G'$.
  Then $v_1$ is adjacent to at most one vertex of $X$ in $G$, and $v_2$ is adjacent to at most two vertices of $X$ in $G$.  So at most $5$ vertices burn in total.  This completes the proof.
\end{proof}

Since $K_{4,4}$ is $4$-regular and $\overrightarrow{\beta}(K_{4,4}, 1)=3$, by \cref{proposition:lower-bipartite}, we have $\overrightarrow{\beta_4} \geq 3$. 
However,
finding its precise value remains an open problem.
\begin{question}
  What is the value of $\overrightarrow{\beta_4}$?
\end{question}

We now focus on improving the upper bound on $\overrightarrow{\beta_d}$ for any $d \geq 5$. 
The next lemma is key to our approach.
The proof is similar to that for \cref{4reg}, but does not rely on properties of the optimal orientation for subcubic graphs.

\begin{lemma} \label{lemma:degree-to-degree}
For every $d \geq 2$, we have $\overrightarrow{\beta_d} \leq \max\{d,\overrightarrow{\beta_{d-1}} \cdot (d-2)+2\}$.
\end{lemma}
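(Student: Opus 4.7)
The plan is to mimic the approach of Proposition~\ref{4reg}: reduce the $d$-regular case to a maximum-degree-$(d-1)$ subgraph by stripping off an independent set. By Lemma~\ref{observation:reg-subgraph}, it suffices to handle $d$-regular $G$. Construct $X \subseteq V(G)$ iteratively: while $G - X$ contains a degree-$d$ vertex $x$, add $x$ to $X$. The argument in Proposition~\ref{4reg} shows that $X$ is independent, $G' := G - X$ has maximum degree at most $d - 1$, and in fact each $x \in X$ has all $d$ of its neighbors in $G'$ (since $G$ is $d$-regular and $X$ is independent). Let $\overrightarrow{G'}$ be an orientation of $G'$ with $\overrightarrow{\beta}(G', 1) \leq \overrightarrow{\beta_{d-1}}$, and extend to an orientation $\overrightarrow{G}$ of $G$ by orienting every $X$-incident edge towards its $X$-endpoint. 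Each $x \in X$ is then a sink in $\overrightarrow{G}$, so a fire starting at some $x \in X$ does not spread and only that vertex burns.

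Otherwise the fire starts at some $v_1 \in G'$; apply the $G'$-strategy and let $B \subseteq V(G')$ be the resulting burnt set (with $|B| \leq \overrightarrow{\beta_{d-1}}$), and $X_B$ the $X$-vertices adjacent in $G$ to $B$. Since $X$-vertices are sinks, the total number of burnt vertices is $|B| + |X_B|$. By $d$-regularity, each $w \in B$ has exactly $d - d_{G'}(w)$ neighbors in $X$. Since $B$ is connected in $G'$ (fire propagates along $\overrightarrow{G'}$-arcs), $G'[B]$ contains at least $|B|-1$ edges, giving $\sum_{w \in B} d_{G'}(w) \geq 2(|B| - 1)$ and therefore $|X_B| \leq \sum_{w \in B}(d - d_{G'}(w)) \leq |B|(d-2) + 2$. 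Combined with case analysis leveraging the optimality of the $G'$-strategy -- in particular, that any $w \in B \setminus \{v_1\}$ with $d_{G'}(w) = 1$ would let the firefighter contain the fire by protecting the sole out-neighbor -- this yields $|B| + |X_B| \leq \overrightarrow{\beta_{d-1}}(d-2) + 2$ in the $|B| \geq 2$ case. The edge case $|B| = 1$ is handled directly: at most $d - 1$ of $v_1$'s $X$-neighbours burn after the firefighter protects one, for a total of at most $d$ burnt vertices, which accounts for the first term in the maximum.

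The hard part is the final counting in the $|B| \geq 2$ case: the naive sum yields only $|B| + |X_B| \leq |B|(d-1) + 2$, and to reach the claimed bound $\overrightarrow{\beta_{d-1}}(d-2) + 2$ one must exploit both the connectivity of $B$ in $G'$ and the structure of the $G'$-strategy (for instance, that each $w \in B \setminus \{v_1\}$ either has $d_{G'}(w) \geq 2$ or shares its $X$-neighbours with other vertices of $B$) in order to absorb an extra multiplicative factor of $|B|$.
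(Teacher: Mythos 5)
Your setup is identical to the paper's: reduce to the $d$-regular case, strip off an independent set $X$ of degree-$d$ vertices so that $G'=G-X$ has maximum degree $d-1$, orient $G'$ optimally and make every vertex of $X$ a sink, then bound the burnt set as $|B|+|X_B|$ where $B\subseteq V(G')$ is the burnt set in $G'$. The problem is that the one step carrying the actual content of the lemma --- the counting --- is exactly the step you leave open. Writing $k=\overrightarrow{\beta_{d-1}}$, the target $|B|+|X_B|\le k(d-2)+2$ is equivalent (via $|X_B|\le\sum_{w\in B}(d-d_{G'}(w))$ and $|B|\le k$) to showing $\sum_{w\in B}d_{G'}(w)\ge 3|B|-2$. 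Your connectivity argument only gives $\sum_{w\in B}d_{G'}(w)\ge 2(|B|-1)$, i.e.\ $|B|+|X_B|\le |B|(d-1)+2$, which overshoots the target by an additive $|B|$; you say so yourself and defer the fix to an unspecified ``case analysis leveraging the optimality of the $G'$-strategy.'' That deferred argument is the proof, not a detail.

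The paper closes this gap by pinning down the worst-case configuration: the burnt vertices $u_1,\dotsc,u_x$ form a directed path in $\overrightarrow{G'}$ in which every $u_i$ with $i<x$ has outdegree~$2$ in $\overrightarrow{G'}$ (one out-arc to $u_{i+1}$, one to the vertex $v_i$ the firefighter protects at time $i$) and every $u_i$ with $i\ge 2$ has an in-arc from $u_{i-1}$. Hence each internal $u_i$ has $d_{G'}(u_i)\ge 3$ and contributes at most $d-3$ vertices to $X_B$, while $u_1$ and $u_x$ have $d_{G'}\ge 2$ and contribute at most $d-2$ each; summing gives $k+2(d-2)+(k-2)(d-3)=k(d-2)+2$. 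The roughly $|B|$ edge-endpoints you are missing are precisely the arcs from burning vertices to the \emph{protected} (not burnt) outneighbours $v_i$, which a spanning-tree count of $G'[B]$ cannot see. Your parenthetical attempt at this (``any $w\in B\setminus\{v_1\}$ with $d_{G'}(w)=1$ would let the firefighter contain the fire by protecting the sole out-neighbor'') does not work as stated: such a $w$ has in-degree~$1$ and out-degree~$0$ in $\overrightarrow{G'}$, so there is no outneighbour to protect, and this is exactly the dangerous case (it contributes $d-1$ to $|X_B|$) that must be ruled out, or absorbed, by analysing how the fire actually propagates under the chosen strategy. As it stands the proposal establishes only $\overrightarrow{\beta_d}\le\max\{d,\,\overrightarrow{\beta_{d-1}}(d-1)+2\}$.
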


\begin{proof}
Let $d \geq 2$ be fixed, and assume $\overrightarrow{\beta_{d-1}} \leq k$ for some $k \geq 1$.
Let $G$ be a graph with maximum degree $d$.
We will describe an orientation $\overrightarrow{G}$ of $G$ on which we can firefight with only one firefighter in such a way at most $\max\{d,\overrightarrow{\beta_{d-1}} \cdot (d-2)+2\}$ vertices burn.

As in the proof of \cref{4reg}, there is a subset $X$ of $V(G)$ that
contains no two adjacent vertices of $G$, and $G-X$ has maximum degree at most $d-1$.
Let $G' = G-X$.  
By the initial assumption, there is an orientation $\overrightarrow{G'}$ of $G'$ with $\beta(\overrightarrow{G'}, 1) \leq k$.
Now, for each edge in $G'$, we give the same orientation to the edge in $\overrightarrow{G}$ as in $\overrightarrow{G'}$; whereas each edge incident to a vertex $x$ of $X$ is oriented towards $x$. 


We now consider a strategy for firefighting on $\overrightarrow{G}$. Assume the fire breaks out at some vertex~$u$. 
If $u$ is in $X$, then it is a sink, and
the fire cannot propagate to the other vertices. 
Now assume $u$ is 
not in $X$.
Due to the way the edges of $G'$ have been oriented in $\overrightarrow{G}$, there is a certain strategy that the firefighter can apply so that at most $\overrightarrow{\beta_d}$ vertices of $G'$ burn. So apply this strategy. Then $x \leq \overrightarrow{\beta_d}$ vertices $u_1, u_2, \dotsc, u_x$ in $G'$ burn, where $u_1 = u$. 

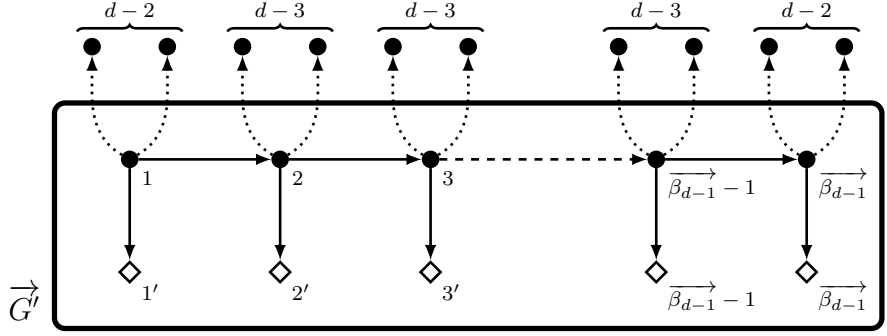
\begin{figure} 
	\centering
	
	\begin{tikzpicture}[inner sep=0.7mm]
		
		\node[draw, circle, line width=1pt, fill=black](u1) at (0, 0)[label=below right:{\scriptsize $1$}]{};
		\node[draw, diamond, line width=1pt](v1) at (0, -1.5)[label=below right:{\scriptsize $1'$}]{};
		
		\node[draw, circle, line width=1pt, fill=black](u2) at (2, 0)[label=below right:{\scriptsize $2$}]{};
		\node[draw, diamond, line width=1pt](v2) at (2, -1.5)[label=below right:{\scriptsize $2'$}]{};
		
		\node[draw, circle, line width=1pt, fill=black](u3) at (4, 0)[label=below right:{\scriptsize $3$}]{};
		\node[draw, diamond, line width=1pt](v3) at (4, -1.5)[label=below right:{\scriptsize $3'$}]{};
		
		\node[draw, circle, line width=1pt, fill=black](ux) at (7, 0)[label=below right:{\scriptsize $\overrightarrow{\beta_{d-1}}-1$}]{};
		\node[draw, diamond, line width=1pt](vx) at (7, -1.5)[label=below right:{\scriptsize $\overrightarrow{\beta_{d-1}}-1$}]{};
		
		\node[draw, circle, line width=1pt, fill=black](ux2) at (9, 0)[label=below right:{\scriptsize $\overrightarrow{\beta_{d-1}}$}]{};
		\node[draw, diamond, line width=1pt](vx2) at (9, -1.5)[label=below right:{\scriptsize $\overrightarrow{\beta_{d-1}}$}]{};
		
		\draw[-latex, line width=1pt] (u1) -- (v1);
		\draw[-latex, line width=1pt] (u1) -- (u2);
		\draw[-latex, line width=1pt] (u2) -- (v2);
		\draw[-latex, line width=1pt] (u2) -- (u3);
		\draw[-latex, line width=1pt] (u3) -- (v3);
		\draw[-latex, line width=1pt, dashed] (u3) -- (ux);
		\draw[-latex, line width=1pt] (ux) -- (vx);
		\draw[-latex, line width=1pt] (ux) -- (ux2);
		\draw[-latex, line width=1pt] (ux2) -- (vx2);
		
		\draw[-,line width=2pt,rounded corners] (-1,0.75) rectangle (10, -2.25);
        \node at (-1.4, -1.9) {$\overrightarrow{G'}$};
        
        \node[draw, circle, line width=1pt, fill=black](w1) at (-0.5, 1.5) { };
        \node[draw, circle, line width=1pt, fill=black](w2) at (0.5, 1.5) { };
		
        \node[draw, circle, line width=1pt, fill=black](w3) at (1.5, 1.5) { };
        \node[draw, circle, line width=1pt, fill=black](w4) at (2.5, 1.5) { };
		
        \node[draw, circle, line width=1pt, fill=black](w5) at (3.5, 1.5) { };
        \node[draw, circle, line width=1pt, fill=black](w6) at (4.5, 1.5) { };
		
        \node[draw, circle, line width=1pt, fill=black](w7) at (6.5, 1.5) { };
        \node[draw, circle, line width=1pt, fill=black](w8) at (7.5, 1.5) { };
		
        \node[draw, circle, line width=1pt, fill=black](w9) at (8.5, 1.5) { };
        \node[draw, circle, line width=1pt, fill=black](w10) at (9.5, 1.5) { };
		
		\draw[-latex, line width=1pt, dotted] (u1) to[out=150,in=-90] (w1);
		\draw[-latex, line width=1pt, dotted] (u1) to[out=30,in=-90] (w2);
		\draw[-latex, line width=1pt, dotted] (u2) to[out=150,in=-90] (w3);
		\draw[-latex, line width=1pt, dotted] (u2) to[out=30,in=-90] (w4);
		\draw[-latex, line width=1pt, dotted] (u3) to[out=150,in=-90] (w5);
		\draw[-latex, line width=1pt, dotted] (u3) to[out=30,in=-90] (w6);
		\draw[-latex, line width=1pt, dotted] (ux) to[out=150,in=-90] (w7);
		\draw[-latex, line width=1pt, dotted] (ux) to[out=30,in=-90] (w8);
		\draw[-latex, line width=1pt, dotted] (ux2) to[out=150,in=-90] (w9);
		\draw[-latex, line width=1pt, dotted] (ux2) to[out=30,in=-90] (w10);
		
                \draw [decoration={brace}, decorate, line width=1pt] (-0.7, 1.7) -- (0.7, 1.7) node [yshift=8pt, xshift=-20pt] {\scriptsize $d-2$};
                \draw [decoration={brace}, decorate, line width=1pt] (1.3, 1.7) -- (2.7, 1.7) node [yshift=8pt, xshift=-20pt] {\scriptsize $d-3$};
                \draw [decoration={brace}, decorate, line width=1pt] (3.3, 1.7) -- (4.7, 1.7) node [yshift=8pt, xshift=-20pt] {\scriptsize $d-3$};
                \draw [decoration={brace}, decorate, line width=1pt] (6.3, 1.7) -- (7.7, 1.7) node [yshift=8pt, xshift=-20pt] {\scriptsize $d-3$};
                \draw [decoration={brace}, decorate, line width=1pt] (8.3, 1.7) -- (9.7, 1.7) node [yshift=8pt, xshift=-20pt] {\scriptsize $d-2$};

	\end{tikzpicture}
	\caption{The worst-case situation described in the proof of Lemma~\ref{lemma:degree-to-degree}.}
	\label{figure:bounded-graphs}
\end{figure}

Whenever
a vertex $u_i$ burns at some time unit, then, at the next time unit, the fire will 
also propagate
in $\overrightarrow{G} - \overrightarrow{G'}$, that is to some vertices 
in $X$.
There are exactly 
$$d_G(u_i)-(d_{\overrightarrow{G'}}^-(u_i)+d_{\overrightarrow{G'}}^+(u_i))$$ such vertices. 
It is easy to check that the worst case, where the number of burnt vertices is at a maximum, occurs when, for $x=\overrightarrow{\beta_d}$, the path $u_1u_2\dotsm_x$ is a directed path in $\overrightarrow{G'}$, and we have $d_{\overrightarrow{G'}}^+(u_x)=1$ and $d_{\overrightarrow{G'}}^+(u_i) = 2$ for every $i \in \{1, 2, \dotsc, x-1\}$. This last condition maximises the quantity $$\sum_{i = 1}^x d_G(u_i)-(d_{\overrightarrow{G'}}^-(u_i)+d_{\overrightarrow{G'}}^+(u_i)),$$ and ensures that the firefighter cannot prevent the fire from reaching $u_x$. In particular, at each time unit~$i$, the firefighter protects a vertex $v_i$ which is an outneighbour of $u_i$ in $\overrightarrow{G'}$. See Figure~\ref{figure:bounded-graphs} for an illustration of this situation. Then, for each $u_i$ with $i \in \{2,3,\dotsc,x-1\}$, at most~$d-3$ other vertices of $\overrightarrow{G} - \overrightarrow{G'}$ will burn, and at most $d-2$ for $u_1$ and $u_x$. The total number of vertices which will burn is hence $$\overrightarrow{\beta_{d-1}} + 2 \cdot (d-2) + (\overrightarrow{\beta_{d-1}}-2)\cdot(d-3),$$ as claimed.
\end{proof}

Using the upper bound from \cref{subcubic2}, we obtain the following:

%
%

\begin{corollary} \label{corollary:better-seed}
For every $d \geq 4$, we have $\overrightarrow{\beta_d} \leq (d-1)!$.
\end{corollary}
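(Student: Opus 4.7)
The plan is a straightforward induction on $d$, fuelled by Lemma~\ref{lemma:degree-to-degree} with the base value $\overrightarrow{\beta_3} \leq 2$ supplied by Theorem~\ref{subcubic2}. The only real work is to check that the recursion $\overrightarrow{\beta_d} \leq \max\{d,\overrightarrow{\beta_{d-1}}(d-2)+2\}$ preserves the bound $(d-1)!$, which amounts to two elementary inequalities.

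For the base case $d=4$, I would simply plug $\overrightarrow{\beta_3} \leq 2$ into Lemma~\ref{lemma:degree-to-degree} to obtain
\[
\overrightarrow{\beta_4} \;\leq\; \max\{4,\; 2\cdot(4-2)+2\} \;=\; 6 \;=\; 3! \;=\; (4-1)!.
\]
(Proposition~\ref{4reg} actually gives the sharper bound $\overrightarrow{\beta_4}\leq 5$, but it is not needed here.)

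For the inductive step, suppose $d \geq 5$ and $\overrightarrow{\beta_{d-1}} \leq (d-2)!$. Applying Lemma~\ref{lemma:degree-to-degree} gives
\[
\overrightarrow{\beta_d} \;\leq\; \max\bigl\{d,\; (d-2)!\cdot(d-2)+2\bigr\}.
\]
To conclude that this maximum is at most $(d-1)! = (d-1)(d-2)!$, I would verify the two entries separately. For the second, note that $(d-1)(d-2)! - (d-2)(d-2)! = (d-2)!$, and since $d \geq 4$ we have $(d-2)! \geq 2$, so $(d-2)!(d-2)+2 \leq (d-1)!$. For the first, the inequality $d \leq (d-1)!$ holds for all $d \geq 4$ (indeed $(d-1)! \geq 6 \geq d$ when $d=4$ and grows much faster thereafter). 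Combining these gives $\overrightarrow{\beta_d} \leq (d-1)!$, completing the induction.

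There is no real obstacle here: the entire content of the corollary is already packaged inside Lemma~\ref{lemma:degree-to-degree} together with the subcubic base case, and the claim $(d-1)!$ is just a convenient closed form for the telescoping recursion. The only moment that requires any care is checking that $(d-2)! \geq 2$ for all $d$ in the range of the induction, which is what forces the starting index to be $d=4$ rather than smaller.
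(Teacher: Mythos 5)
Your proof is correct and follows exactly the route the paper intends: induct via Lemma~\ref{lemma:degree-to-degree} from the base $\overrightarrow{\beta_3}\leq 2$ of Theorem~\ref{subcubic2}, checking that the recursion preserves the bound $(d-1)!$. The paper omits this routine verification entirely, so your write-up simply supplies the details it leaves implicit.
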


\subsection{Planar graphs} \label{section:planar}

In this section we study the $\overrightarrow{\beta}$ parameter for planar graphs. First of all, it is well known that planar graphs have arboricity at most~$3$ (due to Schnyder), so, by \cref{corollary:f-is-delta-1,corollary:arboricity}, for every such graph $G$:

\begin{itemize}
	\item $\overrightarrow{\beta}(G, f \geq 3) = 1$, and
	\item $\overrightarrow{\beta}(G, 2) \leq \frac{|V(G)|+2}{3}$.
\end{itemize}

\noindent For this reason, we focus, in this section, on the problem of firefighting with only one firefighter in a planar graph.

\medskip

As in previous sections, our first question of interest is whether or not, for 
this family,
the $\overrightarrow{\beta}$ parameter is bounded above by an absolute constant. 

\begin{question} \label{question:planar-constant}
Is there a constant $c \geq 1$ such that $\overrightarrow{\beta}(G, 1) \leq c$ for every planar graph $G$?
\end{question}

Answering \cref{question:planar-constant} does not seem straightforward. Experiments on families of planar graphs suggest that such a constant $c$ could exist, though we are not aware of an orientation scheme and strategy that work for any planar graph.
In particular, `denser' planar graphs are problematic.  Consider the following example.
Fix a large value of $\Delta$ and let $G$ be the planar graph obtained as follows. Starting from a single vertex $v$, add a first \textit{layer} of $\Delta$ new vertices around $v$, \textit{i.e.} join $v$ to all these vertices, and add edges between the vertices of the first layer so that they induce a cycle. Now add a second layer of vertices around the first layer, and add edges between the first and second layers so that all vertices of the first layer have degree~$\Delta$. 
Repeat this procedure 
until a large number of vertices with degree $\Delta$ are obtained. Assuming $\Delta$ and the number of layers are sufficiently large, 
there is no obvious way to orient
the edges of the resulting graph 
to prevent fire propagation.

However, a simple counting argument shows that if the constant $c$ mentioned in Question~\ref{question:planar-constant} does exist, then $c \geq 3$.
A planar graph is \emph{maximal} if any graph obtained by adding an edge on the same vertex set results in a non-planar graph.
It is well known that any maximal planar graph $G$ with more than two vertices has $3|V(G)|-6$ edges.
Consequently, by applying \cref{lowerbound2}, we observe the following:

\begin{observation} \label{observation:icosahedron}
For any maximal planar graph $G$ on at least $7$ vertices, $\overrightarrow{\beta}(G, 1) \geq 3$.
\end{observation}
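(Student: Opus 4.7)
The plan is direct, using Corollary~\ref{lowerbound2} as the excerpt explicitly suggests. First I would invoke the standard consequence of Euler's formula that any maximal planar graph $G$ on $n \geq 3$ vertices satisfies $|E(G)| = 3n - 6$; this is quoted in the sentences preceding the observation, so no independent argument is needed.

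Next I would plug this edge count into Corollary~\ref{lowerbound2}, obtaining
\[
\overrightarrow{\beta}(G,1) \;\geq\; \frac{|E(G)|}{|V(G)|} \;=\; \frac{3n-6}{n} \;=\; 3 - \frac{6}{n}.
\]
Then I would observe that the quantity $\overrightarrow{\beta}(G,1)$ is by definition a positive integer (the number of vertices that burn in an optimal play), so the inequality above can be strengthened to $\overrightarrow{\beta}(G,1) \geq \lceil 3 - 6/n \rceil$.

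Finally, I would check the threshold: for $n \geq 7$ we have $6/n \leq 6/7 < 1$, so $3 - 6/n > 2$, and hence the ceiling is at least $3$. This yields $\overrightarrow{\beta}(G,1) \geq 3$, as claimed. There is essentially no obstacle here; the only subtlety worth mentioning explicitly is the rounding step, since $\frac{3n-6}{n}$ is not itself an integer for $n \geq 7$, and without the integrality observation one would only conclude $\overrightarrow{\beta}(G,1) \geq 15/7$ rather than $\overrightarrow{\beta}(G,1) \geq 3$. It is also worth noting, for comparison, that the bound $n \geq 7$ is tight for this argument: when $n = 6$ the same computation gives only $\overrightarrow{\beta}(G,1) \geq 2$, so Corollary~\ref{lowerbound2} alone cannot push the bound to $3$ on $K_4$-like small maximal planar graphs.
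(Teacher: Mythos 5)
Your proposal is correct and is exactly the argument the paper intends: the edge count $3n-6$ for a maximal planar graph, combined with \cref{lowerbound2} and the integrality of $\overrightarrow{\beta}(G,1)$, gives $\overrightarrow{\beta}(G,1) \geq \lceil 3 - 6/n\rceil = 3$ for $n \geq 7$. The paper leaves this as an immediate consequence of \cref{lowerbound2}, so your write-up simply makes the same reasoning explicit.
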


\noindent In fact, we will show, in \cref{observation:planar4}, that such a $c$ must be at least $4$.

\medskip


Although we have no concrete evidence that Question~\ref{question:planar-constant} has a negative answer, we suspect the following direction might be more promising.

\begin{conjecture} \label{conjecture:planar}
For every planar graph $G$, $\overrightarrow{\beta}(G, 1)$ is linear in $\Delta$.
\end{conjecture}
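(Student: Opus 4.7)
The conjecture asserts that $\overrightarrow{\beta}(G,1) = O(\Delta)$ for every planar graph $G$. The plan is to exploit two structural properties of planar graphs together: first, that they have arboricity at most $3$ (by Schnyder's theorem), hence admit an orientation with maximum outdegree at most $3$ via \cref{lemma:1orienting-trees} applied to each of the three forests; and second, that planarity constrains the global shape of the reachable set in such an orientation, so that $1$ firefighter can repeatedly bottleneck the spread using local separators in the embedding.

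Concretely, I would first work with a Schnyder wood of a maximal planar supergraph of $G$ (and then restrict, using \cref{observation:subgraph} carefully to preserve upper bounds). A Schnyder wood partitions the interior edges into three rooted trees and gives a canonical orientation $\overrightarrow{G}$ in which each internal vertex has outdegree exactly $3$. For a fire starting at $u$, the initial frontier $N^+(u)$ has size at most $\Delta$, of which $1$ can be protected. The key step I would aim to prove is that the subsequent frontiers lie inside a union of three tree-paths rooted near $u$, whose mutual crossings are controlled by the cyclic embedding. If so, then at each subsequent time step the firefighter can protect a vertex that disconnects a constant fraction of the remaining reachable set, and the total burnt count telescopes to $O(\Delta)$.

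An alternative approach uses $5$-degeneracy: order the vertices $v_1 < \cdots < v_n$ so each has at most $5$ later neighbours, and orient edges toward the earlier index. The fire from $v_i$ is then confined to $\{v_1,\ldots,v_i\}$ and each vertex has outdegree at most $5$. One would attempt to show that, because of planarity, any set of currently-burning vertices shares a common out-neighbour of low index whose protection saves a large "downstream" portion of the graph, so that a greedy strategy of always protecting such shared out-neighbours limits the total burn to $O(\Delta)$. A base case would be the class of planar graphs of maximum degree $5$, where combining this approach with the strategy behind \cref{subcubic2,4reg} might already be tractable.

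The main obstacle, in either approach, is that constant-outdegree orientations of planar graphs can still have directed reachable sets of linear size that do not admit a small separator visible to the firefighter at the right moment: even in a Schnyder-oriented triangulation, a single vertex protection typically removes only a constant-factor slice of the reach. The conjecture therefore seems to require a genuinely embedding-aware orientation paired with a non-greedy, anticipatory strategy, rather than invoking existing orientation theorems as a black box. Given this difficulty, an equally important parallel task is to hunt for counterexamples among planar triangulations of fixed but moderate $\Delta$, since the "stacked layers" construction mentioned before \cref{observation:icosahedron} hints that dense planar regions may defeat any purely local orientation scheme.
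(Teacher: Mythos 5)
The statement you were asked to prove is stated in the paper as a \emph{conjecture}: the authors give no proof of it, only partial evidence (the bound $\overrightarrow{\beta}(G,1)\leq 2(\Delta-1)^3$ via the Four-Colour Theorem and \cref{proposition:bound-colorable2}, the special cases of infinite rectangular and triangular grids, and the lower bound of \cref{observation:planar4}). Your proposal is likewise not a proof, and to your credit you say so; but the two ``key steps'' you defer are precisely where all of the difficulty lives, and both appear to be false as stated. In the Schnyder-wood approach, the claim that the later fire frontiers ``lie inside a union of three tree-paths rooted near $u$'' cannot hold: in a Schnyder orientation every internal vertex has outdegree exactly $3$, so after protecting one out-neighbour of $u$ the fire reaches up to $\Delta-1$ vertices at time $2$, each with up to $3$ out-neighbours of its own, and the out-reachable set at depth $t$ can grow geometrically --- it is a subgraph reachable along three trees, not three paths. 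One firefighter per time step cannot touch more than $t$ vertices by time $t$, and there is no reason a single protected vertex separates ``a constant fraction of the remaining reachable set''; indeed the paper's own ``layered'' construction before \cref{observation:icosahedron} is exactly a family where every local separator of the reach is large. The degeneracy approach has the same flaw: outdegree at most $5$ still allows the frontier to grow faster than the firefighter can respond, and the assertion that the burning set always shares a common low-index out-neighbour is not true in general. A further technical snag: passing to a maximal planar supergraph (to apply \cref{observation:subgraph} in the useful direction) can increase the maximum degree unboundedly, so a bound linear in the supergraph's $\Delta$ does not translate into a bound linear in $\Delta(G)$.

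In short, there is no proof here to compare against the paper's, because neither you nor the paper has one. What you have written is a reasonable research programme, and your closing paragraph correctly diagnoses why the black-box use of bounded-outdegree orientations (arboricity, Schnyder woods, degeneracy) cannot suffice on its own: the best currently known upper bound from such orientations is the cubic-in-$\Delta$ bound of \cref{proposition:bound-colorable2}, and closing the gap to linear would require either an embedding-aware orientation with an anticipatory strategy in the spirit of \cref{proposition:ktree}, or a counterexample among dense planar triangulations.
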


Since every planar graph is $4$-colourable, by the Four-Colour Theorem, it follows from \cref{proposition:bound-colorable2} that $\overrightarrow{\beta}(G, 1) \leq 2(\Delta-1)^3$ for every planar graph $G$. For some subclasses of planar graphs, this can be further improved using the wide range of 
results in the literature 
regarding these graphs.
For example, since every triangle-free planar graph is $3$-colourable by Gr\"{o}tzsch's Theorem, \cref{proposition:bound-colorable} implies that $\overrightarrow{\beta}(G, 1) \leq 2(\Delta-1)^2$ whenever $G$ is planar and triangle-free. 
Moreover, since triangle-free planar graphs have arboricity~$2$, we have, by \cref{corollary:arboricity}, that $\overrightarrow{\beta}(G,1) \leq \frac{|V(G)|+1}{2}$.

Towards Conjecture~\ref{conjecture:planar}, we now consider infinite planar grids, which received some attention for both the directed and undirected versions of the Firefighter Problem~\cite{BHW12,M08}. In particular, the strategies described below could be useful for dealing with the general case. 

We start by confirming Conjecture~\ref{conjecture:planar} for infinite rectangular grids (refer to \cref{figure:rectangular-grid} for an illustration), showing that $\Delta$ is an upper bound for $\overrightarrow{\beta}$ for these grids. 

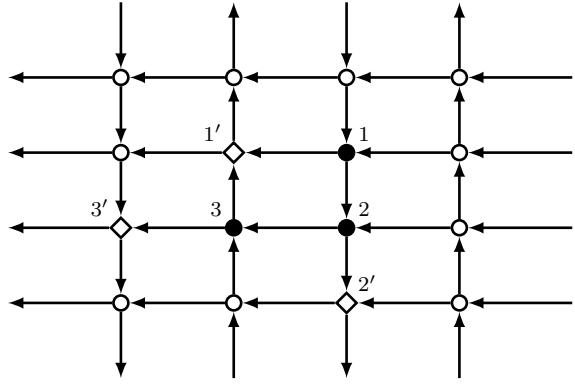
\begin{figure}[!t]
	\centering
	
	\begin{tikzpicture}[inner sep=0.7mm]
		\node[draw, circle, line width=1pt](a1) at (0,0) {};
		\node[draw, circle, line width=1pt](a2) at (1.5,0) {};
		\node[draw, circle, line width=1pt](a3) at (3,0){};
		\node[draw, circle, line width=1pt](a4) at (4.5,0) {};
		
		\node[draw, circle, line width=1pt](b1) at (0,-1){};
		\node[draw, diamond, line width=1pt](b2) at (1.5,-1)[label=above left:{\scriptsize $1'$}]{};
		\node[draw, circle, line width=1pt, fill=black](b3) at (3,-1)[label=above right:{\scriptsize $1$}]{};
		\node[draw, circle, line width=1pt](b4) at (4.5,-1) {};
		
		\node[draw, diamond, line width=1pt](c1) at (0,-2)[label=above left:{\scriptsize $3'$}] {};
		\node[draw, circle, line width=1pt, fill=black](c2) at (1.5,-2)[label=above left:{\scriptsize $3$}]{};
		\node[draw, circle, line width=1pt, fill=black](c3) at (3,-2)[label=above right:{\scriptsize $2$}]{};
		\node[draw, circle, line width=1pt](c4) at (4.5,-2){};
		
		\node[draw, circle, line width=1pt](d1) at (0,-3) {};
		\node[draw, circle, line width=1pt](d2) at (1.5,-3){};
		\node[draw, diamond, line width=1pt](d3) at (3,-3)[label=above right:{\scriptsize $2'$}] {};
		\node[draw, circle, line width=1pt](d4) at (4.5,-3) {};
				
		\draw[latex-,line width=1pt] (-1.5,0) -- (a1);
		\draw[latex-,line width=1pt] (a1) -- (a2); 
		\draw[latex-,line width=1pt] (a2) -- (a3); 
		\draw[latex-,line width=1pt] (a3) -- (a4); 
		\draw[latex-,line width=1pt] (a4) -- (6,0);
				
		\draw[latex-,line width=1pt] (-1.5,-1) -- (b1);
		\draw[latex-,line width=1pt] (b1) -- (b2); 
		\draw[latex-,line width=1pt] (b2) -- (b3); 
		\draw[latex-,line width=1pt] (b3) -- (b4); 
		\draw[latex-,line width=1pt] (b4) -- (6,-1);
				
		\draw[latex-,line width=1pt] (-1.5,-2) -- (c1);
		\draw[latex-,line width=1pt] (c1) -- (c2); 
		\draw[latex-,line width=1pt] (c2) -- (c3); 
		\draw[latex-,line width=1pt] (c3) -- (c4); 
		\draw[latex-,line width=1pt] (c4) -- (6,-2);
				
		\draw[latex-,line width=1pt] (-1.5,-3) -- (d1);
		\draw[latex-,line width=1pt] (d1) -- (d2); 
		\draw[latex-,line width=1pt] (d2) -- (d3); 
		\draw[latex-,line width=1pt] (d3) -- (d4); 
		\draw[latex-,line width=1pt] (d4) -- (6,-3);
		
		\draw[-latex,line width=1pt] (0,1) -- (a1);
		\draw[-latex,line width=1pt] (a1) -- (b1);
		\draw[-latex,line width=1pt] (b1) -- (c1);
		\draw[-latex,line width=1pt] (c1) -- (d1);
		\draw[-latex,line width=1pt] (d1) -- (0,-4);
		
		\draw[latex-,line width=1pt] (1.5,1) -- (a2);
		\draw[latex-,line width=1pt] (a2) -- (b2);
		\draw[latex-,line width=1pt] (b2) -- (c2);
		\draw[latex-,line width=1pt] (c2) -- (d2);
		\draw[latex-,line width=1pt] (d2) -- (1.5,-4);
		
		\draw[-latex,line width=1pt] (3,1) -- (a3);
		\draw[-latex,line width=1pt] (a3) -- (b3);
		\draw[-latex,line width=1pt] (b3) -- (c3);
		\draw[-latex,line width=1pt] (c3) -- (d3);
		\draw[-latex,line width=1pt] (d3) -- (3,-4);
		
		\draw[latex-,line width=1pt] (4.5,1) -- (a4);
		\draw[latex-,line width=1pt] (a4) -- (b4);
		\draw[latex-,line width=1pt] (b4) -- (c4);
		\draw[latex-,line width=1pt] (c4) -- (d4);
		\draw[latex-,line width=1pt] (d4) -- (4.5,-4);
	\end{tikzpicture}
	\caption{Firefighting in an infinite rectangular grid.}
	\label{figure:rectangular-grid}
\end{figure}

\begin{proposition}
For every infinite rectangular grid $G$, we have $\overrightarrow{\beta}(G, 1) = 3 < 4 = \Delta$.
\end{proposition}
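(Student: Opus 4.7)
The plan is to prove the equality $\overrightarrow{\beta}(G,1) = 3$ by establishing matching upper and lower bounds, both exploiting the $4$-regularity and the girth-$4$ structure of the infinite grid.

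For the upper bound $\overrightarrow{\beta}(G,1) \leq 3$, I would adopt the orientation shown in Figure~\ref{figure:rectangular-grid}: every horizontal edge is oriented uniformly (say, right-to-left), while vertical edges alternate direction based on column parity (odd columns top-to-bottom, even columns bottom-to-top). Under this orientation, each vertex has outdegree exactly~$2$: one horizontal outneighbour and one vertical outneighbour. Suppose a fire starts at $v=(i,j)$; by left--right reflection, we may assume column $j$ is oriented top-to-bottom. The strategy proceeds in three steps: at time~$1$ protect the horizontal outneighbour $(i,j-1)$, forcing the fire to spread at time~$2$ to the vertical outneighbour $(i+1,j)$; at time~$2$ protect $(i+2,j)$, forcing the fire to spread at time~$3$ to $(i+1,j-1)$; at time~$3$ protect $(i+1,j-2)$. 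Since column $j-1$ is even, vertical arcs in it go upward, so the only outneighbours of $(i+1,j-1)$ are $(i+1,j-2)$ and $(i,j-1)$ --- both now protected. Thus three vertices burn in total.

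For the lower bound $\overrightarrow{\beta}(G,1) \geq 3$, consider any orientation $\overrightarrow{G}$ of $G$. If some vertex has outdegree at least~$3$, starting the fire there immediately burns at least three vertices, since the firefighter can protect only one outneighbour. Otherwise, since $G$ is $4$-regular and $d^+(v)+d^-(v)=4$, the constraint $d^+(v)\leq 2$ forces every vertex to have outdegree exactly~$2$. Fix any $v$ with outneighbours $\{a,b\}$ and start the fire at $v$. Whichever of $a,b$ is protected at time~$1$, the other burns at time~$2$ and has two outneighbours, both distinct from $v$ (the graph being simple). For the firefighter to then contain the fire using a single protection at time~$2$, one of these outneighbours must coincide with the vertex already protected, i.e., there must be an arc between $a$ and $b$ in $\overrightarrow{G}$. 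But $a$ and $b$ are two distinct neighbours of $v$ in $G$, which has girth~$4$, so no two neighbours of $v$ are adjacent. Hence containment at two vertices is impossible, and three must burn.

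The main obstacle is verifying the endgame of the upper-bound strategy --- that the third burning vertex has its second outneighbour already protected --- which reduces to checking that the vertical outneighbour of $(i+1,j-1)$ points back into the column that was protected at time~$1$. This is immediate from the alternating column orientation, and the symmetric case (starting column oriented bottom-to-top) is handled by the analogous strategy moving upward rather than downward.
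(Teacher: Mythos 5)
Your upper bound is exactly the paper's: the same orientation (uniform rows, vertical edges alternating with column parity) and the same three-step strategy that marshals the fire back onto the vertex protected at time~1; your endgame check that the vertical outneighbour of the third burning vertex lands in the already-protected column is correct and is in fact spelled out more carefully than in the paper itself.

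The lower bound, however, contains one unjustified step. From ``$G$ is $4$-regular, $d^+(v)+d^-(v)=4$ and $d^+(v)\le 2$'' you conclude that every vertex has outdegree exactly~$2$; this is a non sequitur, since those hypotheses only give $d^+(v)\in\{0,1,2\}$. For a \emph{finite} $4$-regular graph one would upgrade ``at most $2$ everywhere'' to ``exactly $2$ everywhere'' via $\sum_v d^+(v)=|E|=2|V|$, but the grid is infinite and that identity is unavailable: one can orient the infinite grid with maximum outdegree~$2$ while a prescribed vertex is a sink. Your argument genuinely needs the starting vertex $v$ \emph{and both of its outneighbours} $a,b$ to have outdegree~$2$ --- if, say, $d^+(b)\le 1$, the firefighter protects $a$ at time~$1$ and the unique outneighbour of $b$ at time~$2$, containing the fire at two vertices. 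The step is salvageable: counting arcs inside an $n\times n$ box shows its vertices' outdegrees sum to at least $2n^2-2n$, so at most $2n$ of the $n^2$ vertices can have outdegree at most~$1$; hence vertices of outdegree less than $2$ have density zero and one can choose $v$ at distance at least~$2$ from all of them. With that repair, your girth-$4$ observation (no two neighbours of $v$ are adjacent, so the two outneighbours of $b$ avoid the protected vertex $a$) correctly forces a third vertex to burn. This gives a complete justification of the optimality claim that the paper dismisses as ``easy to check''.
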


\begin{proof}
Let $\overrightarrow{G}$ be the orientation of $G$ (depicted in \cref{figure:rectangular-grid}) obtained as follows. Orient all `rows' of $G$ from, say, `right to left'. Now, orient all `even columns' of $G$ from, say, bottom to top, and conversely for all `odd columns'. Then $\overrightarrow{G}$ is $2$-outregular, and has the property that, for every vertex $u$, one of its two outneighbours is in the third outneighbourhood of $u$. Then, when the fire starts at $u$, protecting the vertices as in \cref{figure:rectangular-grid} we can marshall the fire towards the first protected vertex, 
hence ensuring that at most $3$ vertices burn.
Since for at most three vertices to burn, we must have a $2$-outregular orientation, it is easy to check that this strategy is optimal.
\end{proof}

We now focus on infinite triangular grids (see \cref{figure:triangular-grid} for an illustration).  This case is of interest since, in order to resolve Conjecture~\ref{conjecture:planar}, 
one can restrict attention to
maximal
planar graphs, by \cref{observation:subgraph}. Here again, we confirm that $\Delta$ is an upper bound for $\overrightarrow{\beta}$.

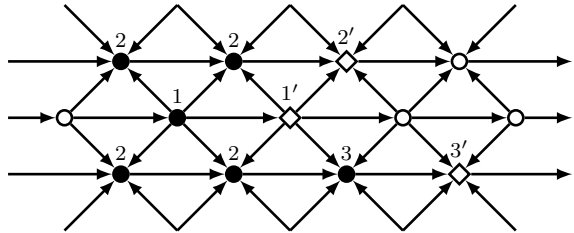
\begin{figure}[!t]
	\centering
	
	\begin{tikzpicture}[inner sep=0.7mm]
		\node[draw, circle, line width=1pt, fill=black](a1) at (0,0)[label=above:{\scriptsize $2$}]{};
		\node[draw, circle, line width=1pt, fill=black](a2) at (1.5,0)[label=above:{\scriptsize $2$}]{};
		\node[draw, diamond, line width=1pt](a3) at (3,0)[label=above:{\scriptsize $2'$}]{};
		\node[draw, circle, line width=1pt](a4) at (4.5,0) {};
		
		\node[draw, circle, line width=1pt](b0) at (-0.75,-0.75) {};
		\node[draw, circle, line width=1pt, fill=black](b1) at (0.75,-0.75)[label=above:{\scriptsize $1$}]{};
		\node[draw, diamond, line width=1pt](b2) at (2.25,-0.75)[label=above:{\scriptsize $1'$}]{};
		\node[draw, circle, line width=1pt](b3) at (3.75,-0.75) {};
		\node[draw, circle, line width=1pt](b4) at (5.25,-0.75) {};
		
		\node[draw, circle, line width=1pt, fill=black](c1) at (0,-1.5)[label=above:{\scriptsize $2$}]{};
		\node[draw, circle, line width=1pt, fill=black](c2) at (1.5,-1.5)[label=above:{\scriptsize $2$}]{};
		\node[draw, circle, line width=1pt, fill=black](c3) at (3,-1.5)[label=above:{\scriptsize $3$}]{};
		\node[draw, diamond, line width=1pt](c4) at (4.5,-1.5)[label=above:{\scriptsize $3'$}]{};
		
		\draw[-latex,line width=1pt] (-0.75,0.75) -- (a1);
		\draw[-latex,line width=1pt] (0.75,0.75) -- (a1);
		\draw[-latex,line width=1pt] (0.75,0.75) -- (a2);
		\draw[-latex,line width=1pt] (2.25,0.75) -- (a2);
		\draw[-latex,line width=1pt] (2.25,0.75) -- (a3);
		\draw[-latex,line width=1pt] (3.75,0.75) -- (a3);
		\draw[-latex,line width=1pt] (3.75,0.75) -- (a4);
		\draw[-latex,line width=1pt] (5.25,0.75) -- (a4);
		
		\draw[-latex,line width=1pt] (-1.5,0) -- (a1);
		\draw[-latex,line width=1pt] (a1) -- (a2);
		\draw[-latex,line width=1pt] (a2) -- (a3);
		\draw[-latex,line width=1pt] (a3) -- (a4);
		\draw[-latex,line width=1pt] (a4) -- (6,0);
		
		\draw[-latex,line width=1pt] (b0) -- (a1);
		\draw[-latex,line width=1pt] (b0) -- (c1);

		\draw[-latex,line width=1pt] (-1.5,-0.75) -- (b0);		
		
		\draw[-latex,line width=1pt] (b0) -- (b1);
		\draw[-latex,line width=1pt] (b1) -- (b2);
		\draw[-latex,line width=1pt] (b2) -- (b3);
		\draw[-latex,line width=1pt] (b3) -- (b4);
		\draw[-latex,line width=1pt] (b4) -- (6,-0.75);
		
		\draw[-latex,line width=1pt] (b4) -- (a4);
		\draw[-latex,line width=1pt] (b4) -- (c4);
		
		\draw[-latex,line width=1pt] (-1.5,-1.5) -- (c1);
		\draw[-latex,line width=1pt] (c1) -- (c2);
		\draw[-latex,line width=1pt] (c2) -- (c3);
		\draw[-latex,line width=1pt] (c3) -- (c4);
		\draw[-latex,line width=1pt] (c4) -- (6,-1.5);
		
		\draw[-latex,line width=1pt] (b1) -- (a1);
		\draw[-latex,line width=1pt] (b1) -- (a2);
		\draw[-latex,line width=1pt] (b2) -- (a2);
		\draw[-latex,line width=1pt] (b2) -- (a3);
		\draw[-latex,line width=1pt] (b3) -- (a3);
		\draw[-latex,line width=1pt] (b3) -- (a4);
		
		\draw[-latex,line width=1pt] (b1) -- (c1);
		\draw[-latex,line width=1pt] (b1) -- (c2);
		\draw[-latex,line width=1pt] (b2) -- (c2);
		\draw[-latex,line width=1pt] (b2) -- (c3);
		\draw[-latex,line width=1pt] (b3) -- (c3);
		\draw[-latex,line width=1pt] (b3) -- (c4);
		
		\draw[-latex,line width=1pt] (-0.75,-2.25) -- (c1);
		\draw[-latex,line width=1pt] (0.75,-2.25) -- (c1);
		\draw[-latex,line width=1pt] (0.75,-2.25) -- (c2);
		\draw[-latex,line width=1pt] (2.25,-2.25) -- (c2);
		\draw[-latex,line width=1pt] (2.25,-2.25) -- (c3);
		\draw[-latex,line width=1pt] (3.75,-2.25) -- (c3);
		\draw[-latex,line width=1pt] (3.75,-2.25) -- (c4);
		\draw[-latex,line width=1pt] (5.25,-2.25) -- (c4);
	\end{tikzpicture}
	\caption{Firefighting in an infinite triangular grid.}
	\label{figure:triangular-grid}
\end{figure}

\begin{proposition} \label{proposition:planar4}
For every infinite triangular grid $G$, we have $\overrightarrow{\beta}(G, 1) \leq 6 = \Delta$.
\end{proposition}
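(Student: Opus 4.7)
The plan is to exhibit an explicit orientation of the infinite triangular grid along with a firefighting strategy that ensures at most $6$ vertices burn from any starting vertex. The orientation I would use is the one depicted in \cref{figure:triangular-grid}: orient every horizontal edge from left to right, colour the rows alternately as \emph{source rows} and \emph{sink rows}, and orient every diagonal edge from its endpoint in a source row towards its endpoint in a sink row. A direct check then shows that each source-row vertex has outdegree~$5$ (one horizontal arc plus four diagonals) and each sink-row vertex has outdegree~$1$ (only its right horizontal neighbour).

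The first thing I would verify is the easy case, where the fire breaks out at a sink-row vertex $u$. Since $u$ has outdegree~$1$, protecting its unique outneighbour at time~$1$ stops the fire immediately, so only $u$ burns. The bulk of the argument is the case where the fire breaks out at a source-row vertex $v$, which I handle next. At time~$1$, the firefighter protects the horizontal outneighbour of $v$ (call it $v_R$). At time~$2$, the fire then spreads to the four diagonal outneighbours of $v$, two in the sink row above (call them $a_L$, $a_R$) and two in the sink row below (call them $c_L$, $c_R$), where $a_L$ lies immediately to the left of $a_R$ in its row, and similarly for $c_L, c_R$.

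Now I would analyse how the fire can spread further. Each of $a_L, a_R, c_L, c_R$ is a sink-row vertex, hence has a single outgoing arc, pointing to the right horizontal neighbour. The right neighbour of $a_L$ is $a_R$ itself, which is already burning; likewise for $c_L$. So only two new vertices are threatened at time~$3$: the right neighbour of $a_R$, call it $a_{RR}$, and the right neighbour of $c_R$, call it $c_{RR}$. The firefighter protects $a_{RR}$ at time~$2$, so at time~$3$ only $c_{RR}$ burns. Since $c_{RR}$ is again a sink-row vertex with outdegree~$1$, protecting its unique outneighbour at time~$3$ contains the fire for good. This yields a total of $1 + 4 + 1 = 6$ burnt vertices, establishing the claimed bound.

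There is essentially no obstacle beyond correctly describing the orientation and carrying out the case analysis; the main thing to be careful about is the parity/offset of the staggered rows, so that the diagonal outneighbours of a source-row vertex really are adjacent pairs in each of the two neighbouring sink rows (ensuring that the ``left'' member of each pair points to the ``right'' member, which is already burning). Once this is set up, the bound $\overrightarrow{\beta}(G,1)\leq 6=\Delta$ follows, as required.
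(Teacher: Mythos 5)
Your proposal is correct and follows essentially the same approach as the paper: the same source/sink row orientation with horizontal arcs all in one direction, and the same strategy of protecting the horizontal outneighbour first, then the escape vertex in the upper sink row, then the one in the lower sink row, giving $1+4+1=6$ burnt vertices. Your write-up is in fact more explicit than the paper's, which largely defers the case analysis to its figure.
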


\begin{proof}
Let $\overrightarrow{G}$ be an orientation of $G$ as depicted in \cref{figure:triangular-grid}. Namely, the vertices of $G$ are decomposed into several layers, \textit{i.e.} `parallel' chains of consecutive adjacent vertices. All these layers are oriented in the same direction. Finally, the edges between two consecutive layers are oriented so that all `even layers' are, say, `sinks' (\textit{i.e.} have all their incident arcs incoming) while all `odd layers' are `sources' (\textit{i.e.} have all their incident arcs outgoing).

Assume the fire starts at some vertex $u$. If $u$ belongs to a sink layer, then $u$ has outdegree~$1$ so the firefighter can just contain the fire by protecting the outneighbour of $u$ at time~$1$. Now, if $u$ belongs to a source layer, then just apply the strategy described in \cref{figure:triangular-grid}, consisting in first protecting the layer of $u$, and then successively protecting the two adjacent sink layers. From this, we deduce that we can ensure that at most $6$ vertices have burnt by the time the fire is contained.
\end{proof}

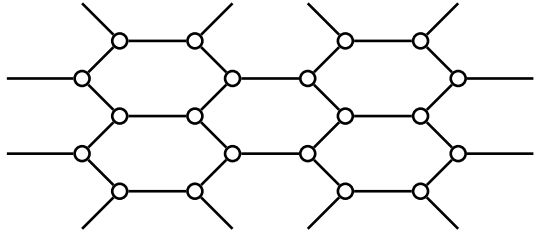
\begin{figure}[!t]
	\centering
	
	\begin{tikzpicture}[inner sep=0.7mm]
		
		\node[draw, circle, line width=1pt](a1) at (0,0) {};
		\node[draw, circle, line width=1pt](a2) at (1,0) {};
		\node[draw, circle, line width=1pt](a3) at (3,0) {};
		\node[draw, circle, line width=1pt](a4) at (4,0) {};
		
		\draw[-,line width=1pt] (a2) -- (1.5,0.5);
		\draw[-,line width=1pt] (a1) -- (-0.5,0.5);
		\draw[-,line width=1pt] (a3) -- (2.5,0.5);
		\draw[-,line width=1pt] (a4) -- (4.5,0.5);
		
		\node[draw, circle, line width=1pt](b1) at (-0.5,-0.5) {};
		\node[draw, circle, line width=1pt](b2) at (1.5,-0.5) {};
		\node[draw, circle, line width=1pt](b3) at (2.5,-0.5) {};
		\node[draw, circle, line width=1pt](b4) at (4.5,-0.5) {};
		
		\draw[-,line width=1pt] (b1) -- (-1.5,-0.5);
		\draw[-,line width=1pt] (b4) -- (5.5,-0.5);
		
		\node[draw, circle, line width=1pt](c1) at (0,-1) {};
		\node[draw, circle, line width=1pt](c2) at (1,-1) {};
		\node[draw, circle, line width=1pt](c3) at (3,-1) {};
		\node[draw, circle, line width=1pt](c4) at (4,-1) {};
		
		\node[draw, circle, line width=1pt](d1) at (-0.5,-1.5) {};
		\node[draw, circle, line width=1pt](d2) at (1.5,-1.5) {};
		\node[draw, circle, line width=1pt](d3) at (2.5,-1.5) {};
		\node[draw, circle, line width=1pt](d4) at (4.5,-1.5) {};
		
		\draw[-,line width=1pt] (d1) -- (-1.5,-1.5);
		\draw[-,line width=1pt] (d4) -- (5.5,-1.5);
		
		\node[draw, circle, line width=1pt](e1) at (0,-2) {};
		\node[draw, circle, line width=1pt](e2) at (1,-2) {};
		\node[draw, circle, line width=1pt](e3) at (3,-2) {};
		\node[draw, circle, line width=1pt](e4) at (4,-2) {};
		
		\draw[-,line width=1pt] (e2) -- (1.5,-2.5);
		\draw[-,line width=1pt] (e1) -- (-0.5,-2.5);
		\draw[-,line width=1pt] (e3) -- (2.5,-2.5);
		\draw[-,line width=1pt] (e4) -- (4.5,-2.5);
		
		\draw[-,line width=1pt] (a1) -- (a2);
		\draw[-,line width=1pt] (a3) -- (a4);
		\draw[-,line width=1pt] (b1) -- (a1);
		\draw[-,line width=1pt] (b1) -- (c1);
		\draw[-,line width=1pt] (b2) -- (a2);
		\draw[-,line width=1pt] (b2) -- (b3);
		\draw[-,line width=1pt] (b3) -- (a3);
		\draw[-,line width=1pt] (a4) -- (b4);
		\draw[-,line width=1pt] (c1) -- (c2);
		\draw[-,line width=1pt] (c3) -- (c4);
		\draw[-,line width=1pt] (c2) -- (b2);
		\draw[-,line width=1pt] (c3) -- (b3);
		\draw[-,line width=1pt] (c4) -- (b4);
		\draw[-,line width=1pt] (d1) -- (c1);
		\draw[-,line width=1pt] (d2) -- (c2);
		\draw[-,line width=1pt] (d2) -- (d3);
		\draw[-,line width=1pt] (d3) -- (c3);
		\draw[-,line width=1pt] (d4) -- (c4);
		\draw[-,line width=1pt] (e1) -- (d1);
		\draw[-,line width=1pt] (e2) -- (d2);
		\draw[-,line width=1pt] (e3) -- (d3);
		\draw[-,line width=1pt] (e4) -- (d4);
		\draw[-,line width=1pt] (e1) -- (e2);
		\draw[-,line width=1pt] (e3) -- (e4);
	\end{tikzpicture}
	\caption{(Part of) an infinite hexagonal grid.}
	\label{figure:hexagonal-grid}
\end{figure}

\begin{observation}\label{observation:planar4}
  There exists a planar graph $G$ with $$\overrightarrow{\beta}(G, 1) \geq 4.$$
\end{observation}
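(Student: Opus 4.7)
The plan is to take $G$ to be the icosahedron and prove $\overrightarrow{\beta}(G,1) \geq 4$ by showing that every orientation of $G$ admits a starting vertex from which at least four vertices must burn. Since $G$ has $12$ vertices and $30$ edges, \cref{lowerbound1} gives that every orientation $\overrightarrow{G}$ of $G$ has maximum outdegree at least $\lceil 30/12 \rceil = 3$. If $\overrightarrow{G}$ has some vertex of outdegree at least $4$, then starting the fire there immediately forces four burnt vertices, so we may assume the maximum outdegree is exactly $3$. The handshaking identity then forces at least six vertices to have outdegree exactly $3$.

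Next I would fix such a vertex $v$ with $N^+(v) = \{u_1, u_2, u_3\}$. The five neighbours of $v$ in the icosahedron induce a pentagonal cycle, so $\{u_1, u_2, u_3\}$ is a three-element subset of this pentagon. If the firefighter at time~$1$ protects a vertex outside $N^+(v)$, then $u_1, u_2, u_3$ all burn at time~$2$ and we already have four burnt vertices. Otherwise, the firefighter protects some $u_i$, say $u_1$, and then $u_2$ and $u_3$ both burn at time~$2$. For the fire to be contained at three burnt vertices in total, we would need
\[
  \bigl|(N^+(u_2) \cup N^+(u_3)) \setminus \{v, u_1, u_2, u_3\}\bigr| \leq 1,
\]
so that the single firefighter available at time~$2$ can block the entire remaining front.

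The remaining step --- the main obstacle --- is to exhibit some vertex $v$ with $d^+(v) = 3$ for which the above inequality fails for \emph{every} choice of protected outneighbour $u_1$. Here I would exploit the icosahedron's structure: each of $u_2, u_3$ has four neighbours besides $v$, namely two other pentagon-neighbours of $v$ together with two vertices of the equatorial $5$-cycle, and the antipode of $v$ is non-adjacent to both $u_2$ and $u_3$. The global constraint $d^+ \leq 3$ bounds how many of these adjacencies can be in-arcs to $u_2$ or $u_3$; summing this outgoing-arc demand over the (at least six) vertices of outdegree $3$, an averaging argument shows that some $v$ must have at least two arcs escaping from $\{u_2, u_3\}$ into the equatorial ring regardless of which $u_i$ is protected. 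The cleanest way to close this step uses the vertex- and edge-transitivity of the icosahedron; a less elegant fallback is to enumerate the two symmetry classes of three-element subsets of the pentagon under the automorphism group and verify the conclusion in each case.
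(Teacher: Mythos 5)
Your setup is sound and follows the same skeleton as the paper's argument: use \cref{lowerbound1} to force maximum outdegree at least $3$, dispose of the case of an outdegree-$4$ vertex immediately, and then analyse the first two time steps around an outdegree-$3$ vertex $v$, reducing containment to the condition $\bigl|(N^+(u_2)\cup N^+(u_3))\setminus\{v,u_1,u_2,u_3\}\bigr|\leq 1$. (The paper runs this on a sufficiently large triangulated grid rather than the icosahedron, where near-$3$-outregularity of \emph{every} vertex comes for free from the edge density; your choice of the icosahedron gives only six guaranteed outdegree-$3$ vertices, which is what forces you into the global argument you allude to.)

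The gap is precisely the step you flag as ``the main obstacle'': it is not carried out, and neither of your two proposed ways of closing it works as stated. In particular, the fallback of enumerating the two symmetry classes of $3$-subsets of the pentagon and ``verifying the conclusion in each case'' fails, because the conclusion is \emph{false} as a purely local statement. Label the pentagon around $v$ as $p_1p_2p_3p_4p_5$ and take $N^+(v)=\{p_1,p_2,p_3\}$ (the consecutive class). Protecting $p_1$, one checks that the containment condition is satisfied by the local orientation $p_3\to p_2\to p_1$, $N^+(p_2)=\{p_1,q\}$, $N^+(p_3)=\{p_2,q\}$, where $q$ is the common neighbour of $p_2$ and $p_3$ other than $v$: the fire reaching $p_2,p_3$ at time~$2$ then threatens only $q$, which the firefighter protects. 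So there is a legal partial orientation in which the fire starting at this particular $v$ burns only three vertices, and no analysis confined to a single outdegree-$3$ vertex and its second neighbourhood can yield the contradiction. (The non-consecutive class $\{p_1,p_2,p_4\}$ does fail locally for every choice of protected vertex, but that only forces every outdegree-$3$ vertex to be of the consecutive type with the rigid structure above.) What remains is a genuinely global consistency argument — propagating the forced outdegrees $d^+(p_2)=d^+(p_3)=2$ across all six outdegree-$3$ vertices against the budget of six outdegree-$2$ vertices and deriving a contradiction — and the one sentence you devote to this (``an averaging argument shows\ldots'') does not specify what quantity is being averaged or why the count fails. Until that step is made precise, the proof is incomplete; alternatively, you could switch to the paper's graph, a large triangulated grid, where almost every vertex is forced to have outdegree exactly $3$ and the corresponding contradiction is easier to localise.
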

\begin{proof}
  We show that at least four vertices burn no matter how we orient a sufficiently large triangulated grid. For such a graph, assume, towards a contradiction, there is an orientation $\overrightarrow{G}$ by which at most three vertices can burn when firefighting with only one firefighter. Then the maximum outdegree of $\overrightarrow{G}$ is at most $3$ and $\overrightarrow{G}$ is locally $3$-outregular. Assuming the fire starts at some vertex $u$, we can protect one of the three outneighbours of $u$ at time~$1$ before the fire propagates to two new vertices $v_1$ and $v_2$. Now the outneighbourhood of $v_1$ and $v_2$ must be of size at most~$1$, which is impossible due to the structure of $\overrightarrow{G}$ and the fact that $v_1$ and $v_2$ have outdegree~$3$; a contradiction. So a fourth vertex must burn.
\end{proof}


We finish this section by remarking that for infinite hexagonal grids (as depicted in \cref{figure:hexagonal-grid}), even more vertices can be saved: namely all but at most~$2$. This follows from \cref{subcubic2}, since hexagonal grids are subcubic graphs.

\section{Characterising graphs by the number of vertices that burn} \label{section:beta1}

In this section we consider the problem of characterising the class of graphs for which at most $k$ vertices burn using an optimal firefighting strategy.  That is, we wish to determine the class of graphs
%
$$\bk{k} = \{G : \overrightarrow{\beta}(G, 1) \leq k\},$$
for a positive integer $k$.
Note that, by definition, we have $\bk{1} \subseteq \bk{2} \subseteq \dotsm \subseteq \bk{\infty}$, where $\bk{\infty}$ is the class of all graphs.
Although such a characterisation may be difficult in general, 
we give
an explicit characterisation of $\bk{1}$,
discuss what we know about $\bk{2}$, 
and give some necessary conditions for membership in $\bk{k}$.

\begin{theorem} \label{theorem:charac-b1}
Let $G$ be a connected graph. Then $G \in \bk{1}$ if and only if $G$ contains at most one cycle.
\end{theorem}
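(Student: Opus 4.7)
The plan is to handle the two directions separately; both are short, and the main work is in constructing an appropriate orientation for the sufficiency direction.

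For sufficiency, suppose $G$ is connected and contains at most one cycle. I would exhibit an orientation $\overrightarrow{G}$ of $G$ with maximum outdegree at most~$1$; once this is done, from any starting vertex $u$ the fire has at most one outneighbour, which the single firefighter protects at time~$1$, so only $u$ burns and hence $\overrightarrow{\beta}(G,1)=1$. If $G$ is a tree, the required orientation is given by \cref{lemma:1orienting-trees}. If $G$ is unicyclic, let $C$ be its unique cycle; orient the edges of $C$ so that $C$ becomes a directed cycle, and note that $G - E(C)$ is a disjoint union of trees, each containing exactly one vertex of $C$. For each such tree $T$ rooted at its vertex of $C$, orient the edges of $T$ toward the root, as in the proof of \cref{lemma:1orienting-trees}. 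Vertices off $C$ then have outdegree~$1$ (the edge toward their parent), while each vertex on $C$ has outdegree~$1$ (its unique outgoing cycle arc, since all tree edges incident to it are incoming). This gives the desired orientation.

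For necessity, suppose $G$ is connected and contains at least two (distinct) cycles. The key observation is that two distinct simple cycles are linearly independent in the cycle space over $\mathrm{GF}(2)$, so the cyclomatic number $|E(G)|-|V(G)|+1$ is at least~$2$; equivalently, $|E(G)| \geq |V(G)|+1$. Applying \cref{lowerbound2} yields
\[
\overrightarrow{\beta}(G,1) \;\geq\; \frac{|E(G)|}{|V(G)|} \;>\; 1,
\]
and since $\overrightarrow{\beta}(G,1)$ is an integer, $\overrightarrow{\beta}(G,1)\geq 2$, so $G \notin \bk{1}$.

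I do not expect any serious obstacle here: the forward direction is a small variation on the tree construction of \cref{lemma:1orienting-trees}, and the reverse direction is essentially a corollary of \cref{lowerbound2} once the edge count is pinned down. The only point worth stating carefully is the cycle-count/edge-count equivalence ``connected and containing at most one cycle iff $|E(G)| \leq |V(G)|$'', which follows from the cyclomatic-number argument above.
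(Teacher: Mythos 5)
Your proposal is correct and follows essentially the same route as the paper: for sufficiency, both orient the unique cycle as a directed cycle and orient the pendant trees towards it to get a $1$-outregular orientation, and for necessity both reduce to showing $|E(G)|>|V(G)|$ and invoking \cref{lowerbound2}. The only (immaterial) difference is that you justify the edge count via the dimension of the cycle space, whereas the paper deletes an edge of $C_1\setminus C_2$ and observes the remaining connected graph still contains a cycle.
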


\begin{proof}
  ($\Leftarrow$)
If $G$ has no cycles, it is a tree, so $\overrightarrow{\beta}(G, 1) = 1$ by \cref{proposition:tree}. Now, if $G$ is unicyclic, we can start by orienting its unique cycle $C$ 
such
that each of its vertices has outdegree~$1$ 
in $C$.
Then, for every 
component $T$ of $G \backslash E(C)$,
orient its edges from the leaves towards $C$. Then the outdegree of every vertex in 
$V(T) \setminus V(C)$
is exactly~$1$, while the outdegrees of the vertices in $C$ have not changed. The resulting orientation is therefore $1$-outregular, so the fire can be immediately blocked at time~$1$.

  ($\Rightarrow$)
  Suppose $G$ is has distinct cycles $C_1$ and $C_2$.  After removing an edge in $E(C_1) \setminus E(C_2)$ from $G$, the resulting graph contains the cycle $C_2$, so is not a tree.  Hence $G$ has more than $|V(G)|$ edges.
  By \cref{lowerbound2}, $\overrightarrow{\beta}(G, 1) > 1$, so $G$ is not in $\bk{1}$.
\end{proof}

%
Now we consider $\bk{2}$.
By earlier results, this class contains all cubic graphs, $K_5$, complete bipartite graphs of the form $K_{2,n}$, all partial $2$-trees (thus, series-parallel graphs and outerplanar graphs), and all subgraphs thereof.
On the other hand, it does not contain the entire class of planar graphs; in particular, it does not contain any maximal planar graph with at least seven vertices.
%
It also does not contain all graphs with maximum degree~$4$ (for example $K_{4,4}$).  However, it does contain arbitrarily large $4$-regular graphs; one example of such a graph is given in \cref{figure:diamondcycle}.  However, it can be shown that every $4$-regular graph in $\bk{2}$ has particular structure; namely, every vertex is in a \emph{diamond} (a graph that can be obtained by removing an edge from $K_4$).

\begin{figure}
  \centering
  \begin{tikzpicture}[inner sep=0.7mm]
    \node[draw, circle, line width=1pt](a1) at (1,0) {};
    \node[draw, diamond, line width=1pt](a2) at (0,1)[label=above:{\scriptsize $2'$}] {};
    \node[draw, circle, line width=1pt, fill=black](a3) at (-1,0)[label=left:{\scriptsize $2$}] {};
    \node[draw, circle, line width=1pt](a4) at (0,-1) {};
    
    \node[draw, circle, line width=1pt](b1) at (2,2) {};
    \node[draw, diamond, line width=1pt](b2) at (-2,2)[label=left:{\scriptsize $1'$}] {};
    \node[draw, circle, line width=1pt, fill=black](b3) at (-2,-2)[label=left:{\scriptsize $1$}] {};
    \node[draw, circle, line width=1pt](b4) at (2,-2) {};
    
    \draw[-latex,line width=1pt] (a1) -- (a4);
    \draw[-latex,line width=1pt] (a2) -- (a1);%
    \draw[-latex,line width=1pt] (a3) -- (a2);
    \draw[-latex,line width=1pt] (a4) -- (a3);

    \draw[-latex,line width=1pt] (b1) -- (b4);%
    \draw[-latex,line width=1pt] (b2) -- (b1);
    \draw[-latex,line width=1pt] (b3) -- (b2);
    \draw[-latex,line width=1pt] (b4) -- (b3);

    \draw[-latex,line width=1pt] (b1) -- (a1);%
    \draw[-latex,line width=1pt] (b2) -- (a2);
    \draw[-latex,line width=1pt] (b3) -- (a3);
    \draw[-latex,line width=1pt] (b4) -- (a4);

    \draw[-latex,line width=1pt] (a2) -- (b1);%
    \draw[-latex,line width=1pt] (a3) -- (b2);
    \draw[-latex,line width=1pt] (a4) -- (b3);
    \draw[-latex,line width=1pt] (a1) -- (b4);
  \end{tikzpicture}
  \caption{A $4$-regular graph contained in $\protect\bk{2}$, and an optimal orientation.}
  \label{figure:diamondcycle}
\end{figure}
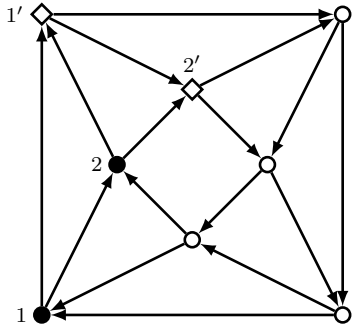

For a graph $G$ to be a member of the class, it is necessary
that, for every subgraph $G'$ of $G$, we have $|E(G')| \leq 2|V(G')|$, by \cref{lowerbound2,observation:subgraph}.
This is not a sufficient condition, however; for example, $K_{4,4}$ satisfies the condition but $\overrightarrow{\beta}(K_{4,4},1) \geq 3$ by \cref{proposition:lower-bipartite}.

We can also deduce a necessary condition in terms of the minimum degree of $G$.
If $G$ has minimum degree $\delta$, then, by the handshaking lemma, $|E(G)| \geq \frac{\delta}{2}|V(G)|$.  Hence, by \cref{lowerbound2}, we have $\overrightarrow{\beta}(G,1) \geq \frac{\delta}{2}$.
Thus, another necessary condition for membership in $\bk{2}$ is that the graph has minimum degree at most $4$.  By \cref{observation:subgraph}, any subgraph must also have this property, which implies that, moreover, it is necessary that the graph is $4$-degenerate.
Again, $K_{4,4}$ is an example that demonstrates these conditions are not sufficient.

\begin{problem}
Characterise $\bk{2}$.
\end{problem}

More generally, we can deduce necessary conditions for a graph $G$ to be in $\bk{k}$.
Namely, if $G$ is a member of $\bk{k}$, then
every subgraph $G'$ of $G$ satisfies $|E(G')| \leq k |V(G')|$. 
%
Moreover,
it is necessary that a graph $G$ in $\bk{k}$ is $2k$-degenerate and, in particular, has minimum degree at most $2k$.  

\bigskip

\noindent \textbf{Acknowledgements:} The authors would like to thank Prof.\ Gary MacGillivray for his talk on the Firefighter Problem at BGW14, which inspired this work.

\section*{References}

\bibliographystyle{plain}
\bibliography{ref}

\end{document}